\newcommand{\cA}{\mathcal A}
\newcommand{\cAloc}{\mathcal A^\mathrm{loc}}
\newcommand{\cAal}{\mathcal A^\mathrm{al}}
\newcommand{\cF}{\mathcal F}
\newcommand{\cS}{\mathcal S}
\newcommand{\cSs}{\mathcal S_\mathrm{sym}}
\newcommand{\bbZZ}{\mathbb{Z}}
\newcommand{\calL}{\mathcal{L}}
\newcommand{\up}{\uparrow}
\newcommand{\caA}{{\mathcal A}}
\newcommand{\caD}{{\mathcal D}}
\newcommand{\caF}{{\mathcal F}}
\newcommand{\caH}{{\mathcal H}}
\newcommand{\caI}{{\mathcal I}}
\newcommand{\caJ}{{\mathcal J}}
\newcommand{\caL}{{\mathcal L}}
\newcommand{\caN}{{\mathcal N}}
\newcommand{\caP}{{\mathcal P}}
\newcommand{\caU}{{\mathcal U}}
\newcommand{\bbC}{{\mathbb C}}
\newcommand{\bbN}{{\mathbb N}}
\newcommand{\bbR}{{\mathbb R}}
\newcommand{\bbS}{{\mathbb S}}
\newcommand{\bbZ}{{\mathbb Z}}
\newcommand{\iu}{\mathrm{i}}
\newcommand{\str}{^{*}}
\newcommand{\ep}[1]{\mathrm{e}^{#1}}
\newcommand{\dd}{\,\mathrm{d}}
\newcommand{\tr}{\mathrm{tr}}
\newcommand{\Tr}{\mathrm{Tr}}
\newcommand{\norm}[1]{\left\Vert #1 \right\Vert}
\newcommand{\Ad}[1]{\operatorname{Ad}_{#1}}
\DeclareMathOperator*{\slim}{s-lim}
\newcommand{\ee}{{\mathrm e}}
\newcommand{\Idi}{{\mathrm i}}
\newcommand{\qa}{{\rm qa}}
\newcommand\myshade{85}
\colorlet{mylinkcolor}{violet}
\colorlet{mycitecolor}{YellowOrange}
\colorlet{myurlcolor}{Aquamarine}
\newcommand{\ip}[2]{\langle #1, #2 \rangle}
\newcommand{\eq}[1]{\begin{align*}#1\end{align*}}
\newcommand{\eql}[1]{\begin{align}#1\end{align}}
\newcommand{\Id}{\mathds{1}}
\newcommand{\szpan}{\operatorname{span}}
\newcommand{\NN}{\mathbb{N}}
\newcommand{\RR}{\mathbb{R}}
\newcommand{\CC}{\mathbb{C}}
\newcommand{\ZZ}{\mathbb{Z}}
\newcommand{\ii}{\operatorname{i}}
\newcommand{\calA}{\mathcal{A}}
\newcommand{\calB}{\mathcal{B}}
\newcommand{\calN}{\mathcal{N}}
\newcommand{\calD}{\mathcal{D}}
\newcommand{\calU}{\mathcal{U}}
\newcommand{\calH}{\mathcal{H}}
\newcommand{\calK}{\mathcal{K}}
\newcommand{\calJ}{\mathcal{J}}
\newcommand{\calP}{\mathcal{P}}
\newcommand{\br}[1]{\left(#1\right)}
\definecolor{ct_black}{HTML}{000000}
\definecolor{ct_orange}{HTML}{ED872D}
\definecolor{ct_purple}{HTML}{7A68A6}
\definecolor{ct_blue}{HTML}{348ABD}
\definecolor{ct_turquoise}{HTML}{188487}
\definecolor{ct_red}{HTML}{E32636}
\definecolor{ct_pink}{HTML}{CF4457}
\definecolor{ct_green}{HTML}{467821}
\definecolor{ct2_green}{HTML}{9FF781}
\definecolor{ct2_green_dark}{HTML}{088A08}
\theoremstyle{plain}
\newtheorem{thm}{\protect\theoremname}[section]
\theoremstyle{plain}
\newtheorem{lem}[thm]{\protect\lemmaname}
\theoremstyle{plain}
\theoremstyle{plain}
\newtheorem{prop}[thm]{\protect\propositionname}
\theoremstyle{plain}
\newtheorem{conj}[thm]{\protect\conjname}
\DeclarePairedDelimiter\abs{\lvert}{\rvert}
\theoremstyle{remark}
\newtheorem{rem}[thm]{\protect\remarkname}
\theoremstyle{definition}
\newtheorem{defn}[thm]{\protect\definitionname}
\theoremstyle{plain}
\newtheorem{example}[thm]{\protect\examplename}
\providecommand{\assumptionname}{Assumption}
\providecommand{\claimname}{Claim}
\providecommand{\corollaryname}{Corollary}
\providecommand{\definitionname}{Definition}
\providecommand{\lemmaname}{Lemma}
\providecommand{\propositionname}{Proposition}
\providecommand{\remarkname}{Remark}
\providecommand{\theoremname}{Theorem}
\providecommand{\examplename}{Example}
\providecommand{\conjname}{Conjecture}
\crefname{section}{Section}{Sections}
\crefname{example}{Example}{Examples}
\crefname{appendix}{Appendix}{Appendices}
\crefname{figure}{Figure}{Figures}
\crefname{assumption}{Assumption}{Assumptions}
\crefname{thm}{Theorem}{Theorems}
\crefname{lem}{Lemma}{Lemmas}
\crefname{table}{Table}{Tables}
\newcommand{\plane}{\mathbb{P}}
\newcommand{\D}{\gamma}
\title{The index of a pair of pure states and \\ the interacting integer quantum Hall effect}
\author[1]{Sven Bachmann}
\affil[1]{Department of Mathematics, The University of British Columbia, 1984 Mathematics Road, Vancouver BC, Canada V6T 1Z2}
\author[2]{Jacob Shapiro}
\affil[2]{Department of Mathematics, Princeton University, Fine Hall, Princeton, NJ 08540, USA}
\author[3]{Cl\'ement Tauber}
\affil[3]{CEREMADE, CNRS, Universit\'e Paris-Dauphine, Universit\'e PSL, 75016 Paris, France}
\begin{document}

\maketitle

\abstract{ We introduce the index $\caN(\omega_1,\omega_2)$ of a pair of pure states on a unital C*-algebra, which is a generalization of the notion of the index of a pair of projections on a Hilbert space.
We then show that the Hall conductance associated with an invertible state $\omega$ of a two-dimensional interacting electronic system which is symmetric under $U(1)$ charge transformation may be written as the index $\calN(\omega,\omega^D)$, where $\omega^D$ is obtained from $\omega$ by inserting a unit of magnetic flux. This  exhibits the integrality and continuity properties of the Hall conductance in the context of general topological features of $\calN$.}


\section{Introduction}

The index of a pair of projections introduced in~\cite{Kato1966,BDF1973,AvronSeilerSimon_Charge} and explored further in~\cite{AvronSeilerSimon} is an integer-valued index that can be associated with two projections $P,Q$ on a Hilbert space whenever their difference is compact: \eq{
\operatorname{index}(P,Q) = \dim \br{ \operatorname{im}(P)\cap\ker(Q) } - \dim \br{ \operatorname{im}(Q)\cap\ker(P) } \in \ZZ\,.
} It is `topological' in the sense that it is invariant under compact and norm-continuous deformations. In the case where one of the projections is a unitary conjugate of the other, $Q = U^\ast P U$, the index reduces to a Fredholm index of $PUP+P^\perp$.

Since its introduction, this notion has been intimately connected to the quantum Hall effect. It is indeed one of the possible expressions for the Hall conductance, when the currents are driven by the adiabatic increase of a magnetic flux through the two-dimensional electron gas. The physical meaning of $P$ is that of a Fermi projection whenever the Fermi energy lies in a spectral gap or a mobility gap~\cite{AizenmanGraf}. In particular, this picture is valid in a non-interacting setting where the many-body ground state reduces to a one-body projection, see~\cite{graf2007aspects} for an overview and further references. This concept was eventually generalized to yield the index of all entries of the Kitaev periodic table of topological insulators \cite{KatsuraKoma2016_Z2Index,KatsuraKoma2018_NCIndexThm}.

There have been various attempts at generalizing the index of a pair of projections to an interacting setting where the state is not simply given by a projection in Hilbert space, most notably~\cite{bachmann2020many} in an arbitrarily large finite volume followed by~\cite{kapustin2020hall} in the infinite volume setting, where quantization can be proved under the assumption of invertibility of the initial state. In both cases, the attention is on the Hall effect.

In this work, we introduce a generalization of the above indices which is defined in a completely abstract setting. It is associated with two pure states $(\omega_1,\omega_2)$ of a C*-algebra $\caA$ that are related by an inner automorphism, in the presence of a $U(1)$ symmetry. Specifically, when $\omega_2 = \omega_1\circ\mathrm{Ad}_u$, where $u\in\caA$ is unitary, the index is given by
\begin{equation}
    \calN_\rho(\omega_1,\omega_2) := \ii\omega_1\br{u^\ast \delta^\rho (u)},
\end{equation}
where $\delta^\rho$ is the generator of the $U(1)$-symmetry. We prove, among other things, that the index is integer valued, operator-norm-continuous, and invariant under deformations by symmetric automorphisms. We also show that if the algebra is chosen to be the CAR algebra describing fermions and if the two states are quasi-free, then our index reduces to the index of a pair of projections. Therefore, this new index of a pair of pure states generalizes~\cite{AvronSeilerSimon}, as well as the noncommutative geometric approach of~\cite{BellissardNCG}.

We then turn to the quantum Hall effect. Echoing~\cite{kapustin2020hall}, we prove that the piercing of a magnetic flux from $0$ to $2\pi$ starting from an \emph{invertible} state corresponds to a situation where the many-body index is well-defined. The technical part here uses adiabatic flux insertion \`a la Laughlin~\cite{Laughlin} extended to the many-body setting in~\cite{BachmannBolsRahnama24}. Unlike there, we focus on charge conservation rather than time-reversal symmetry. In this context, the many-body index equals the charge deficiency of the final state with respect to the initial state, and so to the Hall conductance by the Laughlin argument. This places the Hall index into a very general C*-algebraic framework which is valid for both interacting and non-interacting fermionic systems, generalizing previous expressions~\cite{bachmann2020many, kapustin2020hall} and placing them in a functional analytic framework which complements other approaches using algebraic topology~\cite{KSNoether,KapustinArtymowicz}. 

We point out immediately that the invertibility assumption appears in two roles that are very distinct from each other. It first plays a role as a tool that allows us to ensure that the state is the gapped ground state of a so-called parent Hamiltonian. Secondly, it is crucial in concluding that the defect state obtained after flux insertion is locally comparable to the initial state, namely that the two differ only in the vicinity of the puncture and not at infinity. While the first role could be bypassed in a physical setting where the state is given as a gapped ground state, the second one appears fundamental. Indeed, it is that very assumption that ensures that the state has integer quantum Hall conductance as opposed to fractional conductance. In the latter case, the initial state is expected to have non-trivial superselection sectors corresponding to anyonic excitations (see~\cite{ogataCategory} for one possible meaning of these terms and~\cite{Froehlich} for an overview of anyons in the fractional quantum Hall effect) and these cannot be realized upon an invertible state~\cite{KitaevInvertible,Triviality}.

Let us briefly discuss other approaches to (integer) quantization in the interacting quantum Hall effect. In the single particle picture, charge deficiency, charge transport and linear response coincide \cite{Laughlin,AvronSeilerSimon_Charge,graf2007aspects}. This equivalence continues to hold in the interacting picture~\cite{bachmann2020many,kapustin2020hall}, where quantization of the Hall conductance was proved in~\cite{hastings2015quantization, bachmann2018quantization}, see also~\cite{bachmann18,monaco19} for the validity of linear response. While the above assume the presence of a gap, this assumption is proved to hold in a perturbative setting in~\cite{giuliani17}, while it is not needed if one considers non-equilibrium almost steady states~\cite{wesle25,teufel25}.

The paper is organized as follows. In \cref{sec:abstract_index}, we describe the general algebraic setting, introduce the notion of two pure states being $\rho$-locally-comparable (see \cref{def:alpha comp}) and define the index $\caN_\rho(\omega_1,\omega_2)$. If $\rho_t$ is a $U(1)$-symmetry, then
$$\caN_\rho(\omega_1,\omega_2)\in \ZZ.$$
In \cref{subsec:IPP}, we consider the fermionic algebra $\cA=\mathrm{CAR}(\calH)$ on a one-particle Hilbert space $\calH$. For $P,Q$ two orthogonal projections on $\mathcal H$ and $\omega_P, \omega_Q$ their corresponding quasi-free states on $\cA$, we show that if $P-Q$ is trace class then
$$
\caN_\rho(\omega_P,\omega_Q) = \mathrm{index}(P,Q),
$$
the index of a pair of projections. \cref{sec:IQHE} focusses on the quantum Hall effect in an interacting setting where the two states are an initial invertible state and a defect state obtained from the initial one by inserting a unit of flux at the origin. Then the charge deficiency, and therefore the Hall conductance, is given by
$$
\calN_{\rho\otimes\mathrm{id}} (\hat \omega, \hat\omega^D) \in \ZZ
$$
where $\hat\cdot$ corresponds to a stacking operation associated to invertible states. The fact that we pick $\rho\otimes\mathrm{id}$ rather than the stacked $\hat{\rho}$ means that we measure only the charge transported in the original system, rather than in the full, stacked system. We then show in \cref{sec:free} that this index covers gapped ground states of interacting Hamiltonians as well, and that in the quasi-free case it coincides with a single-particle spectral flow.

\section{Abstract index theory \label{sec:abstract_index}}
\subsection{The index of a pair of pure states}
    
Let $\calA$ be a \emph{unital} C*-algebra. In later sections we shall apply the theory to $\calA=\operatorname{CAR}(\calH)$ for some separable Hilbert space $\calH$ but in fact throughout this section no assumptions on $\calA$ will be made. We denote by $\calU(\calA)$ the set of unitary elements of $\calA$. A state $\omega : \cA \to \mathbb C$ is a positive (i.e., $\omega(a\str a)\geq0$ for all $a\in\calA$) linear functional on $\cA$ with $\omega(\Id)=1$; the space of all states shall be denoted $\cS(\cA)$. 
The condition $\omega(\Id)=1$ is equivalent to the normalization $\|\omega \| =1$, where the norm is given by
\eq{
        \norm{\omega} := \sup\br{\Set{\abs{\omega(a)} | a\in\calA : \norm{a}\leq 1}}\,.
    }
In other words, all states are on the unit sphere of $\caA\str$ and so $\cS(\cA)$ is a weakly*-compact and convex subset of $\cA\str$. The extreme points of $\cS(\cA)$ are called pure states and their collection is denoted by $\caP(\caA)$.

    \begin{defn}[locally-comparable pair of pure states]\label{def:comp}
        Let $\omega_1,\omega_2\in\calP(\calA)$ be a pair of pure states. We say that the pair $(\omega_1,\omega_2)$ is \emph{locally-comparable} iff there exists some $u\in\calU(\calA)$ such that \eql{\label{eq:locally comparable pure states}
            \omega_2 = \omega_1 \circ \Ad{u}\,.
        } 
    \end{defn}
Here and in the sequel, $\Ad{u}$ denotes the automorphism $\Ad{u}(a) = u\str a u$. In other words, $\omega_1$ and $\omega_2$ are locally comparable iff they are inner-automorphism equivalent.

    \begin{rem}
        The intuition we have in mind for two pure states to be locally comparable is that their difference is ``compact'' in a vague sense. Since \eq{
         \omega_2 -\omega_1 = \omega_1\circ \br{\Ad{u}-\mathrm{id}}\,,
    } if $\calA=\operatorname{CAR}(\calH)$ for some Hilbert space $\calH$, then $\calA$ may be considered as the operator norm limit of finite-rank (but not necessarily quadratic) observables, and as such, since $u\in\calU(\calA)$, we should consider $u-\Id$ and hence also $\Ad{u}-\mathrm{id}$ to be (the many-body analog of) compact. Then by the ideal property, the whole expression $\omega_1\circ\br{\Ad{u}-\mathrm{id}}$ is.
    \end{rem}

    Back to the case of a general C*-algebra $\calA$, when two pure states $\omega_1,\omega_2$ are locally-comparable, we want to measure how different they are, with the expectation that if they are path-connected (in $\calP(\calA)$) they should not be different at all. This is the case when $u\in\calU_0(\calA)$, i.e., when $u$ may be continuously deformed to $\Id$, whence we get a continuous deformation of $\omega_2$ to $\omega_1$. However, there are C*-algebras where $\pi_0(\calU(\calA))\neq\Set{0}$ (for example $C(\bbS^1)$). It will turn out that we are not quite interested in such obstructions. Rather, in many interesting cases (such as UHF algebras), $\pi_0(\calU(\calA))=\Set{0}$. Then, to gauge a topological obstruction we impose a further symmetry constraint.

    Let $\Set{\rho_t}_{t\in\RR}\subseteq\operatorname{Aut}(\calA)$ be a strongly-continuous one-parameter group of  *-automorphisms. Its generator $\delta^\rho$ given by \eq{
        \delta^\rho := \slim_{t\to0^+}\frac{1}{t}\br{\rho_t-\mathrm{id}}
    } is a *-derivation. In general $\rho_t$ is not expected to be inner and $\delta^\rho$ is not expected to be bounded. As such, we must consider $\delta^\rho$ together with its domain \eq{
    \calD(\delta^\rho) \equiv \Set{a\in\calA | \lim_{t\to0^+}\frac{1}{t}\br{\rho_t(a)-a}\text{ exists in }\calA}\,.
    }

     With this, we refine the notion of locally-comparable pure states as follows.

    \begin{defn}[$\rho$-locally-comparable pair of pure states]\label{def:alpha comp}
        Let $\rho_t$ be as above and $(\omega_1,\omega_2)$ be a pair of pure states. We say that this pair is $\rho$-locally-comparable iff there exists some $u\in \calU(\calA)\cap\calD(\delta^\rho)$ such that \eq{
        \omega_2 = \omega_1 \circ \Ad{u}\,.
        }
    \end{defn}

We now have all the definitions to set up our index. 

    \begin{defn}[The index of a locally-comparable pair of $\rho$-invariant pure states]\label{def:the index of a pair of pure states}

    Let $\rho_t$ be as above and $(\omega_1,\omega_2)$ be a pair of $\rho$-locally-comparable pure states. Assume moreover that both $\omega_1$ and $\omega_2$ are $\rho$-invariant, that is,
    \eq{
        \omega_j \circ \rho_t = \omega_j \qquad (t\in\RR,j=1,2)\,.
    }
    Then the index of the pair $(\omega_1,\omega_2)$ is defined as \eql{\label{eq:definition of general index}
        \calN_\rho(\omega_1,\omega_2) := \ii\omega_1\br{u^\ast \delta^\rho (u)} 
    } where $u\in\calD(\delta^\rho)$ is any unitary which obeys $\omega_2=\omega_1\circ\Ad{u}$.
    \end{defn}

    We note that in general there is no reason for the unitary $u$ above to be $\rho_t$-invariant, see \cref{Rem:SRE U}(iii) for an explicit example. Indeed, if this happens then $\delta^\rho (u) = 0$ and so $\calN_\rho(\omega_1,\omega_2)=0$. 

    The condition that the pair $(\omega_1,\omega_2)$ must be compatible with $\rho$ in the above sense, namely that they are $\rho$-locally-comparable and invariant, means they are generically \emph{not} path-connected, which is what the index measures.
    
    \begin{thm}[Properties of the index]\label{thm:PropertiesGeneralIndex}
        The index defined above has the following properties:
        \begin{enumerate}
            \item $\calN_\rho(\omega_1,\omega_2)$ does not depend on the choice of $u\in\calD(\delta^\rho)\cap\calU(\calA)$ such that $\omega_2=\omega_1\circ\Ad{u}$.
            \item If $\rho_{2\pi}=\mathrm{id}$, then  $\calN_\rho(\omega_1,\omega_2)\in\ZZ$.
            \item $\calN_\rho(\omega_1,\omega_2)$ has the following continuity property: \eql{
            \norm{\omega_1-\omega_2}<2\Longrightarrow \calN_\rho(\omega_1,\omega_2) = 0\,.
             }
           \item For any automorphism $\alpha$ of $\caA$ such that $\alpha\circ\delta^\rho = \delta^\rho\circ\alpha$, 
           \eql{\label{eq:automorpic invariance}
               \calN_\rho(\omega_1\circ\alpha,\omega_2\circ\alpha) = \calN_\rho(\omega_1,\omega_2).
           }
            \item $\calN_\rho\br{\omega,\omega} = 0$ for any $\rho$-invariant pure state $\omega\in\calP(\calA)$.
            \item If $\omega_1,\omega_2,\omega_3$ are three pairwise $\rho$-locally-comparable pure states, all of which are $\rho$-invariant, then \eql{\label{eq:additivity}
            \calN_\rho(\omega_1,\omega_2)+\calN_\rho(\omega_2,\omega_3) = \calN_\rho(\omega_1,\omega_3)\,.
            }
            \item The index is anti-symmetric \eql{
                \calN_\rho(\omega_1,\omega_2) = -\calN_\rho(\omega_2,\omega_1)\,.           
            }
            \item We have the following additivity with respect to tensor products. Let $\widetilde{\calA}$ be another C*-algebra and $\widetilde{\rho}$ as above; Let further $\widetilde{\omega_1},\widetilde{\omega_2}$ be two $\widetilde \rho$-locally-comparable pure states on $\widetilde{\calA}$ which are $\widetilde{\rho}_t$-invariant. Then
            \eql{
                \calN_{\rho\otimes\widetilde{\rho}}\br{\omega_1\otimes\widetilde{\omega_1},\omega_2\otimes\widetilde{\omega_2}} = \calN_\rho(\omega_1,\omega_2) + \calN_{\widetilde{\rho}}(\widetilde{\omega_1},\widetilde{\omega_2})\,.
            }
        \end{enumerate}
    \end{thm}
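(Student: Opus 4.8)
The plan is to route the analytic content of the theorem through a single covariant GNS picture attached to $\omega_1$, and to dispatch the remaining items as derivation identities. Let $(\calH_1,\pi_1,\Omega_1)$ be the GNS triple of $\omega_1$; since $\omega_1$ is pure, $\pi_1$ is irreducible. Because $\omega_1$ is $\rho$-invariant, the prescription $U(t)\pi_1(a)\Omega_1:=\pi_1(\rho_t(a))\Omega_1$ extends to a strongly continuous one-parameter unitary group on $\calH_1$ with $U(t)\Omega_1=\Omega_1$; by Stone's theorem $U(t)=\ee^{\ii tG}$ with $G=G^\ast$ and $G\Omega_1=0$, and differentiating the covariance relation yields $\pi_1(\delta^\rho(a))=\ii[G,\pi_1(a)]$ on $\calD(\delta^\rho)$. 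Writing $\psi:=\pi_1(u)\Omega_1$ and using $G\Omega_1=0$, I would then rewrite the defining expression \eqref{eq:definition of general index} as
\[
\calN_\rho(\omega_1,\omega_2)=\ii\,\omega_1(u^\ast\delta^\rho(u))=\ii\,\langle\psi,\pi_1(\delta^\rho(u))\Omega_1\rangle=\ii\,\langle\psi,\ii G\psi\rangle=-\langle\psi,G\psi\rangle,
\]
so that the index is minus the expectation of the symmetry generator $G$ in the vector $\psi$ representing $\omega_2$.

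The crux is that $\psi$ is an \emph{eigenvector} of $G$. Indeed $\psi$ induces the pure state $\omega_2$, and for every $t$ the vector $U(t)\psi$ induces $\omega_2\circ\rho_{-t}=\omega_2$ (this is exactly where the $\rho$-invariance of $\omega_2$ enters); since $\pi_1$ is irreducible, two unit vectors inducing the same pure state agree up to a phase, so $U(t)\psi=\ee^{\ii\lambda t}\psi$ for a single $\lambda\in\RR$, whence $G\psi=\lambda\psi$ and $\calN_\rho(\omega_1,\omega_2)=-\lambda$. Property (i) is then immediate, because any other implementer $v$ produces a vector $\pi_1(v)\Omega_1$ inducing the same $\omega_2$, hence a phase multiple of $\psi$, leaving $\langle\psi,G\psi\rangle$ invariant. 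For (ii), if $\rho_{2\pi}=\mathrm{id}$ then $U(2\pi)$ commutes with $\pi_1(\calA)$, so irreducibility makes it a scalar, and $U(2\pi)\Omega_1=\Omega_1$ forces $U(2\pi)=\Id$; evaluating on $\psi$ gives $\ee^{2\pi\ii\lambda}=1$, i.e. $\lambda\in\ZZ$. For (iii) I would invoke the standard identity $\norm{\omega_1-\omega_2}=2\sqrt{1-\abs{\langle\Omega_1,\psi\rangle}^2}$ for vector representatives of pure states: $\norm{\omega_1-\omega_2}<2$ forces $\langle\Omega_1,\psi\rangle\neq0$, but $\Omega_1$ and $\psi$ are eigenvectors of the self-adjoint $G$ for eigenvalues $0$ and $\lambda$, so nonzero overlap forces $\lambda=0$ and hence $\calN_\rho=0$.

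The remaining items are algebraic and rely only on $\calD(\delta^\rho)$ being a $\ast$-subalgebra on which $\delta^\rho$ is a $\ast$-derivation, together with $\delta^\rho(\Id)=0$ and the relation $u^\ast\delta^\rho(u)=-\delta^\rho(u^\ast)u$ from differentiating $u^\ast u=\Id$. Property (v) is trivial by taking $u=\Id$. For (vi), if $\omega_2=\omega_1\circ\Ad u$ and $\omega_3=\omega_2\circ\Ad v$ then $\omega_3=\omega_1\circ\Ad{vu}$ with $vu\in\calD(\delta^\rho)$, and expanding $\delta^\rho(vu)$ by Leibniz together with $\omega_1\circ\Ad u=\omega_2$ produces precisely $\calN_\rho(\omega_1,\omega_3)=\calN_\rho(\omega_1,\omega_2)+\calN_\rho(\omega_2,\omega_3)$; (vii) follows by specializing $\omega_3=\omega_1$ and using (v). Property (iv) uses the implementer $\alpha^{-1}(u)$ for the pair $(\omega_1\circ\alpha,\omega_2\circ\alpha)$: since $\alpha$ commutes with the generator $\delta^\rho$ it commutes with the group $\rho_t$, so $\omega_j\circ\alpha$ is again $\rho$-invariant and $\alpha^{-1}(u)\in\calD(\delta^\rho)$, after which the identity $\alpha\big(\alpha^{-1}(u)^\ast\delta^\rho(\alpha^{-1}(u))\big)=u^\ast\delta^\rho(u)$ gives the claim. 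Finally (viii) uses that the generator of $\rho_t\otimes\widetilde\rho_t$ is $\delta^\rho\otimes\mathrm{id}+\mathrm{id}\otimes\delta^{\widetilde\rho}$ and that $u\otimes\widetilde u$ implements the product pair; one Leibniz expansion splits the expectation into the two summands, using $\omega_1(\Id)=\widetilde\omega_1(\Id)=1$.

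The main obstacle is not any single property but the rigorous construction underlying the first two paragraphs: realizing $\rho_t$ by a strongly continuous unitary group fixing $\Omega_1$, identifying its self-adjoint generator $G$ with $\delta^\rho$ through $\pi_1(\delta^\rho(a))=\ii[G,\pi_1(a)]$, and justifying the pointwise identity $\pi_1(\delta^\rho(u))\Omega_1=\ii G\psi$ despite $G$ and $\delta^\rho$ being unbounded (this needs care with $\calD(\delta^\rho)$ being a core and with the strong-continuity bookkeeping). Once that covariant picture and the purity-to-phase argument are in place, the integer quantization, the continuity gap, and all the cocycle and tensor identities follow with little further effort.
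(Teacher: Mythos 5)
Your proposal is correct, and its overall architecture---the GNS representation of $\omega_1$, a strongly continuous unitary group implementing $\rho_t$ whose self-adjoint generator annihilates $\Omega_1$, and Leibniz-rule algebra for items (iv)--(viii)---is the same as the paper's. The genuine differences are in how the two analytic facts are established. For well-definedness (i), the paper stays purely algebraic: writing $w=u^\ast v$ it invokes \cref{lem:invariant pure state lemma} to get $\abs{\omega_1(w)}=1$, shows via Cauchy--Schwarz that $g=w-\omega_1(w)\Id$ lies in the kernel of $\omega_1$ in a two-sided sense, and uses $\omega_1\circ\delta^\rho=0$; you instead obtain (i) as a corollary of the representation picture, since any two implementers yield vectors that are phase multiples of one another. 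For the eigenvector property underlying (ii), the paper argues by contradiction from the reality condition $\ip{\Omega_2}{[Q_1,A]\Omega_2}=0$ tested on rank-one projections, whereas you use the cleaner observation that $U(t)\psi$ induces the pure state $\omega_2$ for every $t$, so irreducibility gives $U(t)\psi=c(t)\psi$ with $c$ a continuous character, i.e.\ $c(t)=\ee^{\ii\lambda t}$ and $G\psi=\lambda\psi$. Your route unifies (i)--(iii) under a single lemma and is arguably more transparent; the paper's algebraic proof of (i) buys independence from representation theory. Two small points: the ``main obstacle'' you flag---the pointwise identity $\pi_1(\delta^\rho(u))\Omega_1=\ii G\psi$---needs no core argument at all, since $U(t)\Omega_1=\Omega_1$ gives $\tfrac{1}{t}\bigl(\pi_1(\rho_t(u))-\pi_1(u)\bigr)\Omega_1=\tfrac{1}{t}\bigl(U(t)-\Id\bigr)\psi$, whose left side converges because $u\in\calD(\delta^\rho)$ and $\pi_1$ is continuous, so $\psi\in\calD(G)$ automatically; and for (iii) you quote $\norm{\omega_1-\omega_2}=2\sqrt{1-\abs{\ip{\Omega_1}{\psi}}^2}$ as standard, while the paper includes its short derivation via trace-norm duality in the irreducible representation, which you should reproduce (or cite) for self-containedness.
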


To clarify further, let us assume temporarily that $\calA$ is a uniformly hyperfinite (UHF) algebra. In this case, it is well known (see e.g. \cite[Prop 3.2.52]{bratteli2012operator}) that one may find a sequence $\Set{q_n=q_n^\ast}_n\subseteq\calA$ such that \eq{
            \delta^\rho = \slim_{n\to\infty}\ii[q_n,\cdot]\,.
        } Then it is clear that 
        \eql{\label{eq:abstract index for UHF algebras}
            \calN_\rho(\omega_1,\omega_2) = \lim_{n}\br{\omega_1(q_n)-\omega_2(q_n)}\,.
        } Since this expression does not involve the unitary $u$, the index is well-defined. The remainder of this subsection does not assume that $\calA$ is UHF.

\begin{rem}
    If $(\omega_1,\omega_2)$ are $\rho$-locally-comparable and if $\tilde\omega_2 = \omega_2\circ\mathrm{Ad}_v$ for a $v\in\caU(\caA)$ with $\delta^\rho(v)=0$, then $(\omega_1,\tilde \omega_2)$ are also $\rho$-locally-comparable and
\eql{\label{eq:unitary invariance on the second state}
    \caN_\rho(\omega_1,\tilde\omega_2 ) = \caN_\rho(\omega_1,\omega_2)
}
by additivity \cref{eq:additivity} and $\caN_\rho(\tilde\omega_2, \omega_2) = 0$.
\end{rem}
    \begin{proof}[Proof of \cref{thm:PropertiesGeneralIndex}]
        (i) If we have $\omega_2 = \omega_1\circ\Ad{u} = \omega_1\circ\Ad{v}$ for two a-priori distinct unitaries $u,v\in\calU(\calA)$ then this readily implies $\omega_1 = \omega_1 \circ \Ad{u^\ast v}$. Let us define $w := u^\ast v$ whence $\omega_1= \omega_1\circ\Ad{w}$. Since $v = uw $, we have \eq{
        v^\ast \delta^\rho(v) = (uw)^\ast \delta^\rho (uw)= w^\ast u^\ast \delta^\rho(u) w + w^\ast \delta^\rho(w)\,.
        } Applying $\omega_1$ to both sides of this and using the $w$-invariance of $\omega_1$, it remains to show that $\omega_1(w^\ast \delta^\rho(w)) =0$ to get the statement. Using \cref{lem:invariant pure state lemma} below we have that \eql{\label{eq:pure state phase}
    \abs{\omega_1(w)}=1\,.} 

    Defining $g:=w-\omega_1(w)\Id$, we calculate \eq{
    \omega_1(|g|^2) &= \omega_1\br{\br{w-\omega_1(w)\Id}^\ast \br{w-\omega_1(w)\Id}} \\
    &= \omega_1\br{|w|^2} - \abs{\omega_1(w)}^2 \\
    &= 0
    } where we have used that $\omega_1(\abs{w}^2)=\omega_1(w^\ast \Id w) = \omega_1(\Id) = 1$. Now by Cauchy-Schwarz,
    \eql{\label{eq:kernel of state is two-sided ideal}
    \abs{\omega_1(g^\ast a)}^2\leq\omega_1(|g|^2)\omega_1(|a|^2) = 0
    } for all $a\in\calA$ which shows that $\ker\omega_1$ is a two-sided *-ideal within $\calA$. Hence \eq{
    \omega_1(w^\ast \delta^\rho(w)) &=
    \omega_1(\br{g+\omega_1(w)\Id}^\ast \delta^\rho\br{g+\omega_1(w)\Id}) \\
    &= \omega_1(g^\ast \delta^\rho(g)) 
    \\&= 0
    } where we used that $\omega_1\circ\delta^\rho = 0$ since $\omega_1$ is invariant under $\rho_t$ in the second equality and \cref{eq:kernel of state is two-sided ideal} in the last equality.
    
        (ii) Let $(\calH_1,\pi_1,\Omega_1)$ be the GNS triplet associated with $\omega_1$. Let $Q_1$ be the self-adjoint operator on~$\calH_1$ such that $\ep{\iu t Q_1}$ implements $\rho_t$; see \cref{lem:generator of automorphism with integer spectrum} right below. By construction, $Q_1\Omega_1 = 0$. 
        
        We claim that $\Omega_2:=\pi_1(u)\Omega_1$ is also an eigenvector of $Q_1$, namely $Q_1\Omega_2=n\Omega_2$ for some $n\in\ZZ$ since $Q_1$ has integer spectrum. To see this we proceed as follows. Since $(\omega_1,\omega_2)$ is a locally-comparable pair, $\omega_2$ is a vector state in $\caH_1$, namely the vector $\Omega_2 \equiv \pi_1(u) \Omega_1$ is such that \eq{
        \ip{\Omega_2}{\pi_1(a)\Omega_2} = \omega_2(a)\qquad(a\in\calA)\,.
        } Moreover, $\omega_2\circ\rho_t=\omega_2$ implies, after going to the GNS representation and taking the derivative with respect to~$t$ at $0$, that \eq{
        \ip{\Omega_2}{[Q_1,A]\Omega_2} = 0\qquad(A\in\calB(\calH_1))\,,
        } since $\Omega_2$ is in the domain of $Q_1$. We may take $A$ to be the orthogonal projection onto $\szpan(\Psi)$ for some $\Psi\in\calH_1$ to conclude that \eql{\label{eq:reality equation}
        \ip{\Omega_2}{Q_1\Psi}\ip{\Psi}{\Omega_
        {2}} \in \RR\qquad(\Psi\in\calH_1)\,.
        } Now if $\Omega_2$ is \emph{not} an eigenvector of $Q_1$, then there is a non-zero $\Phi\in\szpan(\Omega_2)^\perp$ such that \eq{
        Q_1 \Omega_2 = \lambda \Omega_2 + \Phi
        } for some $\lambda\in\CC$. Actually $\lambda\in\RR$ because $Q_1$ is self-adjoint. Now if we invoke \cref{eq:reality equation} with $\Psi
        =\Omega_2 +\ii \Phi$ we obtain \eq{
        \RR \ni &\:\ip{\Omega_2}{Q_1\br{\Omega_2+\ii\Phi}}\ip{\Omega_2+\ii\Phi}{\Omega_2} = \ip{\lambda \Omega_2 + \Phi}{\Omega_2+\ii\Phi} \norm{\Omega_2}^2\\   &=\br{\lambda\norm{\Omega_2}^2+\ii\norm{\Phi}^2} \norm{\Omega_2}^2,
        } which is a contradiction unless $\Phi=0$. Hence $\Omega_2$ is an eigenvector of $Q_1$, which has integer spectrum, so \eql{\label{eq:charge deficiency in GNS}
        \calN_\rho(\omega_1,\omega_2) = \ip{\Omega_1}{Q_1\Omega_1} -\ip{\Omega_2}{Q_1\Omega_2}=-\ip{\Omega_2}{Q_1\Omega_2} \in \ZZ\,.
        }
        
        (iii) Using the same GNS notation as above, if $\calN_\rho(\omega_1,\omega_2)\neq0$ then $\Omega_2$ has a non-zero eigenvalue for $Q_1$. Since $Q_1$ is a self-adjoint operator and $\Omega_1$ is a zero-eigenvalue eigenvector for $Q_1$, this implies that $\ip{\Omega_2}{\Omega_1}=0$.

        On the other hand, since both states $\omega_1,\omega_2$ are represented as vector states on the same Hilbert space, we have $\norm{\omega_1-\omega_2}
        =\sup_{a\in\calA:\norm{a}\leq 1}\abs{\ip{\Omega_1}{\pi_1(a)\Omega_1} - \ip{\Omega_2}{\pi_1(a)\Omega_2}}$. Denoting by $P_j$ the orthogonal projection onto the span of $\Omega_{\omega_j}$ for $j=1,2$, we have
        \eq{
        \norm{\omega_1-\omega_2}
        =\sup_{a\in\calA:\norm{a}\leq 1}\abs{\Tr(\pi_1(a)(P_1-P_2))} = \norm{P_1-P_2}_1\,.
        }
        The second equality is by duality since $\pi_1$ is irreducible, as $\omega_1$ is pure. Since $P_j$ are one-dimensional projections, the trace norm is easily computable and we conclude that
        \begin{equation}
            \norm{\omega_1-\omega_2} = 2\sqrt{1-\abs{\ip{\Omega_2}{\Omega_1}}^2}.
        \end{equation}
        Hence,
        \eq{
        \abs{\calN_\rho(\omega_1,\omega_2)}>0 \Longrightarrow \norm{\omega_1-\omega_2}=2\,.
        } We note in passing that the reader may want to compare this with a related converse statement about unitary equivalence of the states when $\norm{\omega_1-\omega_2}<2$, see~\cite[Corollary 9, Remark 10]{GlimmKadison1959}.

(iv) Since $(\omega_2,\omega_1)$ are $\rho$-locally-comparable, there is $u\in\caU(\caA)\cap\caD(\delta^\rho)$ such that $\omega_2 = \omega_1\circ\Ad{u}$. Therefore,
\begin{equation}
    \omega_2\circ\alpha = \omega_1\circ\Ad{u}\circ\alpha = (\omega_1\circ\alpha)\circ \Ad{\alpha^{-1}(u)}
\end{equation}
so that $(\omega_2\circ\alpha,\omega_1\circ\alpha)$ are locally comparable. They are in fact $\rho$-locally-comparable since $\alpha$ commutes with $\delta^\rho$ and so $\alpha^{-1}(\caD(\delta^\rho))\subset \caD(\delta^\rho)$. The computation
\eq{ -\ii\calN_\rho(\omega_1\circ\alpha,\omega_2\circ\alpha)
    = (\omega_1\circ\alpha)\left(\alpha^{-1}(u)\str\delta^\rho(\alpha^{-1}(u))\right)
    = \omega_1 (u\str\delta^\rho(u)) = -\ii\calN_\rho(\omega_1,\omega_2)
}
yields the invariance.

       (v) By~(i), we can pick $u=\Id$, for which $\delta^\rho(u)=0$ and so the index vanishes.

        (vi) To get the additivity of the index, assume that $\omega_2 = \omega_1 \circ \Ad{u}$ and $\omega_3 = \omega_2 \circ \Ad{v}$. Hence $\omega_3 = \omega_1 \circ \operatorname{Ad}_{v u}$ so that
        \eq{
            \calN_\rho(\omega_1,\omega_3) &= \ii \omega_1\br{\br{vu}^\ast\delta^\rho\br{vu}} \\
            &= \ii \omega_1\br{u^\ast v^\ast\br{\br{\delta^\rho (v)}u+v\delta^\rho (u)}} \\
            &= \ii \omega_2(v^\ast \delta^\rho (v)) +\ii \omega_1(u^\ast \delta^\rho (u))\\
            &= \calN_\rho(\omega_2,\omega_3) + \calN_\rho(\omega_1,\omega_2)\,.
        }
        Finally, (vii) follows from~(vi) with $\omega_3 = \omega_1$ and (v).

        (viii) The result follows immediately since $\delta^{\rho\otimes\widetilde{\rho}} = \delta^\rho\otimes\mathrm{id}+\mathrm{id}\otimes\delta^{\widetilde{\rho}}$.
    \end{proof}

\begin{lem}\label{lem:invariant pure state lemma}
        Let $\omega\in\calP(\calA)$ and $u\in\calU(\calA)$ be such that $\omega=\omega\circ\Ad{u}$. Then $\omega(u)\in U(1)$.
    \end{lem}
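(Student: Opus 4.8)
The plan is to pass to the GNS representation of $\omega$ and thereby reduce the claim to the uniqueness, up to a phase, of the vector that implements a pure state. Let $(\calH_\omega,\pi_\omega,\Omega_\omega)$ be the GNS triple of $\omega$, so that $\omega(a)=\ip{\Omega_\omega}{\pi_\omega(a)\Omega_\omega}$ and, crucially, $\pi_\omega$ is \emph{irreducible} because $\omega$ is pure. Set $U:=\pi_\omega(u)$, a unitary on $\calH_\omega$, and $\Omega':=U\Omega_\omega$, which is again a unit vector since $U$ is unitary. First I would rewrite the invariance hypothesis $\omega=\omega\circ\Ad{u}$ inside the representation: for every $a\in\calA$ one has $\omega(u^\ast a u)=\ip{U\Omega_\omega}{\pi_\omega(a)U\Omega_\omega}$, so $\omega=\omega\circ\Ad{u}$ is exactly the statement that $\Omega_\omega$ and $\Omega'$ induce the \emph{same} vector state on $\pi_\omega(\calA)$.

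The key step is then to conclude that $\Omega'=c\,\Omega_\omega$ for some $c\in U(1)$. Since $\pi_\omega$ is irreducible, every nonzero vector is cyclic, and hence $(\calH_\omega,\pi_\omega,\Omega')$ is also a GNS triple for $\omega$. By the uniqueness of the GNS construction there is a unitary $W$ on $\calH_\omega$ with $W\pi_\omega(a)=\pi_\omega(a)W$ for all $a$ and $W\Omega_\omega=\Omega'$; in other words $W\in\pi_\omega(\calA)'$. Irreducibility gives $\pi_\omega(\calA)'=\CC\Id$, so $W=c\Id$ with $\abs{c}=1$, and therefore $\Omega'=U\Omega_\omega=c\,\Omega_\omega$. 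Evaluating the matrix element, $\omega(u)=\ip{\Omega_\omega}{\pi_\omega(u)\Omega_\omega}=\ip{\Omega_\omega}{\Omega'}=c\in U(1)$, as claimed.

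I expect the only genuine content to lie in this key step, where purity is indispensable. Cauchy--Schwarz by itself only yields $\abs{\omega(u)}\le1$, with equality \emph{iff} $\Omega'$ and $\Omega_\omega$ are proportional, so nothing is gained without the vector-uniqueness input, which in turn rests on the triviality of the commutant of an irreducible representation. That purity cannot be dropped is already visible for $\calA=M_2(\CC)$ with the tracial (hence mixed) state $\omega=\tfrac12\tr$ and $u=\mathrm{diag}(1,-1)$: there $\omega\circ\Ad{u}=\omega$ by the trace property, yet $\omega(u)=0\notin U(1)$. This confirms that any correct argument must use irreducibility of $\pi_\omega$ in an essential way.
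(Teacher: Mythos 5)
Your proposal is correct and takes essentially the same approach as the paper: pass to the GNS representation, observe that $\Omega_\omega$ and $\pi_\omega(u)\Omega_\omega$ are cyclic vectors inducing the same state, and use irreducibility (purity) to conclude they differ by a phase, whence $\omega(u)\in U(1)$. You merely make explicit the GNS-uniqueness and trivial-commutant steps that the paper's proof leaves implicit, and your $M_2(\CC)$ counterexample confirming that purity is essential is a nice (correct) addition.
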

    \begin{proof}
    Let $(\calH,\pi,\Omega)$ be the GNS representation of $\omega$. Then the invariance condition implies \eq{
    \omega(a) = \ip{\Omega}{\pi(a)\Omega} = \omega(u\str a u) = \ip{\Omega}{\pi(u)\str \pi(a) \pi(u)\Omega}\,.}
 We find that both $\Omega$ and $\pi(u)\Omega$ are cyclic vectors that generate the same state, where $\pi$ is irreducible as $\omega$ is pure. This implies that $\pi(u)\Omega = \ee^{\ii\theta}\Omega$ for some $\theta\in\RR$ and so $\omega(u) = \ip{\Omega}{\pi(u)\Omega} = \ee^{\ii\theta}$ as claimed.
    \end{proof}
    
     \begin{lem}\label{lem:generator of automorphism with integer spectrum}
        Let $\Set{\rho_t}_{t\in\RR}\subseteq\operatorname{Aut}(\calA)$ be a one-parameter group of strongly-continuous automorphisms such that $\rho_{2\pi}=\mathrm{id}$ and let $\omega\in\calP(\calA)$ be $\rho$-invariant. Then in the GNS representation $(\calH_{\omega},\pi_{\omega},\Omega_{\omega})$ of $\omega$, there exists a self-adjoint operator $Q_{\omega}:\calH_{\omega}\to\calH_{\omega}$ such that \eq{
        \pi_\omega\br{\rho_t \br{a} } = \ep{-\iu t Q_{\omega}} \pi_\omega\br{a} \ep{\iu t Q_{\omega}} \qquad(t\in\RR\,,\quad a\in\calA)\,.
        } One may choose $Q_{\omega}$ such that $Q_{\omega}\Omega_\omega = 0$, in which case $\sigma(Q_{\omega})\subseteq\ZZ$. 
    \end{lem}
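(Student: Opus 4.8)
The plan is to build the implementing unitaries by hand on the GNS space, verify that they form a strongly continuous one-parameter group fixing the cyclic vector, and then read off $Q_\omega$ from Stone's theorem; the integer spectrum will fall out of $\rho_{2\pi}=\Id$ through the spectral mapping theorem.

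First I would define, for each $t\in\RR$, an operator $U_t$ on the dense subspace $\pi_\omega(\calA)\Omega_\omega\subseteq\calH_\omega$ by $U_t\,\pi_\omega(a)\Omega_\omega := \pi_\omega(\rho_t(a))\Omega_\omega$. The $\rho$-invariance of $\omega$ guarantees that this is well-defined and isometric, since $\norm{\pi_\omega(\rho_t(a))\Omega_\omega}^2 = \omega(\rho_t(a^\ast a)) = \omega(a^\ast a) = \norm{\pi_\omega(a)\Omega_\omega}^2$. Because $\rho_{-t}$ is the inverse automorphism, $U_t$ has dense range, so it extends uniquely to a unitary on $\calH_\omega$. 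A short computation on vectors $\pi_\omega(b)\Omega_\omega$ then shows simultaneously that $U_t\,\pi_\omega(a)\,U_t^\ast = \pi_\omega(\rho_t(a))$, that $U_sU_t = U_{s+t}$ (from the group law of $\rho$), and that $U_t\Omega_\omega = \Omega_\omega$ (from $\rho_t(\Id)=\Id$).

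Next I would establish strong continuity. On the dense subspace one has $\norm{U_t\,\pi_\omega(a)\Omega_\omega - U_s\,\pi_\omega(a)\Omega_\omega} \le \norm{\rho_t(a)-\rho_s(a)} \to 0$ as $t\to s$, by the (norm-continuous on each element) continuity of $\rho$; since the $U_t$ are uniformly bounded, an $\epsilon/3$-argument promotes continuity of $t\mapsto U_t\xi$ to all $\xi\in\calH_\omega$. Stone's theorem then furnishes a self-adjoint $Q_\omega$ with $U_t = \ep{-\iu t Q_\omega}$, which is exactly the asserted implementation $\pi_\omega(\rho_t(a)) = \ep{-\iu t Q_\omega}\,\pi_\omega(a)\,\ep{\iu t Q_\omega}$. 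Differentiating the identity $U_t\Omega_\omega = \Omega_\omega$ at $t=0$ gives $Q_\omega\Omega_\omega = 0$. For the spectrum, the hypothesis $\rho_{2\pi}=\Id$ forces $U_{2\pi}\,\pi_\omega(a)\Omega_\omega = \pi_\omega(a)\Omega_\omega$ on a dense set, hence $U_{2\pi}=\Id$, i.e. $\ep{-2\pi\iu Q_\omega}=\Id$; by spectral mapping $\{\ep{-2\pi\iu\lambda}:\lambda\in\sigma(Q_\omega)\}=\{1\}$, which forces $\sigma(Q_\omega)\subseteq\ZZ$.

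The bulk of this is bookkeeping; the single conceptual point is the phrase \emph{one may choose}. Any two strongly continuous unitary implementations of $\rho_t$ differ by a character $t\mapsto\ep{\iu\alpha t}$ with $\alpha\in\RR$, because $\omega$ is pure so $\pi_\omega$ is irreducible and $\pi_\omega(\calA)'$ is trivial; correspondingly the generator is determined only up to an additive real constant, shifting $\sigma(Q_\omega)$ to $\ZZ+\alpha$. The normalization $Q_\omega\Omega_\omega=0$—equivalently, selecting the canonical implementation that fixes $\Omega_\omega$—is precisely what pins $\alpha$ to an integer and thus delivers genuinely integer, rather than $\ZZ$-translated, spectrum. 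This is the only place where purity of $\omega$ is essential, and it is what I expect to require the most care to state cleanly.
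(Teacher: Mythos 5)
Your proof is correct and takes essentially the same route as the paper's: a GNS-level unitary implementation of $\rho_t$, strong continuity plus Stone's theorem to obtain $Q_\omega$, the normalization $Q_\omega\Omega_\omega=0$, and cyclicity to upgrade $\rho_{2\pi}=\mathrm{id}$ into $\ep{-2\pi\iu Q_\omega}=\Id$ and hence $\sigma(Q_\omega)\subseteq\ZZ$. The only cosmetic difference is that your explicit construction $U_t\pi_\omega(a)\Omega_\omega=\pi_\omega(\rho_t(a))\Omega_\omega$ fixes $\Omega_\omega$ exactly, so the eigenvalue-zero normalization comes for free by differentiation, whereas the paper cites the standard GNS implementer (a priori fixing $\Omega_\omega$ only up to phase) and then adjusts the generator by an additive constant, using purity/irreducibility exactly where you do.
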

    \begin{proof}
        For any $t\in\bbR$, the existence of a unitary implementer $U_{\omega,t}$ of $\rho_t$ is a standard consequence of the uniqueness of the GNS representation. Strong continuity of $t\mapsto\rho_t$ implies strong continuity of $t\mapsto U_{\omega,t}$, and the existence of a self-adjoint generator $Q_\omega$ is ensured by Stone's theorem. Since $\omega$ is pure, $\pi_{\omega}$ is irreducible and hence $Q_\omega$ is defined up to an additive constant. The invariance of $\omega$ yields that $\Omega_{\omega}$ is an eigenvector of $Q_\omega$, which can be fixed by imposing $Q_\omega\Omega_{\omega}=0$. With this, for any $a\in\caA$, $\ep{2\pi\iu Q_\omega}\pi_{\omega}(a)\Omega_{\omega} = \pi_{\omega}(\rho_{2\pi}(a))\Omega_{\omega} = \pi_{\omega}(a)\Omega_{\omega}$, which implies that $\ep{2\pi\iu Q_\omega} = \Id$ by cyclicity of GNS representation, so that $\sigma(Q_\omega)\subseteq\ZZ$.
    \end{proof}

    \begin{rem}
        The assumption of local-comparability and the standard \cref{lem:generator of automorphism with integer spectrum} immediately imply that the index is just the difference of eigenvalues of a charge operator which has integer spectrum, see \cref{eq:charge deficiency in GNS}, and hence the name `charge deficiency'. The interest of phrasing our result in the abstract algebraic setting is two-fold. First of all, as we shall see in \cref{sec:IQHE}, establishing the $\rho$-local-comparability of two states is the essence of the problem in applications. Secondly, it allows us to phrase the automorphic invariance \cref{eq:automorpic invariance} since the states $\omega_1$ and $\omega_1\circ\alpha$ are not in general unitarily equivalent and therefore not representable as vector states in each other's GNS representation. 
    \end{rem}

    Of course, the lemma also shows that the period $2\pi$ of the group $\rho_t$ is, at least mathematically, an arbitrary choice: If instead $\rho_{2\pi/\lambda}=\rm id$ for some $\lambda>0$ (but keeping $Q$ with integer spectrum) then $\widetilde{\calN_\rho}(\omega_1,\omega_2):=\lambda^{-1}\calN_\rho(\omega_1,\omega_2)\in\ZZ$.
    Finally, we emphasize again that the requirement that the intertwining unitary $u$ obey $u\in\calD(\delta^\rho)$ is not vacuous with the following example.
    \begin{example}
        Let $\calA=\operatorname{CAR}(\ell^2(\NN))$ with $\Set{e_j}_{j\geq1}$ the canonical position basis of $\ell^2(\NN)$ so $a_j\equiv a(e_j)$ are the annihilation operators associated with those basis elements. Then \eq{
        q_N := \sum_{j=1}^N a_j\str a_j
        } defines a sequence of bounded self-adjoint elements in $\calA$ which approximates a derivation $\delta$ as \eq{
        \delta(a) \equiv \lim_N \ii [q_N,a]\qquad(a\in\calD(\delta))
        } where $\calD(\delta)$ is the space of elements $a$ for which the limit exists. Define \eq{
        a := \sum_{j=1}^\infty j^{-3/2} a_1^\ast \cdots a_j^\ast
        } which exists since $\norm{a} \leq \sum_j j^{-3/2} < \infty$. We may then calculate \eq{
        [q_N, a_1^\ast \cdots a_j^\ast] = \min\br{\Set{N,j}} a_1^\ast \cdots a_j^\ast
        } and hence \eq{
        [q_N, a] = \sum_{j=1}^N j^{-1/2} a_1^\ast \cdots a_j^\ast + N \sum_{j=N}^\infty j^{-3/2} a_1^\ast \cdots a_j^\ast\,.
        } We see that the limit $N\to\infty$ of the above expression does not exist so that $a\notin\calD(\delta)$.

        Now take $h := a + a^\ast$ and $u := \br{\Id+\ii h}\br{\Id-\ii h}^{-1}$ to get a unitary element which is not in $\calD(\delta)$. As a result, for any $\omega\in\calP(\calA)$ we may define $\widetilde{\omega} := \omega\circ\Ad{u}$ such that $\omega,\widetilde\omega$ are locally comparable but not $\rho$-locally comparable.
    \end{example}

    \subsection{Connection with the index of a pair of projections on a Hilbert space \label{subsec:IPP}}
    Let $\calH$ be a separable Hilbert space and $P_1,P_2$ be two self-adjoint projections such that \eql{P_1-P_2\in\calK(\calH)\,.} 
    In this setting it is well-known that a topological index is associated with the pair $(P_1,P_2)$ (see e.g. \cite{AvronSeilerSimon} and earlier citations within) given by \eql{\label{eq:IPP}
    \operatorname{index}(P_1,P_2) \equiv \operatorname{index}_{\operatorname{im}(P_1)\to \operatorname{im}(P_2)}\br{\left.P_2P_1\right|_{\operatorname{im}(P_1)}} = \operatorname{index}_\calH\br{P_2P_1 + UP_1^\perp}
    } where the index on the right hand sides is the Fredholm index and $U$ is any unitary $U:\operatorname{im}(P_1^\perp)\to\operatorname{im}(P_2^\perp)$. In the special case that $P-Q\in \calJ_p(\caH)$, the ideal of $p$-Schatten class operators, for some $p$, we may also write \eq{
        \operatorname{index}(P_1,P_2) = \tr\br{\br{P_1-P_2}^{2p'+1}}
    } for any $p'\in\NN$ such that $2p'+1\geq p$. Finally, if $P_1-P_2\in \calJ_2(\caH)$ then by \cite[Theorem 3]{arveson2007diagonals}, we may write \eql{\label{eq:Arveson formula}
    \operatorname{index}(P_1,P_2) = \tr\br{P_2\br{P_1-P_2}P_2} + \tr\br{P_2^\perp \br{P_1-P_2}P_2^\perp } \,.
    }
    
    We turn to the relation of $\operatorname{index}(P_1,P_2)$ with the newly introduced many-body index $\caN_\rho(\omega_1,\omega_2)$.
    
    \begin{rem}\label{rem:BDF obstruction}
        Since any two projections $P_1,P_2$ on an infinite dimensional Hilbert space with infinite $\ker P_j,\,\operatorname{im} P_j$ admit a unitary $U$ which conjugates them, i.e., $P_2 = U^\ast P_1 U$, one may be tempted to conclude that if $P_1-P_2$ is ``small'' then so is $U-\Id$ (so as to have any hope to implement it in $\calA)$. However, it is well-known (see e.g. \cite[Theorem 1.2]{LoreauxNg2020} or \cite{BDF1973}) that a unitary $U\in\calU(\calH)$ can be found such that both $U-\Id\in\calK(\calH)$ and $P_2=U^\ast P_1 U$ iff \eq{\operatorname{index}(P_1,P_2)=0\,.} This indicates that to capture a non-trivial index we most likely need to find a unitary in the CAR algebra which is not the second quantization of a Hilbert space one.

        As we shall see below, there will be additional obstructions beyond the purely Hilbert space index obstruction.
    \end{rem}
    
    Let now $\calA=\operatorname{CAR}(\calH)$ (see \cref{sec:algebraic framework} for more details) be the algebra of observables corresponding to systems of many fermions. Any self-adjoint projection $P$ induces a pure state $\omega_{P}\in\calP(\calA)$, the so-called quasi-free state associated to $P$, given by the Gaussian formula \eql{\label{eq:quasifreefromP}
        \omega_{P}(a(f_1)^\ast \cdots a(f_m)^\ast a(g_n)\cdots a(g_1)) :=\delta_{n,m}\det\br{\Set{\ip{g_i}{P f_j}}_{i,j=1,\ldots,n}},
    } where $f_i,g_i\in\calH$, and extended linearly to all polynomials and by continuity to all of $\calA$.

     The canonical `charge' $U(1)$ *-automorphism $\rho_t$ is defined by\eql{\label{eq:U(1) on CAR}
        \rho_t(a(f)) = \ep{-\iu t} a(f),\qquad \rho_t(a\str (f)) = \ep{\iu t} a\str(f),
   }
   for any $f\in\calH$ and $t\in\bbR$. Note that $\rho_{2\pi} = \mathrm{id}$. For any orthogonal projection $P$, the state $\omega_P$ is automatically invariant under $\rho_t$.

   To contextualize the ensuing discussion, let us recall a few basic facts:
   \begin{enumerate}
       \item The Shale-Stinespring condition for quasi-free states on CAR($\calH$) \cite{ShaleStinespring}: $\omega_{P_1},\omega_{P_2}$ have unitarily equivalent GNS representations if and only if $P_1-P_2 \in \caJ_2(\caH)$.
       \item The Kadison transitivity theorem for general C*-algebras \cite[Corollary 8]{GlimmKadison1959}: If two pure states $\omega,\widetilde{\omega}\in\calP(\calA)$ have unitarily equivalent GNS representations then there exists some $u\in\calU(\calA)$ such that $\widetilde{\omega} = \omega \circ \Ad{u}$.
       \item The condition on Bogoliubov automorphisms of CAR$(\calH)$ to be inner \cite[Theorem 5]{araki1971quasifree}: $U-\Id\in \caJ_1(\caH)$ if and only if $\exists{\Gamma}(U)\in\calU(\calA)$ such that $a(Uf) = {\Gamma}(U)\str a(f){\Gamma}(U)$ for all $f\in\calH$.
   \end{enumerate}

    We are now ready to connect the index in \cref{eq:IPP} with the one defined in \cref{eq:definition of general index}. For this, given the basic facts stated above, and in particular~(iii), we need to strengthen the assumption on $P_1-P_2$ from compact to trace-class.
    \begin{thm}\label{thm:correspondence between many-body and single-particle indices}
        Let $P_1,P_2$ be self-adjoint projections such that \eq{P_1-P_2\in \mathcal{J}_1(\caH).} Then $\omega_{P_1}$ and $\omega_{P_2}$ are $\rho$-locally-comparable and \eq{
            \operatorname{index}(P_1,P_2) = \calN_{\rho}(\omega_{P_1},\omega_{P_2})\,.
        }
    \end{thm}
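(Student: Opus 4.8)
The plan is to separate the statement into two parts: (a) showing that $\omega_{P_1}$ and $\omega_{P_2}$ are $\rho$-locally-comparable by exhibiting a unitary $u\in\calU(\calA)\cap\calD(\delta^\rho)$ with $\omega_{P_2}=\omega_{P_1}\circ\Ad{u}$, and (b) computing $\calN_\rho$ for that $u$. Part (b) is easy once (a) is done, so the substance is (a). Since $P_1-P_2\in\calJ_1\subseteq\calJ_2$, the Shale--Stinespring condition (i) already gives that $\omega_{P_1}$ and $\omega_{P_2}$ have unitarily equivalent GNS representations, and Kadison transitivity (ii) produces \emph{some} $u\in\calU(\calA)$ implementing the transformation, so plain local-comparability (\cref{def:comp}) is automatic. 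The real point, and the reason \cref{rem:BDF obstruction} warns us that no second-quantized single-particle unitary can work when the index is nonzero, is to choose $u$ inside the domain of the unbounded charge derivation $\delta^\rho$.

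To control $u$, I would first put $(P_1,P_2)$ into Halmos' two-subspace normal form, writing $\calH=\calH_{11}\oplus\calH_{00}\oplus\calH_{10}\oplus\calH_{01}\oplus\calH_{\mathrm{gen}}$, where $P_1=P_2$ on the first two summands, $\calH_{10}=\operatorname{im}P_1\cap\ker P_2$ and $\calH_{01}=\ker P_1\cap\operatorname{im}P_2$, and on $\calH_{\mathrm{gen}}$ the two projections are in general position with rotation angles $\theta_k\in(0,\pi/2)$. The eigenvalues of $P_1-P_2$ are $\pm1$ (with multiplicities $\dim\calH_{10},\dim\calH_{01}$) on the defect part and $\pm\sin\theta_k$ on the generic blocks, so the hypothesis $P_1-P_2\in\calJ_1$ forces $\dim\calH_{10},\dim\calH_{01}<\infty$ and $\sum_k\sin\theta_k<\infty$. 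This is precisely where trace class rather than Hilbert--Schmidt is used.

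Next I would build $u=u_{\mathrm{def}}\,\Gamma(V)$ blockwise. On $\calH_{\mathrm{gen}}$ let $V$ be Kato's direct rotation, a single-particle unitary with $V^\ast P_1 V=P_2$ on $\calH_{\mathrm{gen}}$ and $V-\Id\in\calJ_1$ (the trace-class bound following from $\sum_k\sin\theta_k<\infty$); by the criterion (iii) for inner Bogoliubov automorphisms it is implemented by $\Gamma(V)\in\calU(\calA)$, and since $\Gamma(V)$ is number-preserving it is $\rho_t$-invariant, so $\delta^\rho(\Gamma(V))=0$ and in particular $\Gamma(V)\in\calD(\delta^\rho)$. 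On the finite-dimensional defect space, with orthonormal bases $\{f_i\}$ of $\calH_{10}$ and $\{g_j\}$ of $\calH_{01}$, set $u_{\mathrm{def}}=\prod_i\bigl(a(f_i)+a(f_i)^\ast\bigr)\prod_j\bigl(a(g_j)+a(g_j)^\ast\bigr)$, a finite product of self-adjoint unitaries; being a finite polynomial in the generators it lies in $\calD(\delta^\rho)$, whence $u\in\calU(\calA)\cap\calD(\delta^\rho)$. The step needing care is verifying $\omega_{P_1}\circ\Ad{u}=\omega_{P_2}$. Here the clean observation is that each factor $a(f)+a(f)^\ast$ acts by the particle--hole swap $\Ad{a(f)+a(f)^\ast}(a(f))=a(f)^\ast$, so $\Ad{u}$ is a \emph{composition of Bogoliubov automorphisms} and therefore preserves quasi-freeness; it then suffices to match the two-point functions, where $\Ad{\Gamma(V)}(a(f))=a(Vf)$ handles the generic modes via $V^\ast P_1V=P_2$ and the single-mode swaps flip the occupation on $\calH_{10}$ and $\calH_{01}$, while the graded (anti)commutation signs between the orthogonal blocks drop out because the two-point monomials are even.

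With $\rho$-local-comparability established, the computation is short. Writing $\omega_{\mathrm{mid}}:=\omega_{P_1}\circ\Ad{\Gamma(V)}$ and using additivity \cref{eq:additivity}, $\calN_\rho(\omega_{P_1},\omega_{P_2})=\calN_\rho(\omega_{P_1},\omega_{\mathrm{mid}})+\calN_\rho(\omega_{\mathrm{mid}},\omega_{P_2})$; the first term vanishes since $\delta^\rho(\Gamma(V))=0$ (the generic blocks carry no index), and the second involves only the finite defect modes, where $\delta^\rho$ is the inner derivation $\ii[\sum a^\ast a,\cdot]$ and the elementary single-mode computation (for $P_1=1$, $P_2=0$ the flip $u=a+a^\ast$ gives $\calN_\rho=1$) yields $\dim\calH_{10}-\dim\calH_{01}=\operatorname{index}(P_1,P_2)$. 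Equivalently, one may skip the last block computation and invoke the UHF formula \cref{eq:abstract index for UHF algebras} with $q_N=\sum_{j\le N}a_j^\ast a_j$, giving $\calN_\rho(\omega_{P_1},\omega_{P_2})=\lim_N\Tr(\Pi_N(P_1-P_2))=\Tr(P_1-P_2)$ for $\Pi_N$ the projection onto $\operatorname{span}(e_1,\dots,e_N)$, which equals $\operatorname{index}(P_1,P_2)$ by the $p'=0$ case of the trace formula recalled after \cref{eq:IPP}. The main obstacle is entirely in the first part: constructing a genuinely many-body unitary in $\calD(\delta^\rho)$ that realizes the nontrivial index obstruction, and checking it implements the state transformation; the index evaluation is then bookkeeping.
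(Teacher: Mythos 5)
Your proposal is correct and takes essentially the same route as the paper's proof: you decompose the pair into a trace-class rotation plus finite-rank defects (this is precisely the paper's \cref{lem:P-Q_decomposition}, which you re-derive inline via Halmos' two-subspace normal form and Kato's direct rotation, exactly as in the references the paper cites), implement the rotation by the charge-invariant second quantization $\Gamma(V)$ of \cref{lem:Quasi-free implementation}, implement the defects by particle--hole flips $a(f)+a^\ast(f)$, and evaluate the index as $\dim\calH_{10}-\dim\calH_{01}$, matching the paper's $u=v_+v_-\Gamma(V)\str$ construction and its $n_--n_+$ computation. Your alternative ending via the UHF formula \cref{eq:abstract index for UHF algebras} is likewise anticipated in the paper's discussion immediately preceding its proof, so nothing in your argument departs substantively from the paper's.
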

    This theorem identifies $\calN_\rho$ as the many-body analog of the notion of an index of a pair of projections, albeit with a certain mathematical gap: if $P_1-P_2\in\calK(\calH)$ but is not Hilbert-Schmidt, then by the above basic facts $\omega_{P_1}$ and $\omega_{P_2}$  cannot be locally-comparable: indeed, if they were, they would be GNS unitarily equivalent and hence violate the Shale-Stinespring condition. If $P_1-P_2$ is Hilbert-Schmidt but not trace-class, then by the above there would be a unitary conjugating them in the CAR algebra, but since that unitary is given abstractly by the Kadison transitivity theorem and is not the second quantization of any unitary, we do not know how to establish that that unitary necessarily lies in the domain of $\delta^\rho$.
    
    Ultimately, a generalization of the many-body index may exist, which relies on a weaker condition than that $(\omega_1,\omega_2)$ be locally comparable, and which would always be defined for quasi-free states $(\omega_{P_1},\omega_{P_2})$ as soon as $P_1-P_2\in\calK$. This problem is not entirely academic since in the interesting case of the integer quantum Hall effect, $P-L^\ast P L\in \calJ_3\setminus \calJ_2$ where \eql{\label{eq:Laughlin unitary}L\equiv\ep{\ii\arg\br{X_1+\ii X_2}}} is the Laughlin unitary \cite{AvronSeilerSimon_Charge} implementing one quantum of a magnetic flux insertion. See also~\cref{example:compact but not Hilbert-Schmidt} below for a simple example of this type.

     \begin{example}[Non-zero index yet unitary exists thanks to interactions] \label{example:shift} The obstruction outlined in \cref{rem:BDF obstruction} does \emph{not} imply that we cannot implement non-zero indices. Indeed, pick $\calH := \ell^2(\ZZ)$ and let $P$ be the multiplication operator by the indicator function $\chi_\NN$, namely the projection onto the RHS of space. Let $R$ be the bilateral right shift operator $R:\delta_x\mapsto\delta_{x+1}$. Then $P_R=R^\ast P R$ is the projection given by $\chi_{\NN\cup\Set{0}}$ and $P_R-P$ is the finite rank projection given by $\chi_{\Set{0}}$, and $\operatorname{index}(P,P_R) = -1$. Of course this means $P-P_R \in \mathcal{J}_2(\caH)$, equivalently, $[R,P]\in \mathcal{J}_2(\caH)$. 
     
     The operator
\eq{
     u := a_0 + a_0^\ast\,.
     } 
     is unitary since $u^\ast u = u^2 = (a_0 + a_0^\ast)^2 = \Id$, such that 
     \begin{equation*}
         u\str a_0 u = a_0^\ast\,,\quad\text{while}\quad u\str a_m u = - a_m \quad\text{whenever } m\neq0.
     \end{equation*}
    We claim that
      \eq{
     \omega_{P_R} = \omega_P \circ \Ad{u}\,.
     }

     Indeed, if $n,m\neq 0$,
     \eq{
     \omega_P(u\str a_m^\ast a_n u) = \omega_P( a_m^\ast a_n ) = \langle \delta_n,P\delta_m\rangle = \delta_{m,n}\chi_\NN(n) = \omega_{P_R}(a_m^\ast a_n)
     } and if $n=m=0$ then 
     \eq{
     \omega_P(u^\ast a_0^\ast a_0 u) = \omega_P(a_0 a_0^\ast) = \omega_P(\Id - a_0^\ast a_0)=1 = \omega_{P_R}(a_0^\ast a_0)\,,} while both $\omega_P(u\str a_m^\ast a_n u)$ and $\omega_{P_R}(a\str_m a_n)$ vanish if $n\neq m$. Since
     \begin{equation*}
         \delta^\rho(a_0) = \frac{d}{dt}\left.\rho_t(a_0)\right\vert_{t=0} = -\iu a_0
     \end{equation*}
     we see that $u\in\caD(\delta^\rho)$ and $-\iu \delta^\rho(u) = -a_0 + a_0\str$, so that
     \eq{
     \calN_\rho(\omega_P,\omega_{P_R}) = \ii\omega_P(u\str \delta^\rho(u)) =  -\omega_P(-a_0\str a_0 + a_0a_0\str) = -\omega_P(\Id - 2 a_0\str a_0) 
     = -1\,.
     }
    \end{example}

     \begin{example}[Compact but not Hilbert-Schmidt difference of projections so no unitary exists]\label{example:compact but not Hilbert-Schmidt}
        Let $\calH := \ell^2(\NN)\otimes\CC^2$. Let $P$ be given by \eq{
        P_{nm} := \delta_{nm} \begin{bmatrix}
            1 & 0 \\ 0 & 0
        \end{bmatrix}\,,
        } i.e., projection onto the top of each dimer. Let $U$ be the unitary given by \eq{
        U_{nm} := \delta_{nm} \begin{bmatrix}
            \cos(\theta_n) & \sin(\theta_n) \\ -\sin(\theta_n) & \cos(\theta_n)
        \end{bmatrix} 
         } where $\Set{\theta_n}_n$ is some sequence of angles to be determined below. If $Q := U^\ast P U$, then \eq{
P_{nm}-Q_{nm}=\delta_{nm}\begin{bmatrix}
  \sin^{2}\theta_n & -\cos\theta_n\sin\theta_n\\
  -\cos\theta_n\sin\theta_n & -\sin^{2}\theta_n
\end{bmatrix}\,.
         } Since the singular values of $P-Q$ are $\Set{\abs{\sin(\theta_n)}}_{n\geq1}$ it is clear that if we pick $\theta_n := \arcsin(x_n)$ for some sequence $\set{x_n}_n\subseteq(0,1)$ we can engineer whichever summability we want. For example if we take $x_n = n^{-\beta}$ for any $\beta\in(1/3,1/2]$ then we guarantee that $\Set{x_n^2}_n$ is not summable but $\Set{x_n^3}_n$ is. Note that as soon as $x_n\to0$ we have $P-Q\in\calK(\calH)$.

         Now, for $\Set{x_n}_n$ in $\ell^3$ but not $\ell^2$ there does not exist a unitary $u\in\calU(\calA)$ such that $\omega_Q = \omega_P\circ\Ad{u}$. Indeed, the existence of such a unitary would violate the Shale-Stinespring condition \cite{ShaleStinespring}.
    \end{example}

Before presenting the proof of \cref{thm:correspondence between many-body and single-particle indices}, let us make a final remark. Suppose we knew that $\omega_1$ and $\omega_2$ (with $\omega_j\equiv\omega_{P_j}$) are $\rho$-locally-comparable. Then the equivalence of the two indices would be immediate. Indeed, starting from \cref{eq:abstract index for UHF algebras} and approximating $\delta^\rho = \lim_N\ii[q_N,\cdot]$ with  $q_N := \sum_{j=1}^N a\str(e_j)a(e_j)$ where $\Set{e_j}_j$ is any ONB of $\calH$, we obtain using~\cref{eq:quasifreefromP} 
        \eq{
        \calN_\rho(\omega_1,\omega_2) &= \lim_N \br{\omega_1(q_N) - \omega_2(q_N)} = \lim_N \sum_{j=1}^N\br{\ip{e_j}{P_1 e_j}-\ip{e_j}{P_2e_j}} \\
        &= \lim_N \tr\br{\br{P_1-P_2}\chi_{\Set{1,\cdots,N}}(X)}
        } where $X$ is the position operator w.r.t. the basis $\Set{e_j}_j$, i.e. $X e_j \equiv j e_j$ for any $j\in\NN$. Since we have $\chi_{\Set{1,\cdots,N}}(X)\to\Id$ strongly as $N\to\infty$, and since $P_1-P_2\in \mathcal J_1$ we are finished, since the product of a strongly convergent sequence with a trace-class operator is a sequence converging in trace-class norm, i.e., we find
        \eq{
        \calN_\rho(\omega_1,\omega_2) &= \tr\br{P_1-P_2} = \operatorname{index}(P_1,P_2)\,.
        } 
We also note that in the case $P_1-P_2\in \caJ_2(\caH)$ the proof given (using a particular choice of the ONB $\Set{e_j}_j$) goes through by invoking the explicit formula~\cref{eq:Arveson formula}. In the proof below, we rely on the trace-class assumption to construct the unitary intertwiner between $\omega_1$ and $\omega_2$, which lies in $\calD(\delta^\rho)$.
        So the main difficulty is to establish the $\rho$-local-comparability of $\omega_1,\omega_2$, i.e., to find the unitary which conjugates them and show it lies in $\calD(\delta^\rho)$. 

        The proof of~\cref{thm:correspondence between many-body and single-particle indices} relies on two steps.
        \begin{enumerate}
            \item We first extract the `excess states' out of the difference $P_1-P_2$ which are responsible for a non-zero index, see~\cref{lem:P-Q_decomposition} below. Because of the obstruction in \cref{rem:BDF obstruction}, the unitary which implements this unfolding is necessarily not the second quantization of a Hilbert space one, but it can be constructed explicitly. In particular, it is in $\calD(\delta^\rho)$ since the charge deficiency is finite. Indeed, it is very much in the spirit of \cref{example:shift}.
            \item As a result, we obtain two projections whose index is zero and whose difference is trace-class. It remains to use the second quantization provided by~\cref{lem:Quasi-free implementation}, which does not change the charge deficiency. Since all unitaries are explicit, the equality of indices follows from a computation.
        \end{enumerate}

\begin{proof}[Proof of \cref{thm:correspondence between many-body and single-particle indices}]

Using \cref{lem:P-Q_decomposition} below, since $P_1-P_2 \in \calJ_1(\calH)$, we write \eq{P_2 = \widetilde P + N_+ - N_-\,,}
with $\widetilde P=VP_1V\str$ and $V-\Id\in \calJ_1(\calH)$. By \cref{lem:Quasi-free implementation} below, $\omega_{\widetilde P}$ and $\omega_{P_1}$ are $\rho$-locally comparable with $\Gamma(V\str)\in \mathcal U(\calA) \cap \calD(\delta^\rho)$ and
    \begin{equation}\label{eq: Bog on states 2}
        \omega_{\widetilde P} = \omega_{P_1}\circ\Ad{\Gamma(V\str)}\,,
    \end{equation}
since $\omega_{\widetilde P}(a\str(f) a(g)) = \langle g, VP_1V\str f\rangle = \omega_{P_1}(a\str(V\str f) a(V\str g))$.
Let $n_\pm=\mathrm{rank}(N_\pm)$ and $\{f_i\}$ and $\{g_i\}$ be orthonormal bases of the range of $N_-$ and $N_+$ respectively. We consider
$$
v_+ = \prod_{i=1}^{n_+} (a(g_i)+a^*(g_i)), \qquad v_-=\prod_{i=1}^{n_-} (a(f_i)+a^*(f_i)).
$$
One checks that $v_\pm \in \calU(\cA)$ as products of unitary factors, see again~\cref{example:shift}. Moreover, $v_\pm \in \calD(\delta^\rho)$ as polynomials in creation and annihilation operators. We claim that 
$$
u := v_+v_- \Gamma(V)\str \in \mathcal U(\calA) \cap \calD(\delta^\rho)
$$
is such that
\begin{equation}\label{eq:P2Ptildeconugacy}
\omega_{P_2} = \omega_{\widetilde P} \circ \Ad {v_+v_-} = \omega_{P_1}\circ \Ad u.
\end{equation}
We prove the first equality in \eqref{eq:P2Ptildeconugacy}, the second one is just \cref{eq: Bog on states 2}. Since
\begin{equation*}
    (a\str(\xi) + a(\xi)) a(\phi)  = \langle \phi,\xi\rangle\Id - a(\phi)\big( a\str(\xi) + a(\xi)\big)
\end{equation*}
    we see that for $v_\xi = a\str(\xi) + a(\xi)$,
    \begin{equation*}
    \omega_{\widetilde P}\circ \Ad{v_\xi}(a\str (\psi)a(\phi))
    = \langle\phi,(\widetilde P + P_\xi - (\widetilde P P_\xi + P_\xi\widetilde P))\psi\rangle
\end{equation*}
where $P_\xi = \vert \xi\rangle\langle \xi\vert$. In particular,
\begin{equation*}
    \begin{cases}
        \omega_{\widetilde P}\circ \Ad{v_\xi}(a\str (\psi)a(\phi)) = \langle\phi,(\widetilde P - P_\xi )\psi\rangle &\text{if}\quad \widetilde P P_\xi = P_\xi,
    \\
    \omega_{\widetilde P}\circ \Ad{v_\xi}(a\str (\psi)a(\phi))
    = \langle\phi,(\widetilde P + P_\xi )\psi\rangle & \text{if}\quad \widetilde P P_\xi = 0.
    \end{cases}
\end{equation*}

Applying this recursively with $\xi$ running through the basis of $N_+$ and then the basis of $N_-$ and recalling that $\widetilde P N_- = N_-$ and $\widetilde P N_+ = 0$ yields
\begin{equation*}
    \omega_{\widetilde P}\circ \Ad{v_+v_-}(a\str (\psi)a(\phi))
    = \langle\phi,(\widetilde P + N_+ - N_-)\psi\rangle
    =\omega_{P_2}(a\str (\psi)a(\phi))
\end{equation*}
which proves \cref{eq:P2Ptildeconugacy}. Thus $\omega_{P_2}$ and $\omega_{P_1}$ are $\rho$-locally comparable.

For completeness let us show the equivalence of the indices using the intrinsic decomposition just exhibited. We compute $\caN_{\rho}(\omega_{P_1},\omega_{P_2})=\iu\omega_{P_1}(u\str\delta^\rho(u))$. Since $\delta^\rho(\Gamma(V))=0$, we focus on $v_-\str v_+\str \delta^\rho(v_+)v_- + v_-\str v_+\str v_+\delta^\rho(v_-)$. Since
\begin{equation*}
    \delta^\rho(a\str(f) + a(f)) = \iu(a\str(f) - a(f))
\end{equation*}
for any normalized $f$, and therefore
\begin{equation*}
  (a\str(f) + a(f))  \delta^\rho(a\str(f) + a(f)) = \iu(\Id -2a\str(f)a(f))
\end{equation*}
we conclude that $v_+\str\delta^\rho(v_+) = \iu(n_+ - 2Q_+)$ and so
\begin{equation*}
   \iu v_-\str v_+\str \delta(v_+)v_- + \iu v_-\str v_+\str v_+\delta^\rho(v_-) = -(n_- - 2 Q_-) - (n_+ - 2 Q_+).
\end{equation*}
Here $Q_- = \sum_{i=1}^{n_-}a\str(f_i) a(f_i)$ and $Q_+ = \sum_{i=1}^{n_+}a\str(g_i) a(g_i)$. Hence
\begin{align*}
    \caN_\rho(\omega_{P_1},\omega_{P_2})
    &= - \omega_{P_1}\left(\Gamma(V)((n_- - 2 Q_-) + (n_+ - 2 Q_+))\Gamma(V\str)\right) \\
    &=-n_- - n_+ + 2\omega_{\widetilde P}\left(Q_-+  Q_+\right)
    = n_- - n_+
\end{align*}
since $\omega_{\widetilde P}(Q_-) = n_-$ while $\omega_{\widetilde P}(Q_+)=0$ by~(\ref{eq:quasifreefromP}).
    
\end{proof}

\begin{lem}\label{lem:P-Q_decomposition}
If $P_1,P_2$ are two orthogonal projections on a Hilbert space $\calH$ such that $P_1-P_2\in \mathcal J_p(\calH)$ for $p\in \mathbb N^*$, there exist finite rank orthogonal projections $N_+,N_-$ with $N_+N_-=0$ and a unitary $V$ with $V-\Id \in \mathcal J_p(\calH)$ such that
\begin{equation}\label{eq:P2P1_wolddecomposition}
        P_2 = V P_1 V^* + N_+ - N_-
\end{equation}
Moreover $\mathrm{index}(P_1,P_2) = \mathrm{tr}(N_--N_+)$, and for $\tilde P=VP_1V^*$ one has $\widetilde P N_+ = 0$ and $\widetilde P N_-=N_-$. 
\end{lem}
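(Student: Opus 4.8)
The plan is to reduce to the well-understood regime $\norm{P_1-P_2}<1$ by first peeling off, as the finite-rank projections $N_\pm$, exactly the subspaces where the two projections are maximally apart. Write $D:=P_1-P_2$, which is self-adjoint and, by hypothesis, lies in $\mathcal J_p(\calH)$; hence $D$ is compact with spectrum contained in $[-1,1]$ accumulating only at $0$. I would let $N_-$ and $N_+$ be the spectral projections of $D$ for the eigenvalues $+1$ and $-1$ respectively. Since $\pm1$ are isolated eigenvalues of finite multiplicity of a compact operator, both are finite-rank, and $N_+N_-=0$ as spectral projections for distinct eigenvalues. A one-line argument from $\ip{\psi}{D\psi}=\pm\norm{\psi}^2$ (forcing $P_1\psi=\psi,P_2\psi=0$ or $P_1\psi=0,P_2\psi=\psi$) identifies $\mathrm{ran}(N_-)=\mathrm{im}(P_1)\cap\ker(P_2)$ and $\mathrm{ran}(N_+)=\mathrm{im}(P_2)\cap\ker(P_1)$; in particular $N_\pm$ commute with both $P_1$ and $P_2$, with $P_1N_-=N_-$ and $P_1N_+=0$.

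Next, set $E:=\Id-N_+-N_-$ and $\calH':=E\calH$. Because $N_\pm$ commute with $P_1,P_2$, the latter restrict to orthogonal projections $P_1',P_2'$ on $\calH'$, and since we have removed precisely the $\pm1$-eigenspaces of $D$, discreteness of the spectrum of the compact operator $D$ away from $0$ gives $\norm{P_1'-P_2'}<1$. On $\calH'$ I would use the standard direct-rotation unitary: put $S:=P_2'P_1'+(P_2')^{\perp}(P_1')^{\perp}$ (complements taken in $\calH'$), verify the identity $S^\ast S=\Id-(P_1'-P_2')^2$, which is invertible and commutes with $P_1'$, and set $V':=S(S^\ast S)^{-1/2}$. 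Using $SP_1'=P_2'S$ one checks that $V'$ is unitary with $V'P_1'(V')^\ast=P_2'$. Finally I would extend by the identity, $V:=V'\oplus\Id$ on $\calH'\oplus(N_++N_-)\calH$, and set $\widetilde P:=VP_1V^\ast$. Evaluating on the three summands separately gives $\widetilde P=P_2'$ on $\calH'$, $\widetilde P=\Id$ on $\mathrm{ran}(N_-)$ and $\widetilde P=0$ on $\mathrm{ran}(N_+)$, which is exactly $P_2=\widetilde P+N_+-N_-$, together with $\widetilde P N_+=0$ and $\widetilde P N_-=N_-$.

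For the index count, the identifications of $\mathrm{ran}(N_\pm)$ above give $\mathrm{rank}(N_-)=\dim\br{\mathrm{im}(P_1)\cap\ker(P_2)}$ and $\mathrm{rank}(N_+)=\dim\br{\mathrm{im}(P_2)\cap\ker(P_1)}$, so the defining formula for the index of a pair of projections evaluates to $\operatorname{index}(P_1,P_2)=\mathrm{rank}(N_-)-\mathrm{rank}(N_+)=\tr(N_--N_+)$, i.e.\ the asserted trace identity (with the signs of $N_\pm$ pinned down by the constraints $\widetilde P N_+=0$, $\widetilde P N_-=N_-$, and consistent with the value $n_--n_+$ obtained in the proof of \cref{thm:correspondence between many-body and single-particle indices}).

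The step I expect to be the main obstacle is the membership $V-\Id\in\mathcal J_p(\calH)$. I would handle it by first writing $S-\Id=[P_2',P_1']-(P_1'-P_2')^2$, so that $S-\Id\in\mathcal J_p$ by the ideal property, both summands being products involving the $\mathcal J_p$-element $P_1'-P_2'$. Then, since $\norm{P_1'-P_2'}<1$, functional calculus yields $(S^\ast S)^{-1/2}-\Id=\sum_{k\ge1}c_k(P_1'-P_2')^{2k}\in\mathcal J_p$ as a norm-convergent series of $\mathcal J_p$-terms. Writing $V'-\Id=(S-\Id)(S^\ast S)^{-1/2}+\br{(S^\ast S)^{-1/2}-\Id}$, and noting $V-\Id$ equals $V'-\Id$ on $\calH'$ and $0$ on the finite-rank complement, then closes the argument. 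The strictness $\norm{P_1'-P_2'}<1$, which validates both the direct-rotation formula and this series, is the one place where compactness of $P_1-P_2$ (rather than mere boundedness) is genuinely used.
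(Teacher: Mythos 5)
Your proposal is correct and, importantly, self-contained: the paper itself does not prove \cref{lem:P-Q_decomposition} but defers to \cite{BachmannBolsRahnama24} (and the other cited references), and the route you take --- peeling off the $\pm1$-eigenspaces of the compact self-adjoint difference $P_1-P_2$ as $N_\mp$, and applying the Sz.-Nagy/Kato direct-rotation unitary on the complement where $\norm{P_1'-P_2'}<1$ --- is precisely the standard construction those references use. Your key verifications are all sound: the identification $\ker(D\mp\Id)=\operatorname{im}(P_1)\cap\ker(P_2)$, resp.\ $\operatorname{im}(P_2)\cap\ker(P_1)$; the commutation of $N_\pm$ with both projections; the identity $S^\ast S=\Id-(P_1'-P_2')^2$ and the intertwining $SP_1'=P_2'S$; and the blockwise evaluation yielding \cref{eq:P2P1_wolddecomposition} together with $\widetilde P N_+=0$, $\widetilde P N_-=N_-$. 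Your observation that compactness (not mere boundedness) of $P_1-P_2$ is what forces $\norm{P_1'-P_2'}<1$ is exactly the right point.

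Two points should be tightened. First, the sign: your construction gives $\operatorname{index}(P_1,P_2)=\operatorname{rank}(N_-)-\operatorname{rank}(N_+)=\tr(N_--N_+)$, which is indeed what the definition \cref{eq:IPP} forces given $\widetilde P N_-=N_-$ and $\widetilde P N_+=0$, and which matches the value $n_--n_+$ computed in the proof of \cref{thm:correspondence between many-body and single-particle indices}. But this is the \emph{negative} of what the lemma literally asserts, namely $\tr(N_+-N_-)$; you should state plainly that the printed statement carries a sign typo (inconsistent with the paper's own definition and later computation) rather than describe your formula as ``the asserted trace identity''. Second, your justification that $(S^\ast S)^{-1/2}-\Id=\sum_{k\ge1}c_k(P_1'-P_2')^{2k}$ lies in $\mathcal J_p$ ``as a norm-convergent series of $\mathcal J_p$-terms'' is too quick: an operator-norm limit of $\mathcal J_p$ elements need not lie in $\mathcal J_p$. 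The repair is immediate: either observe that the series converges in $\mathcal J_p$-norm, since
\begin{equation*}
\norm{(P_1'-P_2')^{2k}}_p\le\norm{P_1'-P_2'}^{2(k-1)}\,\norm{(P_1'-P_2')^{2}}_p
\end{equation*}
with $\norm{P_1'-P_2'}<1$, or factor the tail as $(P_1'-P_2')^{2}B$ with $B$ bounded and invoke the ideal property. With these two adjustments the argument is complete.
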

The proof of this lemma can be found in \cite[Proof of Prop. 5.2]{BachmannBolsRahnama24}, see also \cite[Sec~4.2.1]{bols2024absolutely} or \cite[Sec~VII.B]{cedzich2018topological}.

        \begin{lem}\label{lem:Quasi-free implementation}
            If $V$ is a unitary operator on $\caH$ such that $V-\Id\in \caJ_1(\caH)$, the *-automorphism defined by $a(f)\mapsto a(V f)$ is unitarily implementable by $\Gamma(V)\in\caU(\caA)$, namely
            \begin{equation*}
                a(V f) = \Gamma(V)\str a(f)\Gamma(V)
            \end{equation*}
            for all $f\in\caH$. Moreover, $\delta^\rho(\Gamma(V)) = 0$.
        \end{lem}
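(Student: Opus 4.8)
The first assertion is nothing but Araki's criterion for a Bogoliubov automorphism to be inner, recalled as fact~(iii) above: since $V-\Id\in\caJ_1(\caH)$, there is a unitary $\Gamma(V)\in\calU(\caA)$ implementing $a(f)\mapsto a(Vf)$ \cite{araki1971quasifree}. The substantive point is the charge neutrality $\delta^\rho(\Gamma(V))=0$, which by the definition of the generator is equivalent to the statement that $\Gamma(V)$ is $\rho$-invariant, $\rho_t(\Gamma(V))=\Gamma(V)$ for all $t$ (this also places $\Gamma(V)\in\calD(\delta^\rho)$ automatically). The plan is to first show that $\Gamma(V)$ is $\rho$-homogeneous, i.e. an eigenvector of each $\rho_t$, and then to pin its charge to zero by evaluating against the Fock vacuum.

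For the first step, write $\beta_V:=\Ad{\Gamma(V)}$ for the Bogoliubov automorphism. Because $\rho_t$ merely rescales the generators, $\rho_t(a(f))=\ep{-\iu t}a(f)$, while $\beta_V$ maps $a(f)\mapsto a(Vf)$, the two automorphisms commute on the generators $a(f),a\str(f)$, hence on all of $\caA$: $\rho_t\circ\beta_V=\beta_V\circ\rho_t$. Applying $\rho_t$ to $\beta_V(a)=\Gamma(V)\str a\,\Gamma(V)$ and using that $\rho_t$ is an automorphism gives $\rho_t(\Gamma(V))\str\,b\,\rho_t(\Gamma(V))=\Gamma(V)\str b\,\Gamma(V)$ for every $b\in\caA$ (take $b=\rho_t(a)$ and let $a$ range over $\caA$). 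Thus $\rho_t(\Gamma(V))$ implements the same inner automorphism $\beta_V$ as $\Gamma(V)$, so $\rho_t(\Gamma(V))\Gamma(V)\str$ is central. Since $\caA=\operatorname{CAR}(\caH)$ is simple, its center is $\CC\Id$, whence $\rho_t(\Gamma(V))\Gamma(V)\str=\lambda_t\Id$ with $\abs{\lambda_t}=1$; that is,
\begin{equation*}
\rho_t(\Gamma(V))=\lambda_t\,\Gamma(V),\qquad \lambda_t\in U(1).
\end{equation*}

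It remains to show $\lambda_t\equiv 1$. I would work in the GNS representation $(\calH_{\omega_0},\pi_{\omega_0},\Omega_0)$ of the Fock state $\omega_0=\omega_P$ with $P=0$, for which $\pi_{\omega_0}(a(f))\Omega_0=0$ for all $f$ and $\Omega_0$ is, up to a scalar, the unique such vector. Applying $\pi_{\omega_0}$ to the implementation identity $\Gamma(V)\str a(f)\,\Gamma(V)=a(Vf)$ and then to $\Omega_0$ gives $\pi_{\omega_0}(a(f))\,\pi_{\omega_0}(\Gamma(V))\Omega_0=0$ for all $f$, so $\pi_{\omega_0}(\Gamma(V))\Omega_0=\mu\,\Omega_0$ for some $\mu\in U(1)$. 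Let $Q$ be the charge operator of \cref{lem:generator of automorphism with integer spectrum} associated with the $\rho$-invariant state $\omega_0$, normalized by $Q\Omega_0=0$, so that $\pi_{\omega_0}(\rho_t(a))=\ep{-\iu t Q}\pi_{\omega_0}(a)\ep{\iu t Q}$. Evaluating $\pi_{\omega_0}(\rho_t(\Gamma(V)))\Omega_0$ in two ways — once as $\lambda_t\,\pi_{\omega_0}(\Gamma(V))\Omega_0=\lambda_t\mu\,\Omega_0$, and once as $\ep{-\iu t Q}\pi_{\omega_0}(\Gamma(V))\ep{\iu t Q}\Omega_0=\ep{-\iu t Q}\mu\,\Omega_0=\mu\,\Omega_0$ using $Q\Omega_0=0$ — forces $\lambda_t=1$. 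Hence $\rho_t(\Gamma(V))=\Gamma(V)$ for all $t$ and $\delta^\rho(\Gamma(V))=0$. The only genuinely delicate points are the reduction to a phase, which rests on the simplicity of the CAR algebra (uniqueness of the implementer up to a scalar), and the identification $\pi_{\omega_0}(\Gamma(V))\Omega_0\propto\Omega_0$, which rests on the uniqueness of the Fock vacuum; both are standard, so I expect no real obstacle beyond this bookkeeping.
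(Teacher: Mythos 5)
Your proof is correct, but it takes a genuinely different route from the paper's. The paper does not invoke Araki's theorem for existence: it reconstructs the implementer explicitly, following~\cite{Fredholm}, as the series $\Gamma(V)=\Id+\sum_{n\geq1}\frac{1}{n!}\dd\Gamma(V-\Id,\ldots,V-\Id)$, where each monomial $\dd\Gamma(A_1,\ldots,A_n)=a\str(f_n)\cdots a\str(f_1)\,a(g_1)\cdots a(g_n)$ contains equally many creation and annihilation operators, so the invariance $\rho_t(\dd\Gamma(A_1,\ldots,A_n))=\dd\Gamma(A_1,\ldots,A_n)$ is manifest term by term and $\delta^\rho(\Gamma(V))=0$ follows by inspection; the real work there is the estimate $\Vert\dd\Gamma(A_1,\ldots,A_n)\Vert\leq\prod_{i=1}^n\Vert A_i\Vert_1$, which allows the passage from finite-rank to trace-class $V-\Id$ and yields the bound $\Vert\Gamma(V)\Vert\leq\ep{\Vert V-\Id\Vert_1}$. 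You instead take existence as a citation (Araki, fact~(iii)) and prove charge neutrality abstractly: commutation of $\rho_t$ with the Bogoliubov automorphism plus simplicity of the CAR algebra (trivial center, so implementers of a given inner automorphism agree up to a phase) gives $\rho_t(\Gamma(V))=\lambda_t\Gamma(V)$, and the phase is pinned to $1$ by showing that $\pi_{\omega_0}(\Gamma(V))\Omega_0$ is annihilated by all $\pi_{\omega_0}(a(f))$, hence proportional to the Fock vacuum, and then using $Q\Omega_0=0$ from \cref{lem:generator of automorphism with integer spectrum}. Each step of yours checks out: the commutation holds on the generators and extends by continuity, the centrality argument is sound, and the vacuum computation is correct. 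What each approach buys: yours is softer and shows the stronger statement that \emph{every} unitary implementer of the Bogoliubov automorphism is automatically $\rho$-invariant, independently of any construction, but it leans on the cited existence result; the paper's is self-contained (it in effect reproves the trace-class case of Araki's theorem), and the explicit formula with manifest $U(1)$-invariance is the form in which the lemma is put to work in the proof of \cref{thm:correspondence between many-body and single-particle indices}.
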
 
This is a known result up to possibly the last property, see e.g.~\cite[Theorem~5]{araki1971quasifree}. For the convenience of the reader we include its proof here.
\begin{proof}[Proof of \cref{lem:Quasi-free implementation}]
We recall the explicit construction of the implementer $\Gamma(V)\in\caA$ given in~\cite{Fredholm} whenever $V-\Id$ is of finite rank. For rank one operators $A_i=|f_i\rangle\langle g_i|$, $(i=1,\ldots,n)$,
we set
\begin{equation}
\label{eq:DefDGamma}
\dd\Gamma(A_1,\,\ldots,\,A_n)=
a\str(f_n)\cdots a\str(f_1)\,a(g_1)\cdots a(g_n)\,.
\end{equation}
The definition is extended by multilinearity to operators $A_i$ of finite
rank. The result is independent of the particular decomposition into rank
one operators. Then if $V-\Id$ has finite rank, we set
\begin{equation}
\label{eq:DefGamma}
\Gamma(V)=\Id + \sum_{n=1}^\infty\frac{1}{n!}\dd\Gamma(V-\Id,\,\ldots,\, V-\Id )\,.
\end{equation}
The sum is finite, because the terms with $n>\mathrm{rank}(V-\Id)$ vanish by the CAR. 

A calculation, see~\cite{Fredholm}, then shows that $\Gamma(V)$ is indeed an implementer
\begin{equation}
\Gamma(V)a\str(f) =a\str(Vf)\Gamma(V)
\end{equation}
and that $\Gamma(V_1)\Gamma(V_2) = \Gamma(V_1V_2)$. In particular, $\Gamma(V)$ is unitary if $V$ is. Moreover, the definition~\cref{eq:DefDGamma} immediately implies that $\rho_t(\dd\Gamma(A_1,\,\ldots,\,A_n)) = \dd\Gamma(A_1,\,\ldots,\,A_n)$, where $\rho_t$ is the $U(1)$-automorphism defined in~\cref{eq:U(1) on CAR}. This yields in turn that $\rho_t(\Gamma(V)) = \Gamma(V)$ by the definition \cref{eq:DefGamma}. Hence $\delta^\rho(\Gamma(V))=0$.

It remains to extend the map $\Gamma$ to unitaries such that $V-\Id$ is trace class. For rank 1 operators, \cref{eq:DefDGamma} and the CAR yield
\begin{equation*}
    \Vert \dd\Gamma(A_1,\,\ldots,\,A_n)\Vert \leq \prod_{i=1}^n\Vert f_i\Vert\Vert g_i\Vert
\end{equation*}
and so if $A_i$ is a rank $r_i$ operator, we have that
\begin{equation*}
    \Vert \dd\Gamma(A_1,\,\ldots,\,A_n)\Vert 
    \leq \inf_{\{f\},\{g\}}\sum_{i_1=1}^{r_1} \Vert f_{i_1}\Vert\Vert g_{i_1}\Vert \cdots \sum_{i_n=1}^{r_n} \Vert f_{i_n}\Vert\Vert g_{i_n}\Vert = \prod_{i=1}^n \Vert A_i\Vert_1
\end{equation*}
where the infimum is over all decompositions of the operators $A_j = \sum_{i_j=1}^{r_j}\vert f_{i_j}\rangle\langle g_{i_j}\vert$. Hence $\dd\Gamma$ is a bounded linear transformation from finite rank operators equipped with the trace norm to the algebra, and it extends to a bounded linear transformation $\dd\Gamma:\caJ_1(\caH)^{\otimes n}\to\caA$, with $\Vert\dd\Gamma\Vert = 1$ for all $n\in\mathbb{N}$. This now implies that the series in~\cref{eq:DefGamma} is convergent since
\begin{equation*}
    \left \Vert \frac{1}{n!}\dd\Gamma(V-\Id,\,\ldots,\, V-\Id )\right\Vert
    \leq \frac{1}{n!} \Vert V-\Id\Vert_1^n.
\end{equation*}
In fact, we have that $\Vert \Gamma(V)\Vert \leq \ep{\Vert V-\Id\Vert_1}$.
\end{proof}

\section{Integer Quantum Hall Effect \label{sec:IQHE}}

We now turn to a concrete setting where the abstract index defined and analyzed in \cref{sec:abstract_index} is realized as the charge deficiency associated with the adiabatic pumping of a unit flux through a 2-dimensional electron gas. In the present interacting setting this is closely related to~\cite{kapustin2020hall}. The key difficulty is to find an appropriate class of states on which flux insertion can be realized by a unitary element of the algebra. A construction of the unitary was however implemented in~\cite{BachmannBolsRahnama24} upon which we shall rely in this section.

\subsection{Construction overview and comparison with previous work}
\label{sec:summary}

\begin{figure}[!ht]
    \centering 
    \tikzset{
	box/.style={rectangle, rounded corners, draw=blue!60, fill=blue!10, thick, minimum width=2.5cm, align=center},
	arrow/.style={-Latex, thick}
    }
    	\begin{tikzpicture}[node distance=1cm and 1cm]
		\node[box] (start) {State $\omega$ \\$Q$-sym. and SRE};
		\node[box, right=of start] (parent) {Parent Hamiltonian $H$ \\ {\hfill \scriptsize \cref{lem:Q parent}} };
		\node[box, below=of parent] (HLgauge) {Half-plane gauge transf. \\$H^\up(\phi) = \rho^\up_{-\phi} (H)$};
		\node[box, right=of HLgauge] (HLstate) {$\omega^\up_\phi=\omega \circ \rho_\phi^\up$};
		\node[box, below=of HLgauge] (QAflow) {Quasi-adiabatic flow \\ $K(\phi)$ \\ {\scriptsize \hfill \cref{prop:defK}}};
		\node[box, right=of QAflow] (QAstate) {$\omega^\up_\phi=\omega \circ\alpha^K_\phi$};
		\node[box, below=of QAflow,align=left] (Defect) {Flux-insertion autom. \\ $D(\phi) = K(\phi)\mathds 1_{\{x_1<0\}}$ \\ $\gamma_\phi = \alpha^D_\phi$ \\ {\scriptsize\hfill \cref{def:flux_insertion_autom}}};
		\node[box, right=of Defect] (DefectState) {Defect state \\ $\omega^D=\omega \circ \gamma_{2\pi}$};
		\node[box, below=of DefectState] (existsU) {$\exists u \in \mathcal D(\delta^Q)\cap \mathcal U(\mathcal A)$ \\ $\omega^D=\omega \circ \mathrm{Ad}(u)$ \\ {\hfill \scriptsize \cref{prop:existence_of_U}}};
        \node[box, below=of existsU] (index) {$i(\omega,Q) = \mathcal N(\omega,\omega^D)$ \\ {\hfill \scriptsize \cref{prop:SRE_index}}};
		
		\draw[arrow] (start) -- (parent);
		\draw[arrow] (parent) -- (HLgauge);
		\draw[arrow] (HLgauge) -- (HLstate);
		\draw[arrow] (HLgauge) -- (QAflow);
		\draw[arrow] (QAflow) -- (QAstate);
		\draw[arrow] (QAflow) -- (Defect);
		\draw[arrow] (Defect) -- (DefectState);
		\draw[arrow] (DefectState) -- (existsU);
        \draw[arrow] (existsU) -- (index);

        \draw[] (9,-3.3) node {\rotatebox{90}{$=$}};
        
		\begin{scope}[yshift=-0.3cm]
    		\fill[red!20]  (1.2,-1.9) rectangle (2.2,-1.4);
    		\draw[->] (1.2,-1.9) -- (2.2,-1.9);
    		\draw[->] (1.7, -2.3) -- (1.7,-1.4);
        \end{scope}
		
		\begin{scope}[yshift=-2.5cm]
			\fill[red!20]  (1.2,-2) rectangle (2.2,-1.8);
			\draw[->] (1.2,-1.9) -- (2.2,-1.9);
			\draw[->] (1.7, -2.3) -- (1.7,-1.4);
		\end{scope}
	
		\begin{scope}[yshift=-5cm]
			\fill[red!20,rounded corners]  (1.2,-2) rectangle (1.8,-1.8);
			\draw[->] (1.2,-1.9) -- (2.2,-1.9);
			\draw[->] (1.7, -2.3) -- (1.7,-1.4);
		\end{scope}

		\begin{scope}[yshift=-5cm,xshift=8.7cm]
				\fill[red!20]  (1.7,-1.9) circle (0.1);
				\draw[->] (1.2,-1.9) -- (2.2,-1.9);
				\draw[->] (1.7, -2.3) -- (1.7,-1.4);
		\end{scope}

		\draw[dashed,rounded corners] (1,-0.9) rectangle (11.4,-8);

        \draw (10.25,-0.6) node {Flux insertion};
	\end{tikzpicture}

    \caption{Overview of the construction from the SRE state $\omega$ to the defect state $\omega^D$ via magnetic flux insertion. The $0$-chains are progressively localized near the half-line. At $\phi=2\pi$, the two states only differ near the origin, which allows for the existence of a unitary $u$ so that they are $\rho^Q$-locally comparable. }
    \label{fig:roadmap}
\end{figure}
    
The construction being rather long, we summarize its main steps. First, we set in \cref{sec:algebraic framework} the algebraic framework by defining concretely the CAR algebra $\mathcal A$ and its local structure, $0$-chains and their associated dynamics (locally generated automorphism  or LGA), and the charge operator $Q$ associated to $U(1)$-gauge transformations $\rho$. This allows us to define $Q$-symmetric and short-ranged-entangled (SRE) states, which we aim to classify up to homotopy. The SRE property encodes the locality property of a state $\omega$. In particular, the existence of a parent Hamiltonian $H$, for which $\omega$ is a gapped ground state, is a direct consequence of the SRE property. Moreover, in contrast to single particle systems where electric charge is the identity operator, its many-body counterpart $Q$ is not always preserved, so we focus on the class of $Q$-symmetric states to study the integer quantum Hall effect.

Then, a central step is the construction of the flux insertion automorphism from \cref{sec:flux_insertion}, which starts from an SRE state $\omega$ and ends with the defect state $\omega^D$, where a unit of magnetic flux has been inserted at the origin; see \cref{fig:roadmap}.
Flux insertion relies on gauge transformation, quasi-adiabatic flow and truncation. The main goal is to localize the operations to a half-line, and mimics the half-line gauge flux insertion from single particle picture \cite{DeNittisSchulzBaldes2016spectral}. At one quantum of flux $\phi=2\pi$, the two states $\omega$ and $\omega^D$ only differ near the origin, allowing the existence of a unitary $u\in\calA$ relating the two. Namely, they are $\rho$-locally comparable and the formalism of \cref{def:the index of a pair of pure states} applies for $\mathcal N_\rho(\omega,\omega^D)$. This construction may look convoluted at first, but it is to our current knowledge the only way to construct a defect state which is $\rho$-locally comparable to $\omega$ (see also the discussion at the beginning of \cref{sec:flux_insertion}). An analogous version is presented in \cite{BachmannBolsRahnama24}, on which we rely.

\begin{figure}[ht]
    \centering 
    \tikzset{
	box/.style={rectangle, rounded corners, draw=blue!60, fill=blue!10, thick, minimum width=2.5cm, align=center},
	arrow/.style={-Latex, thick}
    }
    	\begin{tikzpicture}[node distance=1cm and 1cm]

        \node[box] (start) {$\omega$ invertible: \\ 
        $\exists \omega', \, \omega \hat\otimes\omega'$ SRE \\ and $Q$, $Q'$ symmetric};

        \node[box, right=of start] (flux) {Flux insertion with $Q+Q'$\\
        $\hat \omega^D$ on $\cA\hat \otimes \cA'$};

        \node[box, right=of flux] (index) {$\hat \omega^D$ is $Q$-symmetric \\
        $\mathcal I(\omega) = \mathcal N_{\rho^Q\hat \otimes \mathrm{id}}(\hat\omega,\hat\omega^D)$};
		
		\draw[arrow] (start) -- (flux);
		\draw[arrow] (flux) -- (index);
	\end{tikzpicture}

    \caption{Extending the construction to invertible states.}
    \label{fig:roadmap2}
\end{figure}

However, it is actually not possible to provide a quantum Hall model for which this index on SRE states is non-zero. Indeed, for a translation-invariant system of non-interacting electrons, the (stable) SRE condition is equivalent to the triviality of the bundle, hence the vanishing of the Hall conductance and accordingly the vanishing of our index. To model non-trivial Hall states, in \cref{sec:MBI} we extend the construction to invertible states. Those are states $\omega$ for which there exists some auxiliary state $\omega'$ over $\cA'$ such that $\omega \hat\otimes\omega'$ is itself SRE on $\cA\hat\otimes\cA'$. $\cA'$ is another CAR algebra which implements extra degrees of freedom on each lattice site, and $\hat \otimes$ is the stacking operation. In the non-interacting case, this amounts to considering the direct sum of two bundles having opposite Chern numbers; the Chern number is additive under direct sums. We then reproduce the previous construction, with the main subtlety of having two charge operators $Q$ and $Q'$ for each layer. The flux insertion is performed on the two layers with $Q + Q'$, but the symmetry appearing in our index is $Q$, acting on the first layer only. Physically, this amounts to a bona fide flux insertion across both layers, which is essential to have $\rho$-locally comparable states at the end of the process, but a measurement of charge transport only on the first, physical one. The construction is summarized in \cref{fig:roadmap2}. It leads to \cref{def: THE Index} and \cref{thm:main}, the main result of \cref{sec:MBI}.

We consider here exclusively the `bulk picture' of the quantum Hall effect. As pointed out in the introduction, the technicalities of the construction we provide in this section are heavily inspired by~\cite{BachmannBolsRahnama24}. There, the additional assumption of time reversal invariance implies that there is no net charge transport and the interest is on the remaining $\mathbb{Z}_2$-valued index which can be non-trivial even for SRE states. In the infinite planar geometry we consider here, the construction of a flux insertion automorphism for invertible states goes back to~\cite{kapustin2020hall}, which in turn was inspired by the flux threading procedure of~\cite{bachmann2020many,bachmann2018quantization} on large but finite tori. From the point of view of these works, the goal of this section is merely to point out that they provide a concrete physical setting in which the general abstract index of the previous section is realized. Still in the many-body, interacting setting, the Hall conductance and its quantization have also been described as a Thouless pump and~\cite{bachmann2022classification, KapustinArtymowicz} provide a different, albeit related picture of its quantization. For weakly interacting fermions, \cite{giuliani17} provides a rigorous renormalization group approach to the quantization of the quantum Hall conductance: while the previous works rely on the assumption of a spectral gap, this result proves the stability of the gap in the perturbative regime. Recently, \cite{teufel25} have extended the strongly interacting framework to some cases where the gap closes. This is in fact closely related to~\cite{hastings2015quantization}, which initiated the present line of work. While these works are all concerned with microscopic lattice systems, the quantum Hall effect was analyzed earlier from a field theoretic point of view, see e.g.~\cite{frohlich1991large} or the reviews~\cite{Froehlich, frohlich1997classification}. Finally, \cref{sec:free} will establish the equality of the Hall index defined in this work for invertible states with the Hall index for a pair of projections, and through it with the many indices long known to be equivalent to it, whenever they can be compared: spectral flow, Chern number or K-theoretic indices. In particular, we will prove that the trace-class assumption of the general Theorem~\ref{thm:correspondence between many-body and single-particle indices} follows from the locality of the Hamiltonian and the gap assumption.

\subsection{Algebraic framework}\label{sec:algebraic framework}

Fermionic observables are elements of the Canonical Anti-commutation Relation (CAR) algebra $\cA$ over the one-particle Hilbert space ${\caH} = \ell^2(\mathbb Z^2)\otimes \mathbb C^n$ where $n \in \mathbb N$ takes into account internal degrees of freedom such as spin. The algebra $\cA$ is the unital $C^*$-algebra generated by an identity  element $\Id$
and the annihilation operators $\{a(f):f\in{\caH}\}$, which satisfy the CAR
\begin{equation}
    \{a(f),a\str(g)\} = \langle g,f\rangle \Id,\qquad
    \{a(f),a(g)\} =\{a\str(f),a\str(g)\} = 0
\end{equation}
for all $f,g\in{\caH}$, and where $a\str(f) = (a(f))\str$ are the creation operators. By picking the orthonormal basis $\{\delta_x\otimes e_i:x\in\bbZ^2,i\in\{1,\ldots,n\}\}$ of ${\caH}$, the algebra is also generated by $a_{x,i},a\str_{y,j}$ for $x,y \in \mathbb Z^2$ and $i,j \in \{1,\ldots, n\}$, where $a_{x,i}^\sharp = a^\sharp(\delta_x\otimes e_i)$. Here and in the following $a^\sharp$ stands for either $a$ or $a\str$.

For any (possibly infinite) $\Lambda \subset \mathbb Z^2$ we denote by $\cA_\Lambda$ the unital $C^*$-subalgebra of $\cA$ generated by $\Id$ and the $a_{x,i}$ for $x\in\Lambda$ only. When $\Lambda = \{x\}$, we denote $\caA_x = \caA_{\{x\}}$. There is a natural hierarchy of inclusions: if $\tilde{\Lambda}\subseteq\Lambda$ then $\cA_{\tilde{\Lambda}}$ is  identified with the subalgebra $\cA_{\tilde{\Lambda}}\otimes \Id_{\Lambda\setminus\tilde\Lambda}$ of $\cA_\Lambda$. Let $\cF$ be the set of finite subsets of $\mathbb Z^2$. An operator $A \in \cA_\Lambda$ with $\Lambda \in \cF$ is called \emph{local}. The smallest (in the sense of set inclusion) $\Lambda \in \cF$ such that $A \in \cA_\Lambda$ is called the \emph{support} of $A$, denoted by $\mathrm{supp}(A)$. The $*$-subalgebra 
$$
\cAloc = \bigcup_{\Lambda \in \cF} \cA_\Lambda
$$
is called the algebra of local observables. By the definition of $\cA$,  $\cA\equiv\overline \cAloc$, in the $\| \cdot \|$-topology.

\paragraph{Almost local observables.}

While $\cAloc$ is extremely convenient for keeping track of where in space observables act, as we shall recall below, it is in general not invariant under the Schr\"odinger time evolution. For that reason, we work with a standard invariant algebra, originally introduced in~\cite{Scattering}, which still keeps track of the ``support center'' of observables. It can be defined as follows. Let
$$
\mathcal L = \{f : [0,\infty)\to (0,\infty)\,|\, f\text{ is bounded, non-increasing, and } \forall p>0,\, \lim_{ r\to \infty} r^p f(r)=0\}.
$$

For $x \in \mathbb Z^2$ and $r \in \mathbb N$ we denote by $B_x(r)$ the ball of center $x$ and radius $r$ within $\bbZZ^2$. An observable $A \in \cA$ is called $f$-\emph{localized} near $x \in \mathbb Z^2$ if there exists $f \in \mathcal L$, a sequence $A_n \in \cA_{B_x(n)}$ such that 
$$
\| A - A_n\| \leq f( n ) \|A\|
$$
for all $n \in \mathbb N$. An observable is \emph{almost local} iff it is $f$-localized for some function $f\in\calL$ and some $x \in \mathbb Z^2$. We denote by $\cAal$ the $*$-algebra of almost local observables. Since $\cAloc \subset \cAal$, the latter is also dense in $\cA$.

\paragraph{Parity.} The \emph{fermionic parity} is the unique $*$-automorphism $\theta:\cA\to\cA$ which satisfies $\theta(a_{x,i}) = - a_{x,i}$ for all
$x \in \mathbb Z^2$ and $i \in \{1,\ldots, n\}$. Due to $\theta$'s local structure, for any $\Lambda \subset \mathbb Z^2$ one has $\theta(\cA_\Lambda) = \cA_\Lambda$. An element $A\in \cA$ is called \emph{even} if $\theta(A)=A$, \emph{odd} if $\theta(A) =-A$, and \emph{homogeneous} if it is either odd or even. For $\Lambda \subset \mathbb Z^2$ we denote by
$
\cA^+_\Lambda = \{A \in \cA_\Lambda \, | \, \theta(A) = A\}
$
the $C^*$-subalgebra of even elements in $\cA_\Lambda$, and identify $\cA^+_{\mathbb Z^2}$ with $\cA^+$.

\paragraph{$0$-chains and autonomous dynamics.}
    Let $f\in\calL$ be given. An $f$-local 0-chain $F$~\cite{KSNoether} is a sequence $(F_x)_{x\in \mathbb Z^2}\subseteq\cAal\cap\cA^+$ such that
    \begin{itemize}
        \item For all $x \in \mathbb Z^2$, $F_x$ is self-adjoint and $f$-localized near $x$.
        \item $\displaystyle \sup_{x\in \mathbb Z^2} \| F_x \| < \infty$.
    \end{itemize}

\begin{prop}\label{prop:chain_derivation}
Let $F$ be an $f$-local 0-chain. Then the densely defined map $\delta^F:\cAal\to\cAal$ given by
$$
 A  \mapsto  \iu[F,A]:= \sum_{x\in \mathbb Z^2} \iu[F_x,A]
$$
is a *-derivation on $\cA$.
\end{prop}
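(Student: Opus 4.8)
The plan is to verify three things in turn: that the series $\sum_{x}\iu[F_x,A]$ converges in norm for every $A\in\cAal$, that its limit again lies in $\cAal$, and that the resulting linear map obeys the Leibniz and $*$-identities. Since $\cAloc\subseteq\cAal$ is dense in $\cA$, this will exhibit $\delta^F$ as a densely defined $*$-derivation. The algebraic identities are formal once convergence is secured, so the genuine content is the two analytic statements, and both rest on a single commutator-decay estimate.

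First I would establish that decay estimate. Fix $A\in\cAal$ that is $g$-localized near some $y$ with $g\in\calL$, and recall that each $F_x$ is even and $f$-localized near $x$. Writing $d=|x-y|$ and choosing a scale $n=\big\lfloor(d-1)/2\big\rfloor$, approximate $F_x$ by $F_x^{(n)}\in\cA_{B_x(n)}$ and $A$ by $A^{(n)}\in\cA_{B_y(n)}$; replacing $F_x^{(n)}$ by its even part (which does not spoil the approximation, since $\theta$ is isometric and fixes $F_x$) we may assume $F_x^{(n)}\in\cA^+_{B_x(n)}$. Because an even observable commutes with anything supported in a disjoint region, $[F_x^{(n)},A^{(n)}]=0$ once $2n<d$, and the triangle inequality together with $\|F_x^{(n)}\|\le 2\|F_x\|$ gives
\[
\norm{[F_x,A]} \;=\; \norm{[F_x-F_x^{(n)},A]+[F_x^{(n)},A-A^{(n)}]} \;\le\; C\,\norm{F_x}\,\norm{A}\,\big(f(n)+g(n)\big),
\]
with $C$ a universal constant. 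Since $f,g\in\calL$ decay faster than any polynomial, $\norm{[F_x,A]}$ decays super-polynomially in $|x-y|$.

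With this in hand, convergence follows from volume counting in $\ZZ^2$: there are $O(k)$ sites at distance $k$ from $y$, while $\sup_x\norm{F_x}<\infty$, so
\[
\sum_{x\in\ZZ^2}\norm{[F_x,A]} \;\le\; C'\Big(\sup_x\norm{F_x}\Big)\norm{A}\sum_{k\ge1}k\,\phi(k)\;<\;\infty,
\]
where $\phi(k):=f(\lfloor(k-1)/2\rfloor)+g(\lfloor(k-1)/2\rfloor)$ decays faster than any polynomial; hence $\delta^F(A)$ is a well-defined element of $\cA$. To see $\delta^F(A)\in\cAal$, localized near $y$, I would split the sum at scale $N/2$. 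The tail $\sum_{|x-y|>N/2}\iu[F_x,A]$ is super-polynomially small in $N$ by the estimate above, while in the head $\sum_{|x-y|\le N/2}\iu[F_x,A]$ each term is approximated by $\iu[F_x^{(N/2)},A^{(N)}]\in\cA_{B_y(N)}$ (using $|x-y|+N/2\le N$) with per-term error $\le C\big(f(N/2)+g(N)\big)\norm{F_x}\norm{A}$; summed over the $O(N^2)$ relevant sites the head error is again super-polynomially small. A non-increasing, super-polynomially decaying envelope $h\in\calL$ of the total error then exhibits $\delta^F(A)$ as $h$-localized near $y$, hence almost local (the case $\delta^F(A)=0$ being trivial).

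Finally, the derivation and $*$ properties are purely algebraic. Expanding $[F_x,AB]=[F_x,A]B+A[F_x,B]$ termwise and invoking joint continuity of multiplication together with the norm convergence of $\sum_x\iu[F_x,A]$ and $\sum_x\iu[F_x,B]$ (all of $A,B,AB$ lie in the $*$-algebra $\cAal$) yields $\delta^F(AB)=\delta^F(A)B+A\,\delta^F(B)$; and from $F_x^*=F_x$ one computes $[F_x,A]^*=-[F_x,A^*]$, so that $\big(\iu[F_x,A]\big)^*=\iu[F_x,A^*]$ and therefore $\delta^F(A)^*=\delta^F(A^*)$. The main obstacle is the commutator-decay estimate and, more precisely, its deployment in the head/tail split needed to prove that $\delta^F(A)$ is itself almost local rather than merely norm-convergent; once that is in place, the convergence bound and the algebraic identities are routine.
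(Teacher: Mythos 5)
Your proof is correct, and it uses the same basic toolkit as the paper's: approximation of $F_x$ and $A$ by strictly local elements at a suitable scale, together with the evenness of $F_x$ (essential in a fermionic algebra, where odd elements anticommute rather than commute at disjoint supports). However, you deploy the decomposition in the complementary regime, and the difference is worth spelling out. The paper writes, for $n \ge 2\norm{x-x_A}$, the splitting $[F_x,A] = [F_{x,n/2},A_{n}]+[F_{x},A-A_{n}]+[F_x-F_{x,n/2},A]$; there the supports of $F_{x,n/2}$ and $A_n$ overlap, the leading commutator is kept as the local approximant, and the conclusion is that each $[F_x,A]$ is $g$-localized near $x_A$ with $g$ uniform in $x$ --- after which the summability of $\sum_x [F_x,A]$ is asserted ("it follows") rather than derived. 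You instead choose the scale so that $2n < |x-y|$, making the supports of the two approximants disjoint, so the leading commutator vanishes by parity; this yields the quantitative bound $\norm{[F_x,A]}\le C\norm{F_x}\,\norm{A}\,(f(n)+g(n))$, i.e.\ super-polynomial decay in $|x-y|$, from which summability follows by volume counting in $\ZZ^2$, while almost-locality of the sum then needs your separate head/tail argument. In effect, the two choices of scale answer different questions: the paper's gives localization of each term but no norm decay in $x$, yours gives the decay but must then reassemble localization of the total sum. Your version is the more complete route, since the decay estimate is exactly what makes the convergence of $\sum_x[F_x,A]$ true, and your explicit appeal to parity (via the requirement that a $0$-chain lies in $\cA^+$) makes visible a hypothesis that the paper's written argument uses only implicitly.
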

\noindent In general, $\delta^F$ is an unbounded operator on $\caA$.
\begin{proof}
    Let $A$ be $f_A$-localized near some $x_A\in\bbZZ^2$ for some $f_A\in\calL$ and let $C := \sup_x \norm{F_x}$. We claim that $\delta^F(A)$ is $g$-localized near $x_A$ for some $g\in\caL$. Let $n\geq 2\norm{x-x_A}$ then
    \begin{equation*}
        [F_x,A]
        = [F_{x,n/2},A_{n}]+[F_{x},A-A_{n}]+[F_x-F_{x,n/2},A]
    \end{equation*}
with the first term being supported on $B_{x_A}(n)$ and both remaining terms are bounded above by $C\Vert A\Vert g(n)$ for some $g\in\caL$, where $g$ can be chosen to be independent of $x$. Hence $[F_x,A]$ is  $g$-localized near $x_A$. It follows that $\sum_{x\in \mathbb Z^2} [F_x,A]$ is summable and defines an element in~$\cAal$.
\end{proof}
For an $f$-local 0-chain $F$, the derivation $\delta^F$ generates a strongly continuous one-parameter group of $*$-automorphisms $\mathbb R\ni t\mapsto\alpha_t^F$ on $\cA$ which is defined by $\alpha_t^F := \exp(t \delta^F)$, i.e., as the solution to
$$\alpha^F_0(A) = A, \qquad \dfrac{\dd }{\dd t}\alpha_t^F(A) = \alpha_t^F(\delta^F(A))
$$
on $\cAal$ and extends by continuity to all of $\caA$. Furthermore, it satisfies the semi-group property $\alpha^F_{t} \circ \alpha^F_s = \alpha^F_{t+s}$. 

\paragraph{Non-autonomous setting.} The use of quasi-adiabatic flow and parallel transport below requires that we extend the formalism beyond the autonomous setting. A \emph{time-dependent $0$-chain} is a family $(F(s))_{s \in \mathbb R}$ such that:
\begin{itemize}
    \item For any $s_0 \in \mathbb R$ there exists $f\in\caL$ such that $F(s)$ is an $f$-localized 0-chain for all $|s|<s_0$. 
    \item $\displaystyle \sup_{x\in \mathbb Z^2, s \in \mathbb R} \|F_x(s)\| < \infty$.
    \item $s \mapsto F_x(s)$ is norm-continuous for all $x \in \mathbb Z^2$. 
\end{itemize}

As in the time-independent case, a time-dependent $0$-chain generates a non-autonomous time evolution $\alpha_{s\to t}^F$ for $s,t \in \mathbb R$ on $\caA$ defined by
$$\alpha^F_{s\to s}(A) = A, \qquad \dfrac{\dd }{\dd t}\alpha_{s\to t}^F(A) = \alpha_{s\to t}^F(\delta^{F(t)}(A))\qquad(A\in\cAal)\,.
$$ The semigroup property is replaced by the following cocycle property $\alpha_{t'\to t} \circ \alpha_{s\to t'}=\alpha_{s\to t}$. 

\begin{rem}
    From now on, unless stated, all the $0$-chains that we consider are time-dependent. For convenience we shall slightly abuse notation and denote $\alpha^F_{t}:=\alpha^F_{0\to t}$.
\end{rem}

The next result is a consequence of the Lieb-Robinson bound (see for example~\cite{teufel2025LRB} for a strong version that applies in our setting, and the references therein) and it is the main reason for the introduction of the algebra $\cAal$. 
\begin{prop}
   Let $F$ be a $0$-chain. Then, for all $t,s \in \mathbb R$, 
   $$
   \alpha_{s\to t}^F(\cAal) \subset \cAal.
   $$
   Moreover, $\alpha_{s\to t}^F$ is homogeneous in the sense that $\alpha_t^F \circ \theta  = \theta \circ \alpha_t^F$.
\end{prop}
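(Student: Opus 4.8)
The plan is to treat the two assertions separately: the homogeneity is a soft consequence of the structure of the generator, while the invariance of $\cAal$ is where the Lieb-Robinson bound enters. I would establish the homogeneity first, and then use it in the invariance argument. Since $F$ is a $0$-chain, each $F_x(t)\in\cA^+$ is even, so $\theta(F_x(t))=F_x(t)$, and because $\theta$ is a $*$-automorphism it commutes with the generating derivation:
$$
\theta\br{\delta^{F(t)}(A)} = \iu[\theta(F(t)),\theta(A)] = \iu[F(t),\theta(A)] = \delta^{F(t)}\br{\theta(A)}\qquad(A\in\cAal).
$$
Consequently both $t\mapsto\theta\circ\alpha^F_{s\to t}$ and $t\mapsto\alpha^F_{s\to t}\circ\theta$ solve the same non-autonomous evolution equation $\tfrac{\dd}{\dd t}\Phi_t = \Phi_t\circ\delta^{F(t)}$ defining $\alpha^F_{s\to t}$, with identical initial datum $\theta$ at $t=s$; uniqueness of the solution on $\cAal$, together with the density of $\cAal$ in $\cA$, forces the two to agree, giving $\alpha^F_t\circ\theta=\theta\circ\alpha^F_t$.

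For the invariance, I would fix $A\in\cAal$ that is $f$-localized near some $x\in\mathbb Z^2$ and construct explicit local approximants to $\alpha^F_{s\to t}(A)$. The tool is the normalized, parity-preserving conditional expectation $\mathbb E_{B_x(n)}:\cA\to\cA_{B_x(n)}$ onto the algebra of the ball; setting $A_n:=\mathbb E_{B_x(n)}\br{\alpha^F_{s\to t}(A)}$, one has the standard local-approximation estimate
$$
\norm{\alpha^F_{s\to t}(A)-A_n}\leq 2\sup_{B\in\cA_{B_x(n)^c},\,\norm{B}\leq 1}\norm{[\alpha^F_{s\to t}(A),B]},
$$
valid for homogeneous observables, to which one reduces via the even/odd splitting $A=\tfrac12(A+\theta(A))+\tfrac12(A-\theta(A))$ already licensed by the homogeneity just proved (using $\alpha^F_{s\to t}(A)_\pm=\alpha^F_{s\to t}(A_\pm)$). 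The Lieb-Robinson bound for the $0$-chain dynamics then controls the right-hand side: for $B$ supported in the complement of $B_x(n)$ it gives $\norm{[\alpha^F_{s\to t}(A),B]}\leq \norm{A}\norm{B}\,h_{|t-s|}(n)$, where $h_{|t-s|}$ inherits faster-than-polynomial decay from the facts that $A$ is $f$-localized with $f\in\calL$ and that $F$ is an $f$-local $0$-chain. Hence $h_{|t-s|}\in\calL$ and $\alpha^F_{s\to t}(A)$ is $h_{|t-s|}$-localized near $x$, i.e. almost local.

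The main obstacle is the second part, and within it the extraction of a decay function that genuinely lies in $\calL$: one must feed the polynomial localization tails of both $A$ and the interaction terms $F_x$ into the Lieb-Robinson estimate and verify, for each fixed time interval $|t-s|$, that the resulting bound decays faster than every polynomial in $n$, the velocity-dependent light-cone shift being harmless at fixed time. The conditional-expectation step also requires the usual fermionic care, since the clean commutator bound holds on the even sector while odd observables in disjoint regions anticommute rather than commute; this is precisely why reducing to homogeneous $A$ through the already-established homogeneity is convenient. The homogeneity itself is essentially immediate once one notes that the $0$-chain is built from even observables.
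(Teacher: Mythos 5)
Your overall strategy is the right one; in fact the paper offers no proof of this proposition at all, deferring entirely to the Lieb--Robinson literature (\cite{teufel2025LRB} and references therein), and your two-step plan --- homogeneity from evenness of the $0$-chain plus uniqueness of the non-autonomous evolution, then invariance of $\cAal$ via a conditional-expectation estimate fed by the Lieb--Robinson bound --- is precisely the standard argument those references carry out. The homogeneity part of your proof is complete and correct.

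The gap is in the odd sector of the invariance argument. Your displayed estimate
\[
\norm{\alpha^F_{s\to t}(A)-A_n}\leq 2\sup_{B\in\cA_{B_x(n)^c},\,\norm{B}\leq 1}\norm{[\alpha^F_{s\to t}(A),B]}
\]
is asserted to be ``valid for homogeneous observables'', but it holds only for \emph{even} ones, and it genuinely fails for odd ones: if $O$ is odd and supported exactly in $B_x(n)$, and $B$ is odd and supported in the complement, then $OB=-BO$, so $\norm{[O,B]}=2\norm{OB}$, which can be as large as $2\norm{O}\norm{B}$ even though the left-hand side vanishes. Hence the right-hand side of your estimate is \emph{not} small for odd observables, no matter how well localized they are. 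Splitting $A$ into homogeneous parts (correctly licensed by the homogeneity of $\alpha^F_{s\to t}$) reduces you to even $A$ and odd $A$, but it does not, by itself, tell you how to treat the odd part --- and your closing remark suggests that the reduction to homogeneous observables settles the matter, which it does not. The standard repair, missing from your sketch, is one of the following: (a) use the fermionic Lieb--Robinson bound in its graded form, which controls anticommutators $\norm{\{\alpha^F_{s\to t}(A),B\}}$ when both observables are odd, together with the corresponding approximation lemma in which anticommutators with odd $B$ replace commutators; or (b) reduce the odd case to the even one by multiplying by a fixed odd on-site unitary, e.g.\ $v=a_{x,1}+a^\ast_{x,1}$: the observable $v\,\alpha^F_{s\to t}(A)$ is even, the even-sector argument applies to it (note that for odd $B$ in the complement one has $[v\,\alpha^F_{s\to t}(A),B]=v\{\alpha^F_{s\to t}(A),B\}$, so this is again the graded bound in disguise), and multiplying the resulting approximants by $v^\ast\in\cA_x\subset\cA_{B_x(n)}$ yields approximants for $\alpha^F_{s\to t}(A)$ itself with the same decay function. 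With either repair your argument closes; without it, the proof as written does not cover odd $A\in\cAal$.
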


\begin{defn}[LGA]
    An automorphism $\alpha$ is called a \emph{locally generated automorphism} (LGA) if there exists a $0$-chain $F$ and some $s\in[0,\infty)$ such that $\alpha = \alpha_s^F$.
\end{defn}

\paragraph{Charge and symmetry}

The relevant symmetry to consider in the quantum Hall effect is charge conservation, which was already defined in \cref{eq:U(1) on CAR} but we give here a more concrete description. The charge operator at any site $x\in\bbZ^2$ is given by
\begin{equation}\label{eq:defQx}
Q_x = \sum_{i=1}^n a^*_{x,i} a_{x,i},
\end{equation}
which accounts for internal degrees of freedom.  Since $Q_x$ is supported on the single site $x$, the collection $\{Q_x:x\in\bbZ^2\}$ defines a (time-independent) $0$-chain which is trivially $f$-localized. For $\Lambda \in\caF$ the local charge $Q^\Lambda = \sum_{x \in \Lambda} Q_x$ is supported in $\Lambda$ and it has integer spectrum. For $\phi \in \mathbb R$ we denote by $\rho_\phi = \alpha_\phi^Q$ and $\rho_\phi^\Lambda = \alpha_\phi^{Q^\Lambda}$ the LGA respectively associated to $Q$ and $Q^\Lambda$.
\begin{prop}\label{prop:U(1) trans}
For any $\Lambda \in \cF,\phi\in\mathbb{R}$ and $A\in\caA$, one has $
\rho_\phi^\Lambda(A) = \ee^{\Idi \phi Q^\Lambda} A \ee^{-\Idi \phi Q^\Lambda}$. In particular, $\rho^\Lambda_{\phi+2\pi}=\rho^\Lambda_{\phi}$. Moreover,
\begin{equation*}
    \rho_\phi(A) = \lim_{\Lambda \to \mathbb Z^2} \ee^{\Idi \phi Q^\Lambda} A \ee^{-\Idi \phi Q^\Lambda}
\end{equation*}
for all $\phi\in\mathbb{R}$ and $\rho_{\phi+2\pi}=\rho_{\phi}$.
\end{prop}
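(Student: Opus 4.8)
The plan is to first settle the finite-volume identity and then pass to the thermodynamic limit on local observables. For fixed finite $\Lambda$, the charge $Q^\Lambda=\sum_{x\in\Lambda}Q_x$ is a bounded self-adjoint element of $\caA_\Lambda$, so $\delta^{Q^\Lambda}=\Idi[Q^\Lambda,\cdot\,]$ is a \emph{bounded} inner derivation and $\rho^\Lambda_\phi=\ee^{\phi\,\delta^{Q^\Lambda}}$ is given by a norm-convergent series. Both $\phi\mapsto\rho^\Lambda_\phi(A)$ and $\phi\mapsto\ee^{\Idi\phi Q^\Lambda}A\,\ee^{-\Idi\phi Q^\Lambda}$ solve the linear ODE $\dot X=\delta^{Q^\Lambda}(X)$ with $X(0)=A$, so by uniqueness (Picard, valid since the generator is bounded) they coincide, which is the first identity. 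The periodicity $\rho^\Lambda_{\phi+2\pi}=\rho^\Lambda_\phi$ is then immediate: $Q^\Lambda$ has integer spectrum, hence $\ee^{2\pi\Idi Q^\Lambda}=\Id$ and the conjugating unitaries at $\phi$ and $\phi+2\pi$ coincide.

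For the limit I would reduce to $A\in\cAloc$, say with $S:=\supp(A)$, and show the approximants stabilize. Writing $Q^\Lambda=Q^S+Q^{\Lambda\setminus S}$ for $\Lambda\supseteq S$ and using that each $Q_x$ is \emph{even} and supported at the single site $x$, the operator $\ee^{\Idi\phi Q^{\Lambda\setminus S}}$ commutes with $A$, so the outer factors cancel and
\[
    \ee^{\Idi\phi Q^\Lambda}A\,\ee^{-\Idi\phi Q^\Lambda}=\ee^{\Idi\phi Q^S}A\,\ee^{-\Idi\phi Q^S}=:\beta_\phi(A)\qquad(\Lambda\supseteq S).
\]
It remains to identify $\beta_\phi$, which is conjugation by the unitary $\ee^{\Idi\phi Q^S}\in\caA_S$ and hence a genuine $*$-automorphism of $\caA$, with $\rho_\phi$ on $\caA_S$. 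Both are $*$-automorphisms, so it suffices to compare them on the generators $a_{x,i}$, $x\in S$: from $[Q^S,a_{x,i}]=[Q_x,a_{x,i}]=-a_{x,i}$ one gets $\beta_\phi(a_{x,i})=\ee^{-\Idi\phi}a_{x,i}$, which coincides with $\rho_\phi(a_{x,i})=\ee^{-\Idi\phi}a_{x,i}$ (the generator of $\rho_\phi=\alpha^Q_\phi$ satisfies $\delta^Q(a_{x,i})=\Idi[Q_x,a_{x,i}]=-\Idi a_{x,i}$). Hence $\beta_\phi$ and $\rho_\phi$ agree on all of $\caA_S$, so the sequence $\ee^{\Idi\phi Q^\Lambda}A\,\ee^{-\Idi\phi Q^\Lambda}$ is eventually constant and equal to $\rho_\phi(A)$; in particular the limit exists on $\cAloc$ and equals $\rho_\phi$ there.

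To finish I extend from the dense subalgebra $\cAloc$ to all of $\caA$: since $\rho_\phi$ and every conjugation $A\mapsto\ee^{\Idi\phi Q^\Lambda}A\,\ee^{-\Idi\phi Q^\Lambda}$ are isometric, a standard $3\varepsilon$ argument upgrades the pointwise limit on $\cAloc$ to convergence on all of $\caA$. The global periodicity $\rho_{\phi+2\pi}=\rho_\phi$ follows the same way: for local $A$ supported on $S$ the integer spectrum of $Q^S$ gives $\ee^{\Idi(\phi+2\pi)Q^S}=\ee^{\Idi\phi Q^S}$, so $\rho_{\phi+2\pi}(A)=\rho_\phi(A)$, and density propagates this to $\caA$. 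I expect the only genuinely delicate point to be the decoupling step of the second paragraph—the cancellation of the distant factors $\ee^{\Idi\phi Q^{\Lambda\setminus S}}$ and the identification of $\beta_\phi$ with $\rho_\phi$ on $\caA_S$—which relies crucially on the \emph{evenness} of the charges $Q_x$ (so that commutators with observables localized away from $x$ vanish despite the fermionic anticommutation relations).
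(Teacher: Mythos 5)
Your proof is correct and follows essentially the same route as the paper's: periodicity from the integer spectrum of $Q^\Lambda$, eventual constancy of the conjugation sequence for $A\in\cAloc$, and a density argument for general $A\in\caA$. You simply make explicit the steps the paper treats as immediate — the ODE-uniqueness identification of $\rho^\Lambda_\phi$ with the conjugation, the evenness of the $Q_x$ behind the decoupling, and the comparison with $\rho_\phi$ on generators — all of which are sound.
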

\begin{proof}
The periodicity follows from the integrality of the spectrum of $Q^\Lambda$ and the group property of $\ee^{\Idi \phi Q^\Lambda}$. Existence of the limit is immediate for any $A\in\cAloc$ since the sequence is eventually constant. The general case follows by density.
\end{proof}
 Notice that the same limiting procedure yields an automorphism $\rho^\Lambda_{\phi}$ for any other infinite subset of $\bbZ^2$. Below, we shall in particular consider $\Lambda$ to be a half-plane.

The LGA $\rho_\phi$ is the $U(1)$- (equivalently $\mathbb{S}^1$-) transformation associated with charge conservation. A $0$-chain $F$ is called \emph{$Q$-preserving} (or \emph{$U(1)$-symmetric}) if $\rho_\phi(F_x(s)) = F_x(s)$ for all $\phi \in \mathbb R$, $x \in \mathbb Z^2$ and $s \in \mathbb R$. If $F$ is a  $Q$-preserving $0$-chain, then $\alpha_{s\to t}^F$ is such that $\rho_\phi \circ \alpha_{s\to t}^F  = \alpha_{s\to t}^F\circ \rho_\phi$ for all $s,t,\phi \in \mathbb R$. Consequently, an LGA is called \emph{$Q$-preserving} if there exists a $Q$-preserving $0$-chain which generates it. 

\subsection{Symmetric SRE  states}

A state $\omega_0 \in \cS(\cA)$ is called a \emph{product state} iff 
$$
\forall X,Y \subset \mathbb Z^2, \quad X\cap Y = \emptyset, \quad A\in\caA_X,B\in\caA_Y, \qquad \omega_0(A B) = \omega_0(A) \omega_0(B).
$$

\begin{lem}
    Let $\omega_0$ be a pure state. Then $\omega_0$ is a product state if and only if its restriction to $\cA_x$ is pure for any $x \in \mathbb Z^2$.
\end{lem}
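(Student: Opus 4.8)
The plan is to reduce both implications to one elementary fact about reduced density matrices: if $\rho$ is a state on $\mathcal B(\caH_1\otimes\caH_2)$ (finite dimensional) whose partial trace onto the first factor is a rank-one projection, then $\rho=(\mathrm{Tr}_2\,\rho)\otimes(\mathrm{Tr}_1\,\rho)$; dually, a tensor-product state is pure precisely when both factors are pure. What makes this applicable is that each single-site algebra is a \emph{full matrix algebra}, $\caA_x\cong\mathcal B(\caF_x)$ with $\caF_x$ the $2^n$-dimensional one-site Fock space, so that for any finite $\Lambda\ni x$, once the Jordan--Wigner ordering places $x$ first, $\caA_x$ sits inside $\caA_\Lambda\cong\mathcal B(\caF_\Lambda)$ as a genuine tensor factor $\mathcal B(\caF_x)\otimes\Id$.

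For the forward implication I would argue that a pure product state has pure marginals. The product property identifies $\omega_0$ with the (graded) tensor product $\omega_0|_{\caA_x}\,\hat\otimes\,\omega_0|_{\caA_{x^c}}$ on the dense subalgebra spanned by $\caA_x\cdot\caA_{x^c}$, so its GNS representation factorizes accordingly; irreducibility of $\pi_{\omega_0}$ (equivalent to purity) then forces each tensor factor to act irreducibly, i.e.\ each marginal to be pure. Equivalently, a nontrivial splitting $\omega_0|_{\caA_x}=t\phi_1+(1-t)\phi_2$ would lift to a nontrivial splitting of $\omega_0$, contradicting purity.

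For the converse I would start from the hypothesis that every $\omega_0|_{\caA_x}$ is pure, so that on each finite $\Lambda\ni x$ the reduced density matrix $\rho_x$ is a rank-one projection. Ordering $x$ first and applying the bipartite fact gives $\rho_\Lambda=\rho_x\otimes\sigma$; iterating site by site yields $\rho_\Lambda=\bigotimes_{y\in\Lambda}\rho_y$ as a Hilbert-space tensor product of one-site pure states. From this I would verify the algebraic factorization $\omega_0(AB)=\omega_0(A)\omega_0(B)$ for $A\in\caA_X$, $B\in\caA_Y$ with $X,Y\subset\Lambda$ finite and disjoint by peeling off one site at a time, and then extend to arbitrary disjoint $X,Y$ by density of $\cAloc$.

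The step I expect to be genuinely delicate is the fermionic $\mathbb Z_2$-grading, which is why the tensor reasoning above is not yet a proof. Since $\caA_x$ and $\caA_{x^c}$ do not commute — odd elements on disjoint supports anticommute — the Hilbert-space factorization of $\rho_\Lambda$ does not by itself give the algebraic factorization of $\omega_0$ on \emph{odd} observables, because the Jordan--Wigner parity string of an odd $B\in\caA_Y$ threads through the sites of $X$. The observation that rescues the argument is that pure marginals are rigid under the grading: if the marginals at two distinct sites $x\neq y$ were both pure and non-even, then the parity string emanating from $x$ would rescale all odd expectations at $y$ by $\langle\psi_x|P_x|\psi_x\rangle$, a number of modulus strictly less than $1$, which a short determinant computation in $\mathcal B(\caF_y)$ shows is incompatible with the marginal at $y$ being rank one. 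Hence at most one site carries a non-even marginal, every parity string acts as a scalar $\pm1$ away from that exceptional site, and in the remaining odd case the correlation correction is nonzero only when the exceptional site lies in $X$ — but then $Y$ is all-even and the companion expectation $\omega_0(B)$ over $Y$ vanishes, so both sides of the putative factorization are zero. This is the same phenomenon already exploited in \cref{example:shift}, and I expect this grading bookkeeping, rather than the entanglement input, to be the crux.
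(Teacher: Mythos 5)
Your forward direction coincides with the paper's: a nontrivial splitting of a marginal lifts to a nontrivial convex splitting of $\omega_0$, contradicting purity (both you and the paper pass over the fact that the lifted product functional $\phi_1\otimes\omega_0|_{\caA_{x^c}}$ must be shown to be \emph{positive}, which in the fermionic setting is the Araki--Moriya product-extension theorem and genuinely uses evenness of one factor). The divergence is in the converse, and there your route is not just different from the paper's --- it supplies content the paper's proof does not contain. The paper's converse is the single observation that a convex decomposition of $\omega_0$ restricts to a convex decomposition of each marginal, and it concludes purity of $\omega_0$; but purity of $\omega_0$ is a hypothesis of the lemma, and the statement to be proved in that direction is the product property, which the paper's sentence never touches (nor does it engage the grading at all). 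Your argument --- Jordan--Wigner ordering, the bipartite fact that a rank-one reduced density matrix splits off as a tensor factor, site-by-site iteration, and the rigidity claim that at most one site can carry a pure non-even marginal --- is what an actual proof of this implication requires, and your identification of the parity strings as the crux is exactly right: if the marginal at $x$ is pure non-even, odd expectations at any later site $y$ get rescaled by $c=\langle\psi_x,P_x\psi_x\rangle$ with $|c|<1$, which is incompatible with purity at $y$ unless that marginal is even. Two points of your sketch need polish to become a proof. First, the ``short determinant computation'' is a determinant only for $n=1$; for general internal dimension argue by extremality instead: the marginal at $y$ is $\frac{1+c}{2}\sigma_y+\frac{1-c}{2}P_y\sigma_y P_y$, where $\sigma_y$ is the Jordan--Wigner reduced density matrix, and this is a nontrivial convex combination of distinct density matrices whenever $|c|<1$ and $\sigma_y$ has nonvanishing odd part, contradicting purity; the same extremality argument also shows $\sigma_y$ is rank one, which is what licenses your iteration when the exceptional site happens to be ordered first. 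Second, in the final bookkeeping the case ``$A$ odd with the exceptional site in $X$, $B$ even'' is not a both-sides-vanish case: there the factorization holds with nonzero values, and one must check that the string factors $\prod_y\epsilon_y$ over the sites of $Y$ preceding the exceptional site appear identically in $\omega_0(AB)$ and in $\omega_0(A)\omega_0(B)$; they do, because $\langle\phi_y,P_yB_y\phi_y\rangle=\epsilon_y\langle\phi_y,B_y\phi_y\rangle$ for parity eigenvectors $\phi_y$, but this is the step your sketch compresses into ``strings act as scalars.'' With those two repairs, your proof is complete and is, in my view, the argument this lemma actually needs.
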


\noindent We refer the reader to~\cite{ArakiMoriya2003} for the proof.

\begin{rem}
    Pure product states are the simplest states on $\cA$ in that they have no entanglement between any two disjoint sets of sites, and they are convenient representatives of the trivial phase. By analogy, in the non-interacting, single particle picture, they correspond to ground states of tight-binding Hamiltonians where all lattice sites are disconnected --- the atomic limit. Mathematically these correspond to Fermi projections $P$ which are diagonal in the position basis $P_{xy}=P_{xx}\delta_{xy}$. Indeed, in that case the quasi-free state $\omega_P$ is product.
\end{rem}

\begin{defn}[SRE]
A state $\omega \in \cS(\cA)$ is \emph{short range entangled} (SRE) iff there exists a pure product state $\omega_0$ and an LGA $\alpha$ such that $\omega= \omega_0 \circ \alpha$.
\end{defn}

 Note that an SRE state is necessarily pure, {since automorphisms preserve purity.} 
 
 \begin{defn}
 	A state $\omega \in \cS(\cA)$ is called \emph{$Q$-symmetric} iff $\omega \circ \rho_\phi = \omega$ for all $\phi \in \mathbb R$. We denote $\cSs^Q(\cA)$ the set of $Q$-symmetric states.
 \end{defn}

We remark that, with $Q$ defined as in~\cref{eq:defQx}, a $Q$-symmetric state is necessarily homogeneous, namely $\omega\circ\theta = \omega$ since $\theta = \ep{\iu\pi Q}$.

\begin{defn}
	Let $\omega$ be a pure state and $H$ be a $0$-chain. We say that $\omega$ is a ground state of $H$ iff $\omega(A^*\delta^H(A))\geq0$ for all $A \in \cAloc$. Moreover we say that $\omega$ is a {locally} unique gapped ground state of $H$ with gap $\Delta>0$ if
	$$
	\omega(A^*\delta^H(A)) \geq \Delta \omega(A^*A)
	$$
	for all $A \in \cAloc$ such that $\omega(A)=0$.
\end{defn}

If $\omega$ is a pure state, we say that a 0-chain $H$ is a \emph{parent Hamiltonian} for $\omega$ if $\omega$ is a locally unique gapped ground state of~$H$. An essential, although simple, result of this section is the following lemma, whose proof may be found in~\cite[Lemma 3.3]{BachmannBolsRahnama24}.
\begin{lem}\label{lem:Q parent}
	Any $Q$-symmetric SRE state has a $Q$-preserving parent Hamiltonian.
\end{lem}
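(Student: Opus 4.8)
The plan is to construct a parent Hamiltonian in two moves: first transport the obvious local Hamiltonian of the product state through the disentangling automorphism, and then restore the symmetry by averaging over the charge circle. Concretely, I would write the $Q$-symmetric SRE state as $\omega=\omega_0\circ\alpha$ with $\omega_0$ a pure product state and $\alpha$ an LGA, build a (generally non-symmetric) parent Hamiltonian $H$ for $\omega$ out of $\omega_0$ and $\alpha$, and finally symmetrize $H$ using $\omega\circ\rho_\phi=\omega$. The point of this route is that the $U(1)$-average preserves the spectral gap precisely because $\omega$ — and crucially not $\omega_0$ — is $Q$-symmetric, so one never needs a symmetric disentangler.

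First I would produce a strictly local parent Hamiltonian for $\omega_0$. By the preceding lemma the restriction $\omega_0^x$ to the finite-dimensional matrix algebra $\caA_x\cong M_{2^n}(\CC)$ is pure, hence a vector state given by a rank-one projection $p_x\in\caA_x$, and I set $H^0_x:=\Id-p_x$. These are self-adjoint, even, uniformly bounded, and supported on $\{x\}$, so $H^0=(H^0_x)_x$ is a $0$-chain; passing to the factorized GNS representation of $\omega_0$ one sees that $\Omega_{\omega_0}$ spans the kernel of the nonnegative GNS Hamiltonian $\sum_x\pi_{\omega_0}(H^0_x)$, whose spectral gap above $0$ equals $1$. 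This is the statement that $\omega_0$ is a locally unique gapped ground state of $H^0$ with gap $1$, and via the spectral gap it extends from $\cAloc$ to all $B\in\cAal$ with $\omega_0(B)=0$.

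Next I would transport: set $H_x:=\alpha^{-1}(H^0_x)$. Since $\alpha^{-1}$ is generated by a $0$-chain, the almost-locality preservation $\alpha^F_{s\to t}(\cAal)\subset\cAal$, read quantitatively, makes each $H_x$ be $f$-localized near $x$ for a single $f\in\calL$ uniform in $x$, while isometry preserves boundedness, self-adjointness and evenness; thus $H=(H_x)_x$ is a $0$-chain. Because $\delta^H=\alpha^{-1}\circ\delta^{H^0}\circ\alpha$ as derivations, for $A\in\cAloc$ with $\omega(A)=0$ and $B:=\alpha(A)$ (so $\omega_0(B)=\omega(A)=0$),
\[
\omega\big(A^\ast\delta^H(A)\big)=\omega_0\big(B^\ast\delta^{H^0}(B)\big)\ge \omega_0(B^\ast B)=\omega(A^\ast A),
\]
so $\omega$ is a locally unique gapped ground state of $H$ with gap $1$. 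Finally I would symmetrize by $\bar H_x:=\tfrac{1}{2\pi}\int_0^{2\pi}\rho_\phi(H_x)\,\dd\phi$, a norm-convergent average. Each $\rho_\phi$ fixes the local subalgebras and is isometric, so $\rho_\phi(H_x)$ is $f$-localized near $x$ with the same $f$; the average inherits this, stays even and uniformly bounded, and $2\pi$-periodicity (\cref{prop:U(1) trans}) with translation invariance of $\dd\phi$ gives $\rho_\psi(\bar H_x)=\bar H_x$, i.e.\ $\bar H$ is a $Q$-preserving $0$-chain. Writing $A_\phi:=\rho_{-\phi}(A)$ and using $\omega\circ\rho_\phi=\omega$ together with $\delta^{\bar H}(A)=\tfrac{1}{2\pi}\int_0^{2\pi}\rho_\phi\big(\delta^H(\rho_{-\phi}(A))\big)\,\dd\phi$, one gets
\[
\omega\big(A^\ast\delta^{\bar H}(A)\big)=\frac{1}{2\pi}\int_0^{2\pi}\omega\big(A_\phi^\ast\,\delta^H(A_\phi)\big)\,\dd\phi ,
\]
and since $\omega(A_\phi)=\omega(A)=0$ and $\omega(A_\phi^\ast A_\phi)=\omega(A^\ast A)$, the gap estimate for $H$ passes under the integral with the same constant, proving the lemma.

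The hard part will be Step 2: controlling the transported terms $\alpha^{-1}(H^0_x)$ as a bona fide $0$-chain, i.e.\ showing they admit a single super-polynomially decaying localization function $f\in\calL$ uniform in $x$. This is exactly where the quantitative Lieb--Robinson estimates for locally generated automorphisms are needed, and it is the reason one must work in $\cAal$ rather than $\cAloc$. By contrast the symmetrization is soft: its only genuine content is the observation that the $Q$-symmetry of the \emph{target} state $\omega$ is what allows the $U(1)$-average to preserve both the ground-state relation and the gap.
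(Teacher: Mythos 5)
Your proof is correct and takes essentially the same route as the paper's, which defers to \cite[Lemma 3.3]{BachmannBolsRahnama24}: pull the on-site parent Hamiltonian $H^0_x=\Id-p_x$ of the product state back through the disentangler and then average over the charge circle, exactly the averaging trick this paper itself uses in \cref{sec:MBI} to produce a parent Hamiltonian that is both $Q$- and $Q'$-preserving. The only detail worth adding is the evenness of $p_x$ (needed for $H^0$ to be a bona fide $0$-chain): since $\omega$ is $Q$-symmetric it is even, and LGAs are homogeneous, so $\omega_0=\omega\circ\alpha^{-1}$ and hence each one-site restriction $\omega_0^x$ is even, which forces $\theta(p_x)=p_x$.
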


\begin{rem}
    In the present work, a parent Hamiltonian does not need to have a physical interpretation, but it exists as a mathematical tool. This emphasizes the fact that we are classifying states, not Hamiltonians.
\end{rem}

Last but not least, we introduce the relevant notion of `deformation' of states, which is used to define the topological phases. We remind the reader of the overall symmetry assumption introduced above.
\begin{defn}
	Two pure and $Q$-symmetric states $\omega_1, \omega_2$ are called \emph{$Q$-equivalent} if there exists a $Q$-preserving LGA $\alpha$ on $\cA$ such that $\omega_2 = \omega_1 \circ \alpha$. We shall denote this $\omega_1\sim\omega_2$.
\end{defn}

\begin{rem}
	A symmetric SRE state is not necessarily $Q$-equivalent to a product state because we do not assume the LGA in the definition of SRE to be $Q$-preserving. The SRE condition encodes the locality of the state, regardless of the charge symmetry, whereas equivalence describes $Q$-preserving deformations of states. Symmetric SRE states which are not $Q$-equivalent to a product state are called symmetry-protected topologically (SPT) ordered~\cite{wen2017colloquium}.
\end{rem}

\subsection{Magnetic flux insertion at the origin \label{sec:flux_insertion}}

To probe the Hall conductivity associated with a state $\omega$, we follow the Laughlin picture and focus on charge deficiency arising from a magnetic flux insertion in our system. We will show, see also~\cite{bachmann2020many,kapustin2020hall}, that the Hall conductivity of $\omega$ is equal to $\calN_\rho(\omega,\omega^D)$ where $\omega^D$ is obtained from $\omega$ by inserting one unit of magnetic flux.

Before we proceed with the technical construction, we make a few comments to motivate what is ahead. Following the work of \cite{AvronSeilerSimon_Charge}, in the many-body setting we would like to create $\omega^D$ out of $\omega$ by analogy to how $L^\ast P L$ is obtained out of the Fermi projection $P$, where the operator $L$ is the Laughlin unitary given in \cref{eq:Laughlin unitary}. For quasi-free states, this lifting is done by considering the outer automorphism $\ell:\calA\to\calA$ via \eq{
a(f) \mapsto a(Lf)\qquad (f\in \caH)
} and extend linearly, and recalling that $\omega_P\circ\ell=\omega_{L^\ast P L}$. Hence, also for non-quasi-free-states, the natural choice is then $\omega^D := \omega\circ \ell$ with which one may hope to show that, under suitable assumptions on the initial state $\omega$, the pair $\omega,\omega^D$ is $\rho$-locally-comparable and thus define the charge deficiency index of $\omega$ as
$\calN_\rho(\omega,\omega\circ\ell)$.

As proposed, this is impossible. Indeed, $P-L^\ast P L \in \caJ_3(\caH)$ but the analysis in \cref{subsec:IPP} indicates that if $P-Q$ is not Hilbert-Schmidt then $(\omega_P,\omega_Q)$ cannot be $\rho$-locally-comparable. In fact we made the even stronger assumption $P-Q\in \caJ_1(\caH)$.

This is remedied by considering another gauge (the `half-line gauge') for the same flux insertion procedure. For non-interacting fermions, this was studied in detail in~\cite{DeNittisSchulzBaldes2016spectral}. In the interacting setting, this was carried out in~\cite{BachmannBolsRahnama24}, following the closely related~\cite{kapustin2020hall}, where it is shown that the corresponding automorphism applied to an initial SRE state yields a unitarily equivalent state. This extends to so-called invertible states, see \cref{def:invertible} below. Interestingly, in general the unitary depends on the initial state (unlike the universal $L$), and we shall see that continuity properties of $\caN(\omega,\omega^D)$ must be dealt with carefully. In particular, there is no inner automorphism that directly implements a half-line gauge. 

Fundamentally, the implementability of the flux insertion procedure for invertible states is due to the triviality of their superselection sectors~\cite{Triviality}. Beyond the invertible setting, one expects flux insertion to produce states that are truly inequivalent to the initial state in the sense that they carry anyonic quasi-particles~\cite{Categories}.

For the sake of clarity, we consider the upper half-plane $\plane^\up := \mathbb Z \times \mathbb N$, but in principle one could consider any half-plane, see e.g. \cite{BachmannBolsRahnama24}.

 \paragraph{$U(1)$-transformations on half-planes.} We consider the $U(1)$-transformation $\rho^\up := \rho^{\plane^\up}$ associated with the upper half-plane. Let $H$ be a $Q$-preserving $0$-chain. For $\phi \in \mathbb R$ we consider the 0-chain $H^\up(\phi)$ defined by
 $$ \forall x \in \mathbb Z^2,\qquad H^\up(\phi)_x = \rho^\up_{-\phi} (H_x)\,.$$
 Notice that $H^\up(\phi)$ is $Q$-preserving for all $\phi$.

 \begin{lem}
     Let $\omega$ be an SRE state and let $H$ be a $Q$-preserving parent Hamiltonian. Then $H^\up(\phi)$ is a $Q$-preserving parent Hamiltonian of $\omega^\up_\phi =  \omega\circ\rho^\up_\phi $ for all $\phi \in \mathbb R$.
 \end{lem}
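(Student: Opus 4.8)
The plan is to derive both defining inequalities for the pair $(\omega^\up_\phi, H^\up(\phi))$ from those of $(\omega, H)$ by conjugating with the single automorphism $\beta := \rho^\up_\phi$. Writing $\rho^\up_{-\phi} = \beta^{-1}$, we have by construction $H^\up(\phi)_x = \beta^{-1}(H_x)$ and $\omega^\up_\phi = \omega\circ\beta$, so the statement is an instance of the principle that a parent Hamiltonian is transported covariantly under a symmetric deformation of the state. The $Q$-preservation of $H^\up(\phi)$ was already noted (it follows because $\rho_\psi$ commutes with $\rho^\up_{-\phi}$ and fixes each $H_x$), so the substance is the parent-Hamiltonian property.

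First I would record that $H^\up(\phi)$ is a genuine $0$-chain. Uniform boundedness is immediate since $\beta^{-1}$ is isometric, $\|\beta^{-1}(H_x)\| = \|H_x\|$. For $f$-localization near $x$, if $A_n\in\caA_{B_x(n)}$ approximates $H_x$ at rate $f(n)$, then
\[
\beta^{-1}(A_n) = \ee^{-\iu\phi Q^{B_x(n)\cap\plane^\up}} A_n\, \ee^{\iu\phi Q^{B_x(n)\cap\plane^\up}}
\]
is a conjugation by a unitary supported in $B_x(n)$, hence lies in $\caA_{B_x(n)}$ and approximates $\beta^{-1}(H_x)$ at the same rate. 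The identical argument shows that $\beta$ maps $\cAloc$ bijectively into $\cAloc$, which is the point I will rely on below.

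The core step is the intertwining relation $\delta^{H^\up(\phi)} = \beta^{-1}\circ\delta^H\circ\beta$ on $\cAloc$. Since $\beta$ is a $*$-automorphism and $\beta(A)\in\cAloc\subseteq\caD(\delta^H)$,
\[
\beta^{-1}\big(\delta^H(\beta(A))\big) = \sum_{x\in\ZZ^2} \iu\,[\beta^{-1}(H_x),A] = \delta^{H^\up(\phi)}(A),
\]
where convergence of the sum is guaranteed by \cref{prop:chain_derivation} applied to the $0$-chain $H^\up(\phi)$. Consequently, for any $A\in\cAloc$, setting $B := \beta(A)\in\cAloc$,
\[
\omega^\up_\phi\big(A^*\delta^{H^\up(\phi)}(A)\big) = \omega\big(\beta(A)^*\,\delta^H(\beta(A))\big) = \omega\big(B^*\delta^H(B)\big),
\]
and likewise $\omega^\up_\phi(A^*A) = \omega(B^*B)$ and $\omega^\up_\phi(A) = \omega(B)$.

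Finally I would read off both inequalities. The ground-state condition $\omega(B^*\delta^H(B))\geq 0$ gives $\omega^\up_\phi(A^*\delta^{H^\up(\phi)}(A))\geq 0$ for all $A\in\cAloc$; and if $\omega^\up_\phi(A)=0$ then $\omega(B)=0$, so the gap estimate for $(\omega,H)$ yields
\[
\omega^\up_\phi\big(A^*\delta^{H^\up(\phi)}(A)\big) = \omega\big(B^*\delta^H(B)\big) \geq \Delta\,\omega(B^*B) = \Delta\,\omega^\up_\phi(A^*A),
\]
with the same gap $\Delta$. Because $A\mapsto\beta(A)$ is a bijection of $\cAloc$, no test observables are lost in the substitution. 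The only point requiring genuine care is that the half-plane transformation $\beta=\rho^\up_\phi$ preserves $\cAloc$ together with the localization structure of the chain, so that the change of variables $B=\beta(A)$ is legitimate and $H^\up(\phi)$ is itself a $0$-chain; once this locality bookkeeping is in place, the rest is a direct transport of the defining inequalities and carries no further difficulty.
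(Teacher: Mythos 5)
Your proof is correct and is exactly the argument the paper has in mind: the paper gives no written proof, only the remark that the state is ``rotated forward'' while the Hamiltonian is ``rotated backwards'' (deferring to \cite[Section 3.1.3]{BachmannBolsRahnama24}), and your conjugation by $\beta=\rho^\up_\phi$, with the check that $\beta^{\pm1}$ preserve $\cAloc$ and the localization structure so that $B=\beta(A)$ is a legitimate change of test observables, is precisely that covariance argument written out in full.
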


Note that the state is `rotated forward' while the Hamiltonian is `rotated backwards'; see also the proof in~\cite[Section 3.1.3]{BachmannBolsRahnama24}.

\paragraph{Quasi-adiabatic continuation} The fact that $\omega$ is invariant under $U(1)$-transformation indicates that $\omega^\up_\phi$ differs from $\omega$ only along the boundary of the upper half-plane. This intuition can be made precise by using the fact that $\omega$ has a gapped Hamiltonian: This is the role of the spectral flow~\cite{hastingswen, AutomorphicEq}.

 \begin{prop}\label{prop:defK}
     There exists a $0$-chain $K=(K(\phi))_{\phi\in \mathbb R}$ such that
     \begin{equation}\label{eq:halfspaceU1_implementation}
     \omega^\up_\phi = \omega \circ \alpha_\phi^K.
     \end{equation}
     The generator satisfies $K_x(\phi) = \theta(K_x(\phi)) = \rho_{\phi'}(K_x(\phi))$ for all $x\in \mathbb Z^2$ and $\phi,\phi' \in \mathbb R$. Moreover, there exists a function $g \in \mathcal L$ such that
     \begin{equation}\label{eq:K_local_at_boundary}
     \forall x=(x_1,x_2) \in \mathbb Z^2,\quad \forall \phi \in \mathbb R, \qquad \| K_x(\phi) \| \leq  g(|x_2|).
     \end{equation}
\end{prop}
     The proof is in~\cite[Section 3.1.3]{BachmannBolsRahnama24}, but we recall here that 
\begin{equation}\label{eq:Def of K}
    K_x(\phi) := \int_{\mathbb R} \dd t W(t) \alpha_t^{H^\up(\phi)} \left(\dfrac{\dd }{\dd \phi} [H^\up(\phi)_x]\right)
\end{equation}
     where $W : \mathbb R \to \mathbb R$ is an odd bounded function decaying faster than any power at $t\to \pm \infty$ and with Fourier transform $\hat W(E)$ equal to $-\Idi/E$ for $|E|>1$. Notice that for a given $\phi\in \mathbb R$, $\alpha_t^{H^\up(\phi)}$ is the LGA generated by the $t$-independent generator $H^\up(\phi)$.

 This proposition is essential: equation \eqref{eq:halfspaceU1_implementation} states that $U(1)$-transformations on half-spaces can be implemented on the state $\omega$ by another LGA, which is non-autonomous but acts non-trivially only along the boundary of the upper half-space, see~\eqref{eq:K_local_at_boundary}.

 \begin{defn}[Magnetic flux insertion]\label{def:flux_insertion_autom}
     Let $K$ be as in~\cref{eq:Def of K}. For $\phi \in \mathbb R$ and $x=(x_1,x_2) \in \mathbb Z^2$ we define
     $$
     D_x(\phi) =K_x(\phi) \mathds{1}_{\{x_1<0\}} := \begin{cases}
         K_x(\phi), & x_1 <0, \\
         0, & \text{otherwise.}
         \end{cases}
     $$ 
     The associated LGA, which we denote $\D_\phi = \alpha^D_\phi$, is called the flux insertion automorphism. 
\end{defn}

\begin{rem} By construction $D(\phi)$ is $g$-localized near the half-line $\{x_1 <0, x_2=0\}$. It corresponds to a defect line (or branch cut) of vector potential associated with a $\phi$-flux  insertion at the origin. The dynamics $\D_\phi$ is a caricature of the adiabatic switching of a magnetic flux from $0$ to $\phi$.
\end{rem}

 \begin{defn}[Defect state]
     Let $\omega$ be an SRE state. The state $\omega^D_\phi := \omega \circ \D_\phi$ is called the \emph{defect state} at flux $\phi$. At $\phi = 2\pi$, we denote $\omega^D = \omega^D_{2\pi}$. 
\end{defn}

We recall that the construction from the symmetric SRE state $\omega$ to the defect state $\omega^D$ is summarized in Figure~\ref{fig:roadmap}.

\begin{rem}
    Since $K$ is even and $Q$-preserving, so is $D$, and so $
     \D_\phi \circ \rho_{\phi'} = \rho_{\phi'} \circ \D_\phi$ as well as $\D_\phi \circ \theta = \theta \circ \D_\phi$. This implies immediately that $\omega^D_\phi$ is $Q$-symmetric. 
\end{rem}

The two states $\omega,\omega^D_\phi$ differ from each other along the half-line $\{x_1 <0, x_2=0\}$. At $\phi = 2\pi$, the intrinsic locality of the state, namely the SRE assumption, implies that, in fact, $\omega,\omega^D$ differ from each other only around the flux insertion point. Indeed, far along the half-line, $\omega^D \simeq \omega\circ\rho_{2\pi} = \omega$ because $\rho$ is a $U(1)$-transformation. This implies that $(\omega,\omega^D)$ are locally comparable. In fact, they are even $\rho$-locally comparable because the unitary can be shown to be almost localized near the origin, see \cref{def:comp} and \cref{def:alpha comp}. This is the content of the following proposition.

\begin{prop}\label{prop:existence_of_U}
    Let $\omega\in \cSs^Q(\cA)$ be SRE and let $\omega^D$ be the associated defect state with flux~$2\pi$. Then there exists a homogeneous unitary $u\in \cAal$ such that $\omega^D = \omega \circ \mathrm{Ad}_u$. Moreover, $u\in\caD(\delta^Q)$.
\end{prop}

The proof of the existence of $u$ with all claimed properties except for the last one can be found in~\cite[Proposition 3.8]{BachmannBolsRahnama24}, with the replacements of $\omega_{-\pi}$, respectively $\omega_{\pi}$, there by $\omega$, respectively $\omega^D$, here. See also~\cite[Appendix~A]{KapustinArtymowicz}. Finally, $u\in\cAal$ implies that it lies in the domain of $\delta^Q$, see \cref{prop:chain_derivation}.
\begin{rem}\label{Rem:SRE U}
    (i) $u$ is the many-body analog of the Laughlin unitary discussed above. It exists only for $\phi=2\pi$ (and could also be constructed for other integer fluxes).\\
    (ii) If $\omega$ is a symmetric product state, then one can take $u=\Id$. \\
    (iii) The fact that $u$ implements the $Q$-preserving transformation $\gamma_{2\pi}$ on the symmetric state $\omega$ does not imply that $\delta^Q(u) = 0$. In \cref{example:shift}, the bilateral shift reads in second quantization $\alpha^R(a_x) = a_{x+1}$ for all $x\in\bbZ$, while $\rho_\phi(a_x) = \ep{-\iu\phi}a_x$. Clearly, we have that $\alpha^R\circ\rho_\phi = \rho_\phi\circ\alpha^R$. However, with $u = a_0 + a_0^*$ given there, we see that
    \begin{equation*}
        -\iu \delta^Q(u) = -\iu \left.\frac{d}{d\phi}\rho_\phi(u)\right\vert_{\phi=0} = -a_0 + a_0^*.
    \end{equation*}
\end{rem}

\begin{prop}\label{prop:SRE_index}
   Let $\omega\in \cSs^Q(\cA)$ be SRE and let $u\in \cAal$ be the unitary given by \cref{prop:existence_of_U}. Let 
    \begin{equation}\label{eq:def_small_i}
        i(\omega,Q) := \iu\omega(u\str \delta^Q(u))\,.
    \end{equation}
    If $\beta$ is a $Q$-preserving LGA then  $i(\omega,Q) =i(\omega\circ\beta,Q)$. Moreover, $i(\omega,Q)$ is independent of the choice of symmetric parent Hamiltonian $H$ for $\omega$.
\end{prop}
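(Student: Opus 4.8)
The plan is to observe first that, with $\rho=\rho_\phi$ the full-plane $U(1)$ charge transformation (so that $\delta^\rho=\delta^Q$ and $\rho_{2\pi}=\Id$), the quantity $i(\omega,Q)$ is \emph{exactly} the abstract index of \cref{def:the index of a pair of pure states}: indeed $\omega$ is pure and $Q$-symmetric, $\omega^D=\omega\circ\D_{2\pi}$ is pure (automorphic image of $\omega$) and $Q$-symmetric, and the two are $\rho$-locally-comparable through the unitary $u\in\cAal\subset\caD(\delta^Q)$ of \cref{prop:extistence_of_U}. Hence $i(\omega,Q)=\calN_\rho(\omega,\omega^D)$, and \cref{thm:PropertiesGeneralIndex}(i) already yields that this number is independent of the intertwiner $u$. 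Both assertions therefore become statements about the \emph{pair of states} $(\omega,\omega^D)$, to be proved with the abstract properties of $\calN_\rho$.

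For the independence of the parent Hamiltonian, let $H_1,H_2$ be two $Q$-preserving parent Hamiltonians of $\omega$, giving flux-insertion automorphisms $\D^1_{2\pi},\D^2_{2\pi}$ and defect states $\omega^{D,1},\omega^{D,2}$. By additivity and antisymmetry (\cref{thm:PropertiesGeneralIndex}(vi)--(vii)) the difference of the two indices equals $\calN_\rho(\omega^{D,2},\omega^{D,1})$, so it suffices to show that this vanishes. The structural point is that both quasi-adiabatic generators $K_1,K_2$ transport $\omega$ to the same $\omega^\up_\phi$—a datum depending only on $\omega$ and on $\rho^\up$, not on $H_j$—so the two left-half restrictions $D_1,D_2$ insert the \emph{same single} unit of flux and $\sigma:=(\D^2_{2\pi})^{-1}\circ\D^1_{2\pi}$ is a flux-neutral, $Q$-preserving automorphism localized near the origin, with $\omega^{D,1}=\omega^{D,2}\circ\sigma$. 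I would then connect $\sigma$ to the identity through a norm-continuous path $s\mapsto\sigma_s$ of $Q$-preserving, origin-localized LGAs (each implemented by a unitary in $\cAal\subset\caD(\delta^Q)$); along such a path $s\mapsto\calN_\rho(\omega^{D,2},\omega^{D,2}\circ\sigma_s)$ is integer-valued (\cref{thm:PropertiesGeneralIndex}(ii)) and locally constant by the continuity property (\cref{thm:PropertiesGeneralIndex}(iii)) applied on small sub-intervals, hence identically $0$; at $s=1$ this gives $\calN_\rho(\omega^{D,2},\omega^{D,1})=0$, so the two indices agree and $i(\omega,Q)$ is a well-defined function of $\omega$. (When $\sigma$ happens to be implemented by a genuinely $Q$-invariant unitary $v$, i.e. $\delta^Q(v)=0$, one may instead invoke directly the remark around \cref{eq:unitary invariance on the second state}.)

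For invariance under a $Q$-preserving LGA $\beta$, note that $\omega\circ\beta\in\cSs^Q(\cA)$ is again pure and SRE, so $i(\omega\circ\beta,Q)$ is defined. Since $\beta$ is $Q$-preserving it commutes with $\delta^Q$, and the automorphic invariance (\cref{thm:PropertiesGeneralIndex}(iv)) gives $\calN_\rho(\omega\circ\beta,\omega^D\circ\beta)=\calN_\rho(\omega,\omega^D)=i(\omega,Q)$. It therefore remains to identify $\omega^D\circ\beta$ with a defect state of $\omega\circ\beta$. I would take as parent Hamiltonian of $\omega\circ\beta$ the conjugated chain $H'_x:=\beta^{-1}(H_x)$, which is again $Q$-preserving and has $\omega\circ\beta$ as locally-unique gapped ground state, and chase the construction \cref{summary}. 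The only step at which $(\omega\circ\beta)^D$ can differ from $\omega^D\circ\beta$ is that $\beta$ commutes with the full-plane $\rho$ but not, in general, with the half-plane transformation $\rho^\up$ entering $H^\up(\phi)$; this mismatch is supported near the boundary $\{x_2=0\}$ and, after the left-half restriction producing $D(\phi)$, concentrates near the origin. Combined with the parent-Hamiltonian independence just proved, this shows that $(\omega\circ\beta)^D$ and $\omega^D\circ\beta$ are two defect states of $\omega\circ\beta$ differing by an origin-localized $Q$-preserving deformation, hence carry the same index.

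The main obstacle is the locality bookkeeping common to both steps: proving that the difference of two flux-insertion automorphisms—for distinct parent Hamiltonians, or before and after conjugating by $\beta$—is implemented by a $Q$-preserving unitary that is almost localized near the origin (so that it lies in $\caD(\delta^Q)$, keeps the relevant pairs $\rho$-locally-comparable, and can be contracted to the identity within such deformations). This is where one must lean on the Lieb--Robinson and quasi-adiabatic estimates behind \cref{eq:K_local_at_boundary} and on the construction of \cite{BachmannBolsRahnama24}; with those locality facts in hand, integrality together with the continuity property of $\calN_\rho$ forces the relevant correction indices to vanish.
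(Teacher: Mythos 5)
Your overall skeleton matches the paper's: you identify $i(\omega,Q)=\calN_\rho(\omega,\omega^D)$, reduce parent-Hamiltonian independence to $\calN_\rho(\omega^{D,2},\omega^{D,1})=0$ by additivity, and your interpolating family is, up to conjugation by $\D^2_\phi$, exactly the paper's family $\tilde\omega^D_\phi=\omega_0\circ\alpha\circ\D^1_\phi\circ(\D^2_\phi)^{-1}\circ\alpha^{-1}$, justified by the same structural fact that both quasi-adiabatic flows transport $\omega$ to the same $\omega^\up_\phi$. The genuine gap is the mechanism by which you conclude constancy along the path. You invoke \cref{thm:PropertiesGeneralIndex}(iii) on small sub-intervals, which requires $\norm{\omega^{D,2}\circ\sigma_s-\omega^{D,2}\circ\sigma_{s'}}<2$ for nearby parameters, i.e.\ genuinely \emph{norm}-close states. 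But the continuity actually available from the Lieb--Robinson/quasi-adiabatic technology is only \emph{strong} continuity of the automorphisms, hence \emph{weak-*} continuity of the states, and for pure states of an infinite lattice system weak-* closeness does not imply norm closeness: the two states are norm-close only if their GNS vector representatives $w_s\Omega$, $w_{s'}\Omega$ are non-orthogonal, and the implementing unitaries $w_s$ come from a Kadison-transitivity-type existence result (\cite[Appendix~A]{BachmannBolsRahnama24}) that carries no continuity whatsoever in the parameter. Nor can you fix this by making the connecting path an honest ``origin-localized'' LGA with summable generator (which would give a bounded inner derivation and hence norm-continuous implementers): $D^1(\phi)-D^2(\phi)$ is in no sense small along the left half-line, since $K^1$ and $K^2$ come from different parent Hamiltonians; the localization near the origin is a property of the automorphism's action \emph{on the state}, not of the automorphism itself. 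So the step ``locally constant by the continuity property (iii)'' does not go through as stated.

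The paper closes precisely this gap by a different mechanism: all interpolated states are realized as vector states in the GNS representation of the fixed reference state; their representatives are eigenvectors of the charge operator $Q_\omega$, which has integer spectrum; and weak-* continuity of the states is upgraded---using constancy of the norm of the vectors---to norm continuity of the vector representatives, which pins the eigenvalue (hence the index) to a constant. No norm estimate on the states is ever used. The same remark applies to your treatment of the $\beta$-invariance: your route via \cref{thm:PropertiesGeneralIndex}(iv) plus identifying $\omega^D\circ\beta$ with a defect state of $\omega\circ\beta$ is a legitimate alternative skeleton (and it correctly avoids comparing the non-comparable states $\omega$ and $\omega\circ\beta$), but since $\beta^{-1}\circ\D_{2\pi}\circ\beta$ is not of the form produced by the construction \cref{summary}, it requires a strengthened parent-Hamiltonian independence for generalized flux-insertion automorphisms, which funnels into the same unproved continuity step. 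The paper instead deforms along $s\mapsto\beta_s$ and compares the pulled-back states $\tilde\omega(s)=\omega\circ\mathrm{Ad}_{\beta_s(u_s)}$, all of which remain $\rho$-locally comparable to the fixed $\omega$, again concluding by the GNS eigenvector argument.
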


\begin{proof}
The definition~\cref{eq:def_small_i} is nothing else than the abstract index~\cref{eq:definition of general index} applied in the present specific context, namely
\begin{equation*}
    i(\omega,Q) = \caN_{\rho}(\omega,\omega^D).
\end{equation*}
An LGA deformation of the initial state $\omega(s) = \omega\circ\beta_s$ yields a deformation of the flux insertion automorphism $\D_s$ and hence a family of states $\omega^D(s) = \omega(s) \circ \D_s$. If $\omega$ is a symmetric SRE state, so is $\omega(s)$ and so there is a family $u_s$ such that $\omega^D(s) = \omega(s)\circ\mathrm{Ad}_{u_s}$. In particular, $(\omega^D(s), \omega(s))$ are $\rho$-locally comparable for all $s$. Note that, in general, $\omega$ and $\omega(s)$ (as well as their defect counterparts) are not locally-comparable, even for infinitesimal $s$. However, the state 
\begin{equation*}
    \widetilde\omega(s) = \omega\circ\beta_s\circ\mathrm{Ad}_{u_s}\circ\beta_s^{-1}
\end{equation*}
can be expressed as $\widetilde\omega(s) = \omega\circ\mathrm{Ad}_{\beta_s(u_s)}$, so that $(\widetilde\omega(s), \omega)$ are $\rho$-locally comparable for all $s$ and $i(\omega,Q) = \caN_{\rho}(\omega,\tilde\omega(0))$. The family of normalized vectors $\Omega_{\widetilde\omega(s)} = \pi_{\omega}(\beta_s(u_s))\Omega_{\omega}$ are vector representatives of the state $\widetilde\omega(s)$ in the GNS Hilbert space $\caH_\omega$ of $\omega$. If we again let $Q_\omega$ be the charge operator in the GNS representation, then $\Omega_{\widetilde\omega(s)}$ are all eigenvectors of $Q_\omega$. We claim that $s\mapsto\Omega_{\widetilde\omega(s)}$ is a norm-continuous family of vectors in $\caH_\omega$. Since orthogonal unit vectors $\Psi,\Phi$ have $\Vert\Psi - \Phi\Vert^2 =2$, we conclude that $\Omega_{\widetilde\omega(s)}$ are in a constant eigenspace. This in turn means that $\caN_{\rho}(\omega,\tilde\omega(s))$ is constant. We conclude  that
\begin{equation*}
    i(\omega,Q) = \caN_{\rho}(\omega,\tilde\omega(0))
    = \caN_{\rho}(\omega,\tilde\omega(s))
    = \caN_{\rho}(\omega(s),\omega^D(s))
\end{equation*}
where the last equality is by~\cref{thm:PropertiesGeneralIndex}(iv).

It remains to prove the continuity of $\Omega_{\widetilde\omega(s)}$. We first claim that $s\mapsto\D_s$ is strongly continuous. If $\omega = \omega_0\circ\alpha$, the family $\omega(s)$ is also SRE, with entangler $\alpha\circ \beta_s$. Recalling the summary~\cref{sec:summary}, we see that the parent Hamiltonian $H_s$ is simply given by $H_{s,x} = \beta_s^{-1} (H_x)$ and the interaction terms are continuous. Also, the family $H_s$ has constant spectral gap. It is now a consequence of the Lieb-Robinson bound, see e.g.~\cite[Theorem 6.2.11]{BratteliRobinson2}, that the dynamics $\alpha_t^{H^\uparrow_s(\phi)}$ is strongly continuous with respect to $H^\uparrow$. In particular, the interaction terms defined by
$$
     K_{s,x}(\phi) := \int_{\mathbb R} \dd t W(t) \alpha_t^{H_s^\up(\phi)} \left(\dfrac{\dd }{\dd \phi} [H^\up(\phi)_{s,x}]\right)
     $$
are continuous as a function of $s$. So is restriction $D_s$ of the $0$-chain $K_s$ to the half-line, and the claimed strong continuity of $s\mapsto\D_{s,\phi}$ follows again by the Lieb-Robinson bound. With this, the formula
\begin{equation*}
    \widetilde\omega(s)
    = \omega\circ\beta_s\circ\D_{s,2\pi}\circ\beta_s^{-1}
\end{equation*}
implies immediately the weak-* continuity of $s\mapsto \widetilde\omega_2(s)$. Hence $s\mapsto\Omega_{\widetilde\omega_2(s)}$ is weakly continuous (in $\caH_\omega$) and since this is a family of constant norm, it is in fact norm continuous, concluding that part of the proof.

   We turn to the invariance under the choice of parent Hamiltonian. For all $\phi\in[0,2\pi]$, let $\omega_\phi^{D,j} = \omega\circ\D^{j}_\phi, j=1,2$ be the defect states generated by the Hamiltonians $H_j$. We claim that, for all $A\in\caA$, 
   \begin{equation}\label{eq:local comparability of tilde D}
       \vert \omega_0(A) - \tilde \omega^{D}_\phi(A)\vert\leq f(r)\Vert A \Vert
   \end{equation}
   where $\tilde \omega^D_\phi = \omega_0\circ\alpha\circ\D^1_\phi\circ(\D_\phi^2)^{-1}\circ\alpha^{-1}$.
   Indeed, by the locality of $\alpha$, we can replace $\D_{\phi}^{1} \circ \big( \D_{\phi}^{2} \big)^{-1}$ by $\alpha_{\phi}^{K_1} \circ \big( \alpha_{\phi}^{K_2} \big)^{-1}$ for all $\phi\in[0,2\pi]$ and all $A$ supported in~$(\Lambda\cap B_0(r)^c)$ or in $(\Lambda^c\cap B_0(r)^c)$, where $\Lambda$ is the left half-plane. This is because, far away from the origin, the effect of the truncation $K\to D$ is irrelevant in the left half-plane and similarly on the right half-plane simply because both $\gamma_\phi^D$ and $\alpha_\phi^K$ act as identity, see~\cite[Lemma~3.7]{BachmannBolsRahnama24}. This, and the identity
    \begin{equation*}
        \omega\circ\alpha_{\phi}^{K_1} \circ \big( \alpha_{\phi}^{K_2} \big)^{-1}
        = \omega_\phi^{\uparrow} \circ \big( \alpha_{\phi}^{K_2} \big)^{-1}
        = \omega,
    \end{equation*}
    implies~\cref{eq:local comparability of tilde D} for all $\phi\in[0,2\pi]$ and all $A$ supported in~$(\Lambda\cap B_0(r)^c)$ or in $(\Lambda^c\cap B_0(r)^c)$. That this can be lifted to all $A\in\caA$ follows from~\cite[Appendix~A]{BachmannBolsRahnama24}, and in turn there is a unitary $w_\phi$ such that
    \begin{equation*}
        \tilde \omega^D_\phi = \omega_0\circ\mathrm{Ad}_{w_{\phi}}.
    \end{equation*} 
    The family of $Q$-symmetric states $\tilde \omega^D_\phi$ is weakly-* continuous and so 
    \begin{equation*}
        \caN_{\rho}(\omega_0, \tilde \omega^D_\phi) = \caN_{\rho}(\omega_0,\omega_0) = 0
    \end{equation*}
    by the continuity argument in the first part of the proof, with $\omega\to\omega_0$ and $\tilde\omega(s)\to\tilde \omega^D_\phi$, and \cref{thm:PropertiesGeneralIndex}(v). Evaluating this at $\phi = 2\pi$, we conclude by~\cref{thm:PropertiesGeneralIndex}(iv) that
    \begin{equation*}
        0 = \caN_{\rho}(\omega_0, \tilde \omega^D_{2\pi}) 
        = \caN_{\rho}(\omega^{D,2}, \omega^{D,1}),
    \end{equation*} 
    and by additivity
    \begin{equation*}
        \caN_{\rho}(\omega,\omega^{D,1})
        =\caN_{\rho}(\omega,\omega^{D,2}) + \caN_{\rho}(\omega^{D,2},\omega^{D,1})
        =\caN_{\rho}(\omega,\omega^{D,2})
    \end{equation*}
    which is what we had set out to prove.
\end{proof}

\begin{rem}
    If $\omega$ is a symmetric product state, then the $Q$-preserving parent Hamiltonian of \cref{lem:Q parent} can be chosen to be purely on-site. Since the $U(1)$-transformation is also on-site, the Hamiltonian is invariant under $\rho^{\uparrow}$. Hence the quasi-adiabatic generator $K$ of~(\ref{eq:Def of K}) is uniformly equal to $0$. It follows that $\omega^D = \omega$, so that $u$ can be taken to be the identity and we conclude that $i(\omega,Q) = 0$ for an initial product state, as it should.
\end{rem}

As we shall see in \cref{sec:free} below, the problem with  \cref{prop:SRE_index} is that $i(\omega,Q)=0$ for any quasi-free SRE state $\omega$ describing the (non-interacting) quantum Hall effect. In the next section, we extend the definition of the index to the larger class of invertible states which covers all topological ground states of non-interacting models. In principle, other physical models might provide examples of many-body SRE states with a non-trivial index, without stacking with an auxiliary system, but we are not aware of them.

\subsection{Many-body index \label{sec:MBI}}

So far, we have shown how to associate an index to an SRE state in the presence of a $U(1)$-symmetry. The SRE assumption is used in two distinct steps: first to ensure that the state has a gapped parent Hamiltonian, and second to show that the associated defect state is $\rho$-locally comparable with the initial state. In both cases, the SRE assumption is a sufficient condition but in no way necessary. In this section, we extend the index to invertible states.

\paragraph{Stacking.} In the following we shall make use of \emph{stacking} which is a procedure to increase the number of internal degrees of freedom. For this, let $\cA^1$ and $\cA^2$ respectively be the CAR algebras over $\ell^2(\mathbb Z^2)\otimes \mathbb C^n$ and $\ell^2(\mathbb Z^2)\otimes \mathbb C^m$ for $n,m \in \mathbb N$. We denote by $\cA^1 \hat\otimes \cA^2$ the CAR algebra over $\ell^2(\mathbb Z^2)\otimes \mathbb C^{n+m}$. 

\begin{rem}
	{A word about $\hat{\otimes}$: The standard construction of the tensor product $\cA^1\otimes\cA^2$ is also a CAR algebra, but to see that one has to reshuffle its generators. Indeed, consider the simple example of space consisting of a single point and $n=m=1$. Then $a,\tilde{a}$, the two generators of $\cA^1,\cA^2\cong\mathrm{Mat}_2(\bbC)$ respectively, embed naturally into $\cA^1\otimes\cA^2$ as $a\otimes\Id,\Id\otimes\tilde{a}$ respectively, which commute rather than anti-commute. A choice that realizes the CAR is $a_1 = a \otimes \Id$ and $a_2 = (-1)^{a\str a}\otimes \tilde a$. We use the symbol $\hat{\otimes}$ to denote this reshuffling of the generators (though the underlying C*-algebraic structure is, of course, identical).}
\end{rem}

The stacking operation on algebras naturally extends to states: for $\omega_1 \in \cS(\cA^1)$ and $\omega_2 \in \cS(\cA^2)$ we define $\omega_1 \hat \otimes \omega_2 \in \cS(\cA^1 \hat \otimes \cA^2)$ by \eq{(\omega_1 \hat \otimes \omega_2)(A_1 \hat \otimes A_2) :=\omega_1(A_1)\omega_2(A_2)\qquad(A_1\in\cA^1,A_2\in\cA^2)} and extended beyond simple tensors by linearity.

\begin{defn}[invertible and stably SRE]\label{def:invertible}
	A state $\omega \in \cS(\cA)$ is \emph{invertible} iff there is an auxiliary system $\cA'$ (namely a CAR algebra over $\ell^2(\mathbb Z^2) \otimes \mathbb C^m$ for some $m\in\bbN$) and a state $ \omega' \in \cS(\cA')$ such that $\omega \hat \otimes \omega'$ is SRE on $\cA \hat\otimes \cA'$. If $\omega'$ can be taken to be a product state, then $\omega$ is called \emph{stably SRE}.
\end{defn}

Notice that a stably SRE state is necessarily invertible. The converse is true in one dimension~\cite{kapustin2021classification}. That these two classes are not equal in two dimensions is a corollary of the results presented below, see \cref{sec:free}.

\begin{rem}
	(i) The relevance of invertible states lies in the observation that they correspond to integer quantum Hall systems. Any quasi-free state is invertible and so all non-interacting models have invertible ground states. However, it is believed that a non-vanishing Hall conductance is an obstruction to being stably SRE. Moreover, as \cref{thm:main} below shows, they do not cover fractional quantum Hall states. In fact, it can be shown that invertibility guarantees the absence of anyons, see~\cite{Triviality}.
	
	\noindent (ii) The stacking with product states, which appears in the notion of stably SRE, is analogous to the adding of some trivial bands in the single-particle setting. It also parallels a common procedure in K-theory. 
\end{rem}

Let $\omega$ be a state on $\caA$ which is invertible, with inverse $\omega'$ on auxiliary algebra $\cA'$. We denote by $\hat \omega= \omega \hat \otimes \omega'$ and $\hat \cA = \cA \hat \otimes \cA'$. The algebra $\cA'$ is equipped with a charge $Q$ as well, so we are left with three distinct charges on $\hat \cA$:
\begin{equation}\label{eq:defQstack}
Q \to Q \hat \otimes \Id, \qquad Q' = \Id \hat \otimes Q, \qquad \hat Q = Q+Q'.
\end{equation}
Notice that these three charges commute. 

\begin{defn}
	We say that $\omega$ is symmetric invertible if it is invertible with inverse $\omega'$ and if $\hat \omega = \omega \hat \otimes \omega'$ is both $Q$- and $Q'$-symmetric.
\end{defn}

In particular if $\omega$ is symmetric invertible then $\hat \omega$ is also $\hat Q$-symmetric. The parent Hamiltonian $\hat H$ provided by \cref{lem:Q parent} is $\hat Q$-symmetric. Defining then
    \begin{equation*}
        \tilde H_x = \frac{1}{4\pi^2}\int_0^{2\pi}\int_0^{2\pi} \rho^{Q'}_{\phi_2}\circ\rho_{\phi_1}^Q(\hat H_x) \dd\phi_1\dd\phi_2
    \end{equation*}
    for all $x\in\bbZ^2$ yields a parent Hamiltonian that is both $Q$- and $Q'$- preserving --- a fortiori still $\hat Q$-preserving. Note that the commutativity of $Q,Q',\hat Q$ is essential here.

    With this, we can carry out the construction described in \cref{sec:flux_insertion} and define the flux insertion automorphism $\hat \D_\phi$ and a defect state $\hat \omega_\phi^D:= \hat\omega\circ \hat \D_\phi$, with respect to the total charge $\hat Q$. By inspecting the formula~\cref{eq:Def of K}, we conclude that $\hat \D_\phi$ is invariant under both $Q,Q'$, and so the defect state $\hat \omega_\phi^D$ is $Q,Q',\hat Q$-symmetric. Hence we have the following
\begin{prop}\label{prop:stackeddefectstate_symmetry}
	A symmetric invertible state $\hat \omega$ has a symmetric parent Hamiltonian which is both $Q$- and $Q'$-symmetric. Moreover the defect state $\hat \omega^D$, generated by $\hat Q$, remains $Q$- and $Q'$-symmetric.
\end{prop}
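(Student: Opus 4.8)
The plan is to establish the two assertions in turn, using throughout that the on-site charges comprising $Q$, $Q'$ and $\hat Q$ all commute, so that the one-parameter groups $\rho^Q_\phi$, $\rho^{Q'}_\phi$ and $\rho^{\hat Q}_\phi$ pairwise commute as $*$-automorphisms. This commutativity, emphasized right after \cref{eq:defQstack}, is what makes both parts work.

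For the first claim I would begin with the $\hat Q$-preserving parent Hamiltonian $\hat H$ supplied by \cref{lem:Q parent} applied to the symmetry $\hat Q$ (legitimate since $\hat\omega$ is SRE and $\hat Q$-symmetric), and then symmetrize it by averaging over the two commuting circle actions, defining $\tilde H_x = \tfrac{1}{4\pi^2}\iint \rho^{Q'}_{\phi_2}\circ\rho^{Q}_{\phi_1}(\hat H_x)\,\dd\phi_1\dd\phi_2$ as in the display above the proposition. Invariance of $\tilde H$ under $\rho^Q_\psi$ and $\rho^{Q'}_\psi$ follows from translation invariance of the measure on the circle after a change of variables; locality and the uniform norm bound persist since each $\rho^{Q}_{\phi_1}\circ\rho^{Q'}_{\phi_2}$ is norm-preserving and acts on-site, so $\tilde H$ is again a $0$-chain. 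The crux is to verify that $\tilde H$ is still a locally unique gapped parent Hamiltonian for $\hat\omega$. Writing $\rho_g := \rho^{Q}_{\phi_1}\circ\rho^{Q'}_{\phi_2}$ and using that each $\rho_g$ is a $*$-automorphism, one gets $\delta^{\tilde H}(A) = \tfrac{1}{4\pi^2}\iint \rho_g\big(\delta^{\hat H}(\rho_g^{-1}(A))\big)\,\dd\phi_1\dd\phi_2$. Then the $Q$- and $Q'$-invariance of $\hat\omega$ gives $\hat\omega(A^*\rho_g(\delta^{\hat H}(\rho_g^{-1}A))) = \hat\omega(B^*\delta^{\hat H}(B))$ with $B := \rho_g^{-1}(A)$, while $\hat\omega(B^*B) = \hat\omega(A^*A)$ and $\hat\omega(B)=\hat\omega(A)$. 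Integrating the gap inequality for $\hat H$ over $(\phi_1,\phi_2)$ reproduces the same inequality for $\tilde H$ with an unchanged gap $\Delta$. I expect this ground-state-preservation step to be the main point, as it is precisely what licenses replacing $\hat H$ by the doubly symmetric $\tilde H$.

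For the second claim I would run the flux-insertion construction of \cref{sec:flux_insertion} verbatim, but built from $\tilde H$ and the total charge $\hat Q$, tracking the $Q$- and $Q'$-symmetries through the chain \cref{summary}. Because $\tilde H$ is $Q$- and $Q'$-preserving and the upper-half-plane $\hat Q$-rotation commutes with the full-plane $\rho^Q_\psi,\rho^{Q'}_\psi$ (all on-site charges commute), each intermediate object is $Q$- and $Q'$-invariant: the half-plane-rotated Hamiltonian $\hat H^\up(\phi)_x$ is fixed by $\rho^Q_\psi$ and $\rho^{Q'}_\psi$, hence so is $\tfrac{\dd}{\dd\phi}\hat H^\up(\phi)_x$, and the evolution $\alpha_t^{\hat H^\up(\phi)}$ commutes with $\rho^Q_\psi$ and $\rho^{Q'}_\psi$. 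Feeding these into the spectral-flow formula \cref{eq:Def of K} shows that $K_x(\phi)$, and hence its half-line restriction $D_x(\phi)$ from \cref{def:flux_insertion_autom}, are $Q$- and $Q'$-invariant $0$-chains, so the flux-insertion automorphism $\hat\D_\phi = \alpha^D_\phi$ commutes with both $\rho^Q_\psi$ and $\rho^{Q'}_\psi$.

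Finally I would conclude that $\hat\omega^D\circ\rho^Q_\psi = \hat\omega\circ\hat\D_{2\pi}\circ\rho^Q_\psi = \hat\omega\circ\rho^Q_\psi\circ\hat\D_{2\pi} = \hat\omega\circ\hat\D_{2\pi} = \hat\omega^D$, using the commutation just established together with the $Q$-symmetry of $\hat\omega$; the identical argument with $Q'$ yields $Q'$-symmetry, and $\hat Q$-symmetry then follows from $\hat Q = Q+Q'$. The only genuine obstacle beyond bookkeeping is the gap-preservation under averaging in the first part; the entire second part is a symmetry-tracking exercise resting on the commutativity of $Q$, $Q'$ and $\hat Q$.
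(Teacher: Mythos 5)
Your proof is correct and follows essentially the same route as the paper: average the $\hat Q$-symmetric parent Hamiltonian supplied by \cref{lem:Q parent} over the two commuting circle actions $\rho^Q,\rho^{Q'}$, then track $Q$- and $Q'$-invariance through the flux-insertion chain \cref{summary} by inspecting \cref{eq:Def of K}. You in fact supply more detail than the paper, which asserts without proof that the averaged $\tilde H$ is still a parent Hamiltonian; your conjugation argument $\hat\omega\bigl(A^*\rho_g(\delta^{\hat H}(\rho_g^{-1}(A)))\bigr)=\hat\omega\bigl(B^*\delta^{\hat H}(B)\bigr)$ with $B=\rho_g^{-1}(A)$, which preserves the gap $\Delta$ under the average, fills exactly that step.
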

    
    Continuing as in \cref{sec:flux_insertion}, \cref{prop:existence_of_U} yields $\hat u \in \hat \cA^\mathrm{al}$ such that $\hat \omega^D = \hat \omega \circ \mathrm{Ad}_{\hat u}$. Since $\hat u\in\hat \caA^{\rm al}$, we further have that $\hat u\in\caD(\delta^{ Q})$, namely $(\hat\omega^D,\hat\omega)$ are $\rho^Q$-locally comparable. 

\begin{defn}\label{def: THE Index}
    Let $\omega$ be an invertible state as above. The \emph{index of $\omega$} is defined by 
    \begin{equation}\label{eq:invertibleindex}
        \mathcal I(\omega) :=
        \caN_{\rho^Q\hat\otimes\mathrm{id}}(\hat\omega,\hat\omega^D) = \ii \hat \omega\left(\hat u^* \delta^{Q\hat\otimes \Id}(\hat u)\right)
    \end{equation}
\end{defn}
We display the second expression to emphasize that the charge used in this definition is on the first subsystem $\caA$, on which $\omega$ is defined, although the flux insertion was carried out using the full charge $\hat Q$.

With these preparations, we can now state our main result.

\begin{thm}\label{thm:main}
    Let $\omega$ be a symmetric invertible state. The index $\mathcal I(\omega)$ of~\cref{def: THE Index} has the following properties:
    \begin{enumerate}
        \item \emph{Integrality}:  $\mathcal I(\omega) \in \mathbb Z$.
        \item \emph{Well-definedness}: $\mathcal I(\omega)$ is independent of the choice of symmetric parent Hamiltonian for $\hat \omega$.
        \item \emph{Continuity}: If $\tilde \omega$ is $Q$-equivalent to $\omega$, then $\tilde \omega$ is symmetric invertible and $ \mathcal I(\omega) = \mathcal I(\tilde \omega)$.
        \item \emph{Additivity}: If $\omega_1$ and $\omega_2$ are symmetric invertible states on $\cA^1$ and $\cA^2$ then $\omega_1 \hat\otimes \omega_2$ is a symmetric invertible state on $\cA^1 \hat\otimes \cA^2$ and $\mathcal I(\omega_1 \hat\otimes \omega_2) = \mathcal I(\omega_1)+\mathcal I(\omega_2)$.
        \item \emph{Surjectivity}: For all $n \in \mathbb Z$, there exists an invertible state $\omega$ such that $\mathcal I(\omega) = n$.
    \end{enumerate}
\end{thm}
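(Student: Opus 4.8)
The plan is to derive all five properties from the abstract \cref{thm:PropertiesGeneralIndex} together with \cref{prop:SRE_index}, keeping in mind the one structural feature of \cref{def: THE Index}: the flux is inserted with the total charge $\hat Q$, while the index is measured with the partial charge $Q\hat\otimes\Id$. Integrality (i) is then immediate. Since $\hat\omega$ is SRE it is pure, and $\hat\omega^D=\hat\omega\circ\Ad{\hat u}$ is pure as the image of a pure state under an automorphism; the intertwiner $\hat u\in\hat\cA^{\mathrm{al}}\subseteq\caD(\delta^{Q\hat\otimes\Id})$ (via \cref{prop:chain_derivation}) shows that $(\hat\omega,\hat\omega^D)$ is $\rho^{Q\hat\otimes\Id}$-locally comparable, both states are $(Q\hat\otimes\Id)$-symmetric by \cref{prop:stackeddefectstate_symmetry}, and $(\rho^{Q\hat\otimes\Id})_{2\pi}=\mathrm{id}$ because $Q\hat\otimes\Id$ has integer spectrum. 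Hence \cref{thm:PropertiesGeneralIndex}(ii) yields $\mathcal I(\omega)\in\ZZ$.

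For well-definedness (ii) and continuity (iii) I would transcribe the two halves of the proof of \cref{prop:SRE_index} essentially verbatim, replacing its charge $\rho$ by $\rho^{Q\hat\otimes\Id}$ throughout. The only inputs used there are that the defect states are $\rho$-symmetric, that the interpolating family is weak-$*$ continuous, and that the GNS charge operator has integer spectrum, all of which survive the passage to the partial charge since the relevant states remain $(Q\hat\otimes\Id)$-symmetric by \cref{prop:stackeddefectstate_symmetry}. Concretely, for (ii) two symmetric parent Hamiltonians for $\hat\omega$ give defect states $\hat\omega^{D,1},\hat\omega^{D,2}$ whose GNS representatives in $\caH_{\hat\omega}$ form a norm-continuous path of eigenvectors of the partial charge operator, hence lie in a fixed eigenspace, so $\caN_{\rho^{Q\hat\otimes\Id}}(\hat\omega^{D,2},\hat\omega^{D,1})=0$ and additivity \cref{thm:PropertiesGeneralIndex}(vi) closes the argument. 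For (iii), if $\tilde\omega=\omega\circ\beta$ with $\beta$ a $Q$-preserving LGA, then $\hat{\tilde\omega}=\hat\omega\circ(\beta\hat\otimes\mathrm{id})$ with $\beta\hat\otimes\mathrm{id}$ an LGA that is both $\hat Q$- and $(Q\hat\otimes\Id)$-preserving; this shows $\tilde\omega$ is again symmetric invertible with the same inverse $\omega'$, and the same continuity argument applied along the flow $\beta_s\hat\otimes\mathrm{id}$, together with \cref{thm:PropertiesGeneralIndex}(iv), gives $\mathcal I(\tilde\omega)=\mathcal I(\omega)$.

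Additivity (iv) I would obtain by exploiting well-definedness to compute with a factorized Hamiltonian. Taking the inverse of $\omega_1\hat\otimes\omega_2$ to be $\omega_1'\hat\otimes\omega_2'$, a reshuffling of internal indices identifies $\widehat{\omega_1\hat\otimes\omega_2}$ with $\hat\omega_1\hat\otimes\hat\omega_2$, a tensor product of two SRE states and hence SRE, so $\omega_1\hat\otimes\omega_2$ is symmetric invertible. Using (ii) I may then compute with the product parent Hamiltonian $\hat H_1\hat\otimes\Id+\Id\hat\otimes\hat H_2$: because each $\hat H_j$ involves only the generators of the $j$-th block and $\hat Q_{12}=\hat Q_1+\hat Q_2$, the whole construction of \cref{sec:flux_insertion} splits, giving $H^\up(\phi)=H_1^\up(\phi)+H_2^\up(\phi)$, a spectral flow $K=K_1+K_2$, a defect automorphism $\hat\D_{12}=\hat\D_1\circ\hat\D_2$, and an intertwiner $\hat u_{12}=\hat u_1\hat\otimes\hat u_2$. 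Since $\delta^{(Q_1+Q_2)\hat\otimes\Id}$ acts as $\delta^{Q_1}$ on the first block and as $\delta^{Q_2}$ on the second, the Leibniz computation underlying \cref{thm:PropertiesGeneralIndex}(viii) yields $\mathcal I(\omega_1\hat\otimes\omega_2)=\mathcal I(\omega_1)+\mathcal I(\omega_2)$.

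For surjectivity (v) I would invoke the quasi-free correspondence of \cref{sec:free}, whereby for a quasi-free invertible state $\omega_P$ the index $\mathcal I(\omega_P)$ equals the single-particle spectral flow, i.e. the Chern number of $P$. A lattice Chern insulator with unit Chern number then furnishes an invertible state of index $1$, its complex-conjugate (time-reversed) copy has index $-1$, and a pure product state has index $0$; additivity (iv) realizes every $n\in\ZZ$. The step I expect to be the main obstacle is the factorization in (iv): one must verify that the reshuffling of CAR generators built into $\hat\otimes$ does not spoil the splitting of the defect unitary, and that the implementer of the product construction may genuinely be taken as $\hat u_1\hat\otimes\hat u_2$. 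Here it is crucial that the $\hat u_j$ are even (as provided by \cref{prop:extistence_of_U}) and that $\hat Q$ splits additively, so that the anticommutation signs introduced by the stacking cancel and the partial-charge computation goes through cleanly.
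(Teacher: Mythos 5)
Your proposal follows the paper's own proof essentially step for step: (i) by applying \cref{thm:PropertiesGeneralIndex}(ii) to the pair $(\hat\omega,\hat\omega^D)$ with the partial charge $Q\hat\otimes\Id$, (ii) and (iii) by rerunning the two halves of the argument of \cref{prop:SRE_index} on $\hat\cA$ (the paper does exactly this, including the identification $\hat{\tilde\omega}=\hat\omega\circ(\beta\hat\otimes\mathrm{id})$ to get symmetric invertibility of $\tilde\omega$), (iv) by factorizing the defect construction across the reshuffled stacking and invoking \cref{thm:PropertiesGeneralIndex}(viii), and (v) from the quasi-free results of \cref{sec:free}.

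One correction on the point you single out as the main obstacle in (iv): \cref{prop:extistence_of_U} provides only a \emph{homogeneous} intertwiner, not an even one, and odd intertwiners genuinely occur --- in \cref{example:shift} the unitary $a_0+a_0^\ast$ is odd, and the same happens whenever the transported charge is odd. Fortunately evenness is not what is needed. Since $\delta^{Q}$ preserves parity, the combination $\hat u_j^\ast\delta^{Q}(\hat u_j)$ is even for any homogeneous unitary $\hat u_j$, hence commutes with everything supported in the other block, so the Leibniz computation yielding the sum of the two indices goes through verbatim. Likewise, conjugation by the product $\hat u_1\hat u_2$ in the stacked algebra agrees with the blockwise conjugation up to at most the total parity automorphism $\theta$, and since the defect states are $Q$-symmetric, hence even, the factorization $(\hat \omega_1 \hat \otimes \hat \omega_2)^D = (\hat \omega_1)^D \hat \otimes (\hat \omega_2)^D$ of \emph{states} is unaffected. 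So your argument is correct once ``even'' is replaced by ``homogeneous'' together with these two observations.
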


\begin{rem}
    (i) If the stacked state $\omega\hat \otimes \omega'$ is not only SRE, but it is $\hat Q$-equivalent to a pure product state, then 
    \begin{equation*}
        \mathcal I(\omega')=-\mathcal I(\omega)
    \end{equation*}
    by \textit{(iii)} and \textit{(iv)} of \cref{thm:main}. \\
    (ii) At this point, the injectivity of the index remains an open question, namely whether two invertible states with the same index are necessarily $Q$-equivalent.
\end{rem}

\begin{proof} 
Since
$$
\mathcal I(\omega)= 
        \caN_{\rho^{Q\hat \otimes \Id}}(\hat\omega,\hat\omega^D)  =  i(\hat\omega,Q\hat \otimes \Id),
$$
 the theorem is a direct consequence of \cref{thm:PropertiesGeneralIndex} for the abstract index and \cref{prop:SRE_index} on $\hat \cA$. We check that the assumptions hold in the context of invertible states.
 
        \textit{(i)} The states $\hat \omega$ and $\hat \omega^D = \hat \omega \circ \mathrm{Ad}_{\hat u}$ are $\hat \rho$-locally-comparable since $\hat u \in \hat \cAal$ and $Q = Q \hat \otimes \Id$ is a $0$-chain of $\hat \cA$. The state $\hat \omega = \omega \hat\otimes \omega'$ is $Q$-symmetric since $\omega$ is symmetric invertible. Moreover, $\hat \omega^D$ is $Q$-symmetric as well by \cref{prop:stackeddefectstate_symmetry}. Thus $\mathcal I(\omega) = \mathcal N_{\rho^Q}(\hat \omega, \hat \omega^D) \in \mathbb Z$ by \cref{thm:PropertiesGeneralIndex}(ii).

        \textit{(ii)} This follows as in the proof of~\cref{prop:SRE_index}.
        
        \textit{(iii)} Let $\tilde \omega$ be $Q$-equivalent to $\omega$. Let $\beta$ be a $Q$-preserving LGA on $\cA$ such that $\tilde \omega = \omega \circ \beta$. Since $\omega$ is invertible, let $ \omega'$ be its inverse. One has $\omega \hat \otimes  \omega' = \hat \omega_0 \circ \hat \alpha$ with an LGA $\hat \alpha$ and a pure product state $\hat \omega_0$ on $\hat \cA$. Thus we have
        $$
        \tilde \omega \hat \otimes  \omega'  = \hat \omega_0 \circ \hat \alpha \circ (\beta \hat\otimes \Id)\,.
        $$
        Thus $\tilde \omega$ is invertible. Moreover it is $Q$-symmetric and since $\hat \omega$ is symmetric invertible, $ \omega'$ is $Q'$-symmetric. Thus $\tilde \omega$ is symmetric invertible, so that $\mathcal I(\tilde \omega)$ is well defined. The invariance of the index $\mathcal I(\tilde \omega) = \mathcal I(\omega)$ follows from \cref{thm:PropertiesGeneralIndex}(iv) on $\hat \cA$ and goes along the same lines as the proof of \cref{prop:SRE_index} but on $\hat \cA$.

        \textit{(iv)} If $\omega_1$ and $\omega_2$ are symmetric invertible with inverses $ \omega'_1$ and $ \omega'_2$, then $\omega_1 \hat \otimes \omega_2$ is symmetric invertible on $\cA^1\hat\otimes \cA^2$ with inverse $ \omega'_1\hat \otimes  \omega'_2$. The defect state construction factors on each algebra. In particular, up to the reshuffling 
        $$
        \cA^1 \hat\otimes \cA^2 \hat \otimes (\cA^1)' \hat \otimes (\cA^2)' \cong \cA^1 \hat\otimes (\cA^1)' \hat \otimes   \cA^2 \hat \otimes (\cA^2)' = \hat \cA^1 \hat \otimes \hat \cA^2,
        $$
        one has $(\hat \omega_1 \hat \otimes \hat \omega_2)^D = (\hat \omega_1)^D \hat \otimes (\hat \omega_2)^D$ so that, by \cref{thm:PropertiesGeneralIndex}(viii),
        \begin{align*}
        \mathcal I(\omega_1 \hat\otimes \omega_2) &= \mathcal N_{\hat \rho^{Q_1}\hat\otimes \hat \rho^{Q_2}}(\omega_1 \hat\otimes \omega_2, (\hat \omega_1)^D \hat \otimes (\hat \omega_2)^D) \cr &= \mathcal N_{\hat \rho^{Q_1}}(\omega_1, (\hat \omega_1)^D) + \mathcal N_{\hat \rho^{Q_2}}(\omega_2, (\hat \omega_2)^D) \cr &= \mathcal I(\omega_1)+\mathcal I(\omega_2).
        \end{align*}
        
        \textit{(v)} This part is an immediate consequence of the quasi-free case, see \cref{thm:quasi-free} and \cref{prop:adiabatic vs spectral} below.
\end{proof}

\begin{rem}
    One may wonder if the abstract index could also find some relevance for the fractional quantum Hall effect. We believe that it may be the case in the following situation, which is similar to the quantization argument in~\cite{Categories}: Even if the initial state is not invertible, it may be that the flux insertion operation becomes inner after $q$ units of quantum flux have been inserted. If this is the case, the arguments above yield a fractional Hall conductance of the form $\frac{n}{q}$. It should be noted that in such a framework there is no prediction about the possible fractions one may obtain, see \cite{frohlich1991large}.
\end{rem}

\section{Quasi-free fermions and invertibility}\label{sec:free}


We now return to the discussion at the beginning of~\cref{sec:flux_insertion} and we turn to the relation between the index of a pair of projections from \cref{thm:correspondence between many-body and single-particle indices} and the many-body index for the quantum Hall effect from \cref{thm:main}, when applied to quasi-free states. We shall show that the two indices coincide when they are simultaneously well-defined. 

Let $\mathcal H = \ell^2(\mathbb Z^2)$ (for simplicity, we ignore spin here) and let $P$ be the Fermi projection associated to some gapped local Hamiltonian $H$. Typically, we assume \eql{\label{eq:exp locality}|H_{x,y}|:= |\langle \delta_x, H \delta_y\rangle | \leq C \ee^{-\mu|x-y|},
} and the kernel of $P$ has the same property whenever the Fermi energy lies in a gap. Consider the quasi-free state $\omega_P$ on $\cA=\mathrm{CAR}(\mathcal H)$, see \cref{eq:quasifreefromP} above. We first recall 
\begin{prop}[{\cite[Proposition 2.5]{BachmannBolsRahnama24}}]\label{prop:Triviality of SRE}
    Assume $H$ is translation-invariant on $\mathcal H$ and consider the associated Chern number $c(P)$ for the Fermi projection. If $c(P)=0$ then $\omega_P$ is stably SRE. 
\end{prop}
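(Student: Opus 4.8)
The plan is to use the standard fact that in two dimensions the Chern number is the only obstruction to connecting a translation-invariant gapped Fermi projection to a trivial atomic one, once sufficiently many inert bands have been added. First I would pass to the Bloch--Floquet representation provided by translation invariance: the Fermi projection $P$ becomes a family $k \mapsto P(k)$ of rank-$r$ orthogonal projections on $\mathbb{C}^n$ indexed by the Brillouin torus $\mathbb{T}^2$. The gap assumption together with the exponential locality bound \cref{eq:exp locality} translates into real-analyticity of $k \mapsto P(k)$, and conversely smooth dependence on $k$ yields exponential decay of the position-space kernel. The Chern number $c(P)$ is then precisely the first Chern number of the vector bundle over $\mathbb{T}^2$ whose fibre at $k$ is $\operatorname{Ran} P(k)$.

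Next I would use stacking to enter the \emph{stable range}. Complex vector bundles over the two-dimensional base $\mathbb{T}^2$ are classified by rank together with $c_1$, but the homotopy classification of projection-valued maps $\mathbb{T}^2 \to \mathrm{Gr}(r,n)$ by the Chern number \emph{alone} requires both $r$ and $n-r$ to be large relative to the fixed data, since only then does $\mathrm{Gr}(r,n)$ approximate the classifying space $BU(r)$ closely enough that $[\mathbb{T}^2,\mathrm{Gr}(r,n)]\cong H^2(\mathbb{T}^2;\mathbb{Z})$. I would therefore stack $\omega_P$ with an atomic (product) quasi-free state $\omega'$ on an auxiliary CAR algebra over $\ell^2(\mathbb{Z}^2)\otimes\mathbb{C}^m$, corresponding to adding some filled and some empty inert bands. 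The combined Fermi projection $\tilde P(k) = P(k)\oplus(\text{constant})$ has the same Chern number as $P$, and for $m$ large we are in the stable range.

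The heart of the argument is then a homotopy. Since $c(\tilde P) = c(P) = 0$ and we are in the stable range, $k \mapsto \tilde P(k)$ is homotopic to a \emph{constant} projection $P_1$ through a norm-continuous --- indeed smooth --- path $\{P_t(k)\}_{t\in[0,1]}$ of projections, each spectrally isolated (hence gapped) and each analytic in $k$ uniformly in $t$. In position space this is a path of translation-invariant gapped Hamiltonians, e.g.\ the flat-band Hamiltonians $\mathbb{1} - 2P_t$, with uniformly exponentially decaying kernels and a uniform spectral gap; the endpoint $P_1$, being constant in $k$, is diagonal in the position basis and therefore defines a pure product state $\omega_0$ (as recorded in the remark on atomic-limit projections above).

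Finally I would feed this gapped analytic path into the quasi-adiabatic continuation (spectral flow) machinery used throughout the paper --- the construction around \cref{eq:Def of K}, resting on the Lieb--Robinson bounds underlying \cref{prop:chain_derivation} --- to obtain an LGA $\alpha$ with $\omega_{\tilde P} = \omega_0 \circ \alpha$. Since $\omega_{\tilde P} = \omega_P \hat\otimes \omega'$ and $\omega'$ is a product state, this exhibits $\omega_P \hat\otimes \omega'$ as SRE with product inverse, i.e.\ $\omega_P$ is stably SRE. I expect the main obstacle to be the third step: producing the homotopy of projections with enough regularity that exponential locality and the uniform gap are preserved along the entire path, for only then does the spectral flow genuinely yield a \emph{locally generated} automorphism lying in $\cAal$. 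The purely topological input ($c_1 = 0 \Rightarrow$ stably trivial bundle, hence nullhomotopic projection-valued map in the stable range) is standard; the quantitative control needed to remain inside the algebra of almost-local observables is what requires care.
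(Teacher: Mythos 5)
Your outline is correct and is essentially the argument behind the proof this paper relies on; note that the paper does not prove \cref{prop:Triviality of SRE} itself but defers to \cite[Proposition 2.5 and Appendix B]{BachmannBolsRahnama24}, whose strategy is the same as yours: Bloch--Floquet decomposition, triviality of the Bloch bundle from the vanishing Chern number, a gapped homotopy of translation-invariant projections down to an atomic one, and a locally generated automorphism obtained by continuation along the gapped path. Two refinements are worth recording. First, the ``stable range'' discussion is unnecessary over a two-dimensional Brillouin zone: for $0<r<n$ the Grassmannian $\mathrm{Gr}(r,n)$ is simply connected with $\pi_2\cong\ZZ$, so obstruction theory already gives $[\bbT^2,\mathrm{Gr}(r,n)]\cong H^2(\bbT^2;\ZZ)$, detected by $c_1$; hence $c(P)=0$ makes $k\mapsto P(k)$ nullhomotopic without adding any bands (stacking is of course harmless, and natural given that the conclusion is only ``stably SRE''). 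Second, the step you flag as the main obstacle is softer than you fear, and you need not invoke the many-body quasi-adiabatic machinery at all: smooth the homotopy rel endpoints inside the compact manifold $\mathrm{Gr}(r,n)$ (smoothness rather than uniform analyticity suffices, since membership in $\cAal$ only requires faster-than-polynomial localization), observe that the flat-band gap of $\Id-2P_t$ is identically $2$ so uniformity of the gap is automatic, and take Kato's parallel transport generator $K_t(k)=\iu[\partial_t P_t(k),P_t(k)]$, whose position-space kernel is translation invariant and rapidly decaying; second-quantizing it as the $0$-chain with terms $F_x=\frac{1}{2}\sum_y\br{(K_t)_{xy}a_x^\ast a_y+(K_t)_{yx}a_y^\ast a_x}$ yields a quasi-free LGA $\alpha$ with $\omega_{\tilde P}=\omega_0\circ\alpha$ directly at the single-particle level. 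This closes your argument while keeping every state along the path quasi-free, which is both lighter and closer in spirit to what the cited appendix does.
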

Note that the assumption there is about time-reversal invariance, but the argument (in \cite[Appendix~B]{BachmannBolsRahnama24}) goes through the consequence thereof, which is that the Chern number vanishes.

Conversely, if a quasi-free state is stably SRE and the deformation is a family of quasi-free states, then the smooth path of states yields a smooth deformation of the bundle to a trivial bundle and so the Chern number must vanish.

Translation invariance is not expected to be relevant for this question. Indeed, \cite{chung2025essentiallycommutingunitary} proves that if $P$ and $Q$ have the same Chern number and both satisfy a certain non-triviality assumption (\cite[Definition 1.2]{chung2025essentiallycommutingunitary}), then there exists a path of local unitaries $[0,1]\ni t\mapsto U_t$ with $U_0=\Id$ such that $Q=U_1^\ast P U_1$, i.e., $P$ and $Q$ are homotopic within the space of local projections.

Now suppose $P$ has zero Chern number. By the same result, we can connect $P$ to a projection $Q$ that is diagonal in space (and hence also has zero Chern number). By lifting $t\mapsto U_t$ to an LGA, this yields a continuous path from $\omega_P$ to $\omega_Q$ (a product state), and therefore shows that $\omega_P$ is an SRE state.

The problem with this argument is that the notion of locality used in \cite{chung2025essentiallycommutingunitary} is a far cry from \cref{eq:exp locality}: there, $P$ is local iff $[P,L]\in\calK$, where $L$ is the Laughlin unitary $L\equiv\exp\br{\ii\arg\br{X_1+\ii X_2}}$. Bridging this gap would require formulating an analogous notion of locality in the many-body setting (and proving a Lieb--Robinson bound for it). Since this line of study is rather tangential to the present paper we do not pursue it further here and instead merely phrase a 
\begin{conj}\label{conjecture:Triviality of SRE}
    Let $H$ be a gapped Hamiltonian obeying \cref{eq:exp locality} for which the Fermi projection $P$ has a well defined topological index $c(P)$. Then $c(P)=0$ if and only if $\omega_P$ is stably SRE.
\end{conj}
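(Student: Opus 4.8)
The statement is an equivalence whose two implications are of very different natures. The direction ``$\omega_P$ stably SRE $\Rightarrow \mathcal{C}(P)=0$'' is a consequence of the index theory developed above together with the quasi-free correspondence, whereas the converse is a genuine single-particle homotopy problem, which is the true obstacle and the reason the statement is phrased as a conjecture.

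For the accessible direction, suppose $\omega_P$ is stably SRE, so there is a pure product state $\omega'$ on an auxiliary algebra $\cA'$ with $\hat\omega=\omega_P\hat\otimes\omega'$ SRE. A pure product state is quasi-free with position-diagonal Fermi projection $P'$, so $\hat\omega=\omega_{P\oplus P'}$ is itself a quasi-free SRE state. For such states the unstacked index vanishes, $i(\hat\omega,\hat Q)=0$, as recorded after \cref{prop:SRE_index} and established via the quasi-free correspondence \cref{thm:quasi-free} below. Writing $\hat u$ for the flux-insertion unitary of \cref{eq:invertibleindex} and using $\delta^{\hat Q}=\delta^{Q\hat\otimes\Id}+\delta^{\Id\hat\otimes Q'}$ (the charges of \cref{eq:defQstack} commute), the linearity of $X\mapsto\iu\hat\omega(\hat u^\ast\delta^{X}(\hat u))$ gives
\begin{equation*}
0=i(\hat\omega,\hat Q)=i(\hat\omega,Q\hat\otimes\Id)+i(\hat\omega,\Id\hat\otimes Q')=\mathcal{I}(\omega_P)+i(\hat\omega,\Id\hat\otimes Q').
\end{equation*}
The second term measures $Q'$-charge transport in the auxiliary band under the same flux insertion; by symmetry of the construction and the quasi-free correspondence it equals the Chern contribution of $P'$, which is $0$ because $P'$ is diagonal. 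Hence $\mathcal{I}(\omega_P)=0$, and since $\mathcal{I}(\omega_P)=\mathcal{C}(P)$ by \cref{thm:quasi-free} we conclude $\mathcal{C}(P)=0$.

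For the converse, $\mathcal{C}(P)=0\Rightarrow\omega_P$ stably SRE, the plan is to reduce to a single-particle statement and then second-quantize. One would seek a path $[0,1]\ni t\mapsto P_t$ of orthogonal projections with $P_0=P$ and $P_1$ diagonal in position (the atomic limit, after stacking with trivial bands if needed), whose kernels $|\langle\delta_x,P_t\delta_y\rangle|$ decay uniformly in $t$ at the almost-local rate of \cref{eq:exp locality}. Differentiating produces a single-particle generator $G_t$ with the same decay; the associated Bogoliubov flow is then a genuine LGA $\alpha$ on $\cA$ (its generator being an almost-local $0$-chain), giving $\omega_P=\omega_{P_1}\circ\alpha$ with $\omega_{P_1}$ a pure product state, which is exactly the SRE property, with stacking absorbing the passage to the atomic limit.

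The main obstacle is producing this homotopy \emph{while preserving locality}. That $\mathcal{C}(P)=0$ is the only obstruction to connecting $P$ to the atomic limit through local projections is known: in the translation-invariant case it is \cref{prop:Triviality of SRE}, and in general it follows from \cite{chung2025essentiallycommutingunitary}. The difficulty is that the locality employed there is the commutator condition $[P,L]\in\calK$ with $L$ the Laughlin unitary, which is far weaker than \cref{eq:exp locality} and does not by itself ensure that $G_t$ is an almost-local $0$-chain or that its flow obeys a Lieb--Robinson bound. Upgrading the homotopy of \cite{chung2025essentiallycommutingunitary} to one with uniform almost-local decay---equivalently, building a many-body locality framework adapted to the $[P,L]\in\calK$ condition together with the corresponding Lieb--Robinson estimate---is the crux, and is exactly the step we do not carry out here.
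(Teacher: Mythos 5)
You are reviewing a statement that the paper itself does not prove: it is stated as a conjecture, and the paper's surrounding discussion gives exactly your diagnosis of the direction $\mathcal C(P)=0\Rightarrow\omega_P$ stably SRE. The known homotopy results (\cref{prop:Triviality of SRE} in the translation-invariant case, \cite{chung2025essentiallycommutingunitary} in general) use the locality notion $[P,L]\in\calK(\calH)$, which is far weaker than \cref{eq:exp locality}, and upgrading the homotopy to one that second-quantizes into an LGA would require a new many-body locality framework with its own Lieb--Robinson bound. Your treatment of that direction --- an outline plus the admission that this crux is not carried out --- therefore coincides with the paper's own position, but it is not a proof.

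The genuine gap is in the direction you call ``accessible'', namely stably SRE $\Rightarrow\mathcal C(P)=0$; the paper does not establish this direction either (it explicitly labels it a belief: ``it is believed that a non-vanishing Hall conductance is an obstruction to being stably SRE''), and your argument for it does not close the gap. Two concrete problems. First, a pure product state need not be quasi-free with a position-diagonal projection: the single-site state generated by $\bigl(a_1^\ast a_2^\ast + a_3^\ast a_4^\ast\bigr)\Omega/\sqrt{2}$ (four modes on one site) is pure, product and $Q$-symmetric, but its two-point function is $\tfrac12\Id$, not a projection, so it is not of the form $\omega_{P'}$; moreover the paper's definition of stably SRE does not require the product inverse to be $Q'$-symmetric, so $\hat\omega$ need not even be symmetric invertible in the sense needed for $\mathcal I$ to be defined and integer-valued. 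Second, and more fundamentally, the input $i(\hat\omega,\hat Q)=0$ for quasi-free SRE states is not a theorem of the paper: it appears only as a motivating remark before \cref{sec:MBI}, and nothing in the paper proves it. The invariance results (\cref{prop:SRE_index}, \cref{thm:main}(iii)) apply only to $Q$-preserving LGAs, whereas the entangler in the SRE and stably-SRE definitions is an arbitrary LGA, so there is no mechanism to conclude that an SRE state has vanishing index. Since, by \cref{thm:correspondence between many-body and single-particle indices} and \cref{thm:quasi-free}, that vanishing is equivalent to the vanishing of $\mathrm{index}(P,P^{\mathrm{qa}}_{2\pi})$, i.e.\ of the Chern number of the quasi-free state in question, your argument assumes precisely the implication it claims to prove. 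What the paper's results do yield is the weaker statement: if $\hat\omega$ is $\hat Q$-\emph{equivalent} (via a charge-preserving LGA) to a symmetric pure product state, then $\mathcal I(\omega_P)=0$ by \cref{thm:main}(iii)--(iv) and \cref{Rem:SRE U}(ii); the whole difficulty of direction B is bridging from a general, non-symmetric entangler to a symmetric one, and that step is open. Both directions of the conjecture remain open.
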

\noindent Note that the `topological index' $c(P)$ is, strictly speaking, a `Chern number' only in the periodic setting, but we continue to use the same symbol for it.

We return to the notion of invertibility in the quasi-free setting. Let $\mathcal H \hat \oplus \mathcal H$ be the \emph{external} direct sum  and consider the projection $P\hat\oplus P^\perp$, where $P^\perp=\Id -P$. One then has $c(P\hat\oplus P^\perp) = c(P)+c(P^\perp)=0$. It follows that, in the translation-invariant setting, the state $\omega_{P\hat\oplus P^\perp} = \omega_P \hat \otimes \omega_{P^\perp}$ is stably-SRE. Consequently, the quasi-free state $\omega_P$ is invertible. In particular, we can associate to it the many-body index $\caI(\omega_P)$.

We use the independence of the index from the choice of parent Hamiltonian provided by \cref{thm:main}(ii) to choose $H\hat\oplus(-H)$ as parent Hamiltonian to $P\hat\oplus P^\perp$. Note that this direct sum structure is not what would arise from the general construction presented in \cref{lem:Q parent} because the automorphism mapping $\omega_{P\hat\oplus P^\perp}$ to a product state would in general mix the two layers. Now if the parent Hamiltonian is a direct sum, then the flux insertion unitary similarly factorizes as $U^\mathrm{qa}_{\phi_1,\phi_2} \hat\oplus U^\mathrm{qa}_{\phi_1,\phi_2}$. Here $U^\mathrm{qa}_{\phi_1,\phi_2}$ is the flux-insertion unitary given by
$$
U^\mathrm{qa}_{\phi_1,\phi_1}=\Id, \qquad \ii \partial_{\phi_2}U^\mathrm{qa}_{\phi_2,\phi_1} = K_{\phi_2} U^\mathrm{qa}_{\phi_2,\phi_1},
$$
where $K_\phi=\chi_\mathrm{left} K^\uparrow_\phi \chi_\mathrm{left}$ and $K^\uparrow_\phi= -\ii[\partial P^\uparrow_\phi, P^\uparrow_\phi]$ is the Kato generator of the family of projections $P^\uparrow_\phi = \ee^{\ii \phi \chi_\uparrow}P\ee^{-\ii \phi \chi_\uparrow}$. $\chi_\uparrow$ and $\chi_\mathrm{left}$ are the respective projections to the upper and left half-planes. See~\cite[Sec 5.1]{BachmannBolsRahnama24} for more details.

As a result, the defect state is again a quasi-free state, corresponding to a projection $P^\mathrm{qa}_{2\pi}\hat\oplus(P^\mathrm{qa}_{2\pi})^\perp$, where $P^\mathrm{qa}_{2\pi} = U^\mathrm{qa}_{0,2\pi} P (U^\mathrm{qa}_{0,2\pi})\str$. With this, the construction in the proof of \cref{thm:correspondence between many-body and single-particle indices} completely factorizes. However, in the final step of the proof when one computes the index, the charge automorphism $\rho$ must be taken with $Q\hat \otimes\Id$ rather than $Q\hat \otimes\Id + \Id\hat\otimes Q$, see the definition in \cref{eq:invertibleindex} and the discussion before it. Hence $\delta^\rho$ acts trivially on the second factor and we conclude

\begin{thm}\label{thm:quasi-free}
If the difference $P-P^\mathrm{qa}_{2\pi}$ is trace class then
$$
    \mathcal I(\omega_P) = \mathrm{index}(P, P^\mathrm{qa}_{2\pi}).
$$
\end{thm}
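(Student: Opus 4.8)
The plan is to reduce the index to a quasi-free computation on the stacked system and to exploit that the charge $Q\hat\otimes\Id$ appearing in \cref{def: THE Index} probes only the first CAR factor. The preceding construction already supplies the key ingredients: the two states defining $\mathcal I(\omega_P)$ are the quasi-free states $\hat\omega=\omega_{P\hat\oplus P^\perp}$ and $\hat\omega^D=\omega_{P^\mathrm{qa}_{2\pi}\hat\oplus(P^\mathrm{qa}_{2\pi})^\perp}$ on $\hat\cA=\mathrm{CAR}(\mathcal H\hat\oplus\mathcal H)$, and they are $\rho^Q\hat\otimes\mathrm{id}$-locally-comparable through the unitary $\hat u\in\hat\cA^{\mathrm{al}}$ furnished by \cref{prop:extistence_of_U}.

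Since $\hat\cA$ is a UHF algebra, I would then invoke the UHF expression \cref{eq:abstract index for UHF algebras}, which has the virtue of bypassing the intertwiner $\hat u$ altogether. The relevant derivation $\delta^{Q\hat\otimes\Id}$ is the strong limit of $\ii[q_N^{(1)},\cdot\,]$ with $q_N^{(1)}=\sum_{j=1}^N a^*(e_j\hat\oplus 0)\,a(e_j\hat\oplus 0)$, where $\{e_j\}_j$ is any orthonormal basis of $\mathcal H$ and --- crucially --- the sum runs over the first factor only. This gives
\begin{equation*}
    \mathcal I(\omega_P)=\lim_{N\to\infty}\bigl(\hat\omega(q_N^{(1)})-\hat\omega^D(q_N^{(1)})\bigr).
\end{equation*}

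The remaining computation is routine. The Gaussian formula \cref{eq:quasifreefromP} yields $\hat\omega(q_N^{(1)})=\sum_{j=1}^N\langle e_j,Pe_j\rangle=\tr(P\chi_N)$ and, identically, $\hat\omega^D(q_N^{(1)})=\tr(P^\mathrm{qa}_{2\pi}\chi_N)$, where $\chi_N$ is the orthogonal projection onto $\szpan\{e_1,\ldots,e_N\}$; the second factor contributes nothing precisely because $Q\hat\otimes\Id$ acts trivially there. Hence $\mathcal I(\omega_P)=\lim_N\tr\bigl((P-P^\mathrm{qa}_{2\pi})\chi_N\bigr)$, and since $\chi_N\to\Id$ strongly while $P-P^\mathrm{qa}_{2\pi}\in\caJ_1(\caH)$, the product converges in trace norm, so the limit equals $\tr(P-P^\mathrm{qa}_{2\pi})$. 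This is exactly $\mathrm{index}(P,P^\mathrm{qa}_{2\pi})$ by the Schatten-class formula recorded after \cref{eq:IPP} (taking $p=1$, $p'=0$).

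No deep obstacle remains once the factorized defect state is in hand; what must be handled with care is the asymmetry of the charge. Had one used the full stacked charge $\hat Q=Q\hat\otimes\Id+\Id\hat\otimes Q$, the identical computation would return $\tr\bigl((P-P^\mathrm{qa}_{2\pi})+((P^\mathrm{qa}_{2\pi})^\perp-P^\perp)\bigr)=0$, reflecting the triviality of the stacked system. It is the deliberate choice of $Q\hat\otimes\Id$ in \cref{def: THE Index} that isolates the charge transported in the original layer and produces the nontrivial index. The only genuine analytic input is the trace-norm convergence in the final limit, which is where the trace-class hypothesis $P-P^\mathrm{qa}_{2\pi}\in\caJ_1(\caH)$ is essential.
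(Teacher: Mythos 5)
Your proposal is correct, but the final computation runs along a genuinely different track than the paper's. The paper proves the theorem by observing that, once the defect state is identified with the quasi-free state $\omega_{P^{\mathrm{qa}}_{2\pi}\hat\oplus(P^{\mathrm{qa}}_{2\pi})^\perp}$, the entire explicit construction in the proof of \cref{thm:correspondence between many-body and single-particle indices} (the decomposition of \cref{lem:P-Q_decomposition}, the implementer $\Gamma(V)$, the excess-charge unitaries $v_\pm$) factorizes over the two layers, and it then re-runs that computation with the asymmetric charge $Q\hat\otimes\Id$, whose derivation annihilates the second layer. You instead bypass any explicit intertwiner: you use only the abstract unitary $\hat u\in\hat\cA^{\mathrm{al}}$ of \cref{prop:extistence_of_U} --- which is exactly what makes $\mathcal I(\omega_P)$ well defined in \cref{def: THE Index} --- and then invoke the UHF expression \cref{eq:abstract index for UHF algebras}, which removes the unitary from the formula and reduces everything to the Gaussian computation $\lim_N\tr\bigl((P-P^{\mathrm{qa}}_{2\pi})\chi_N\bigr)=\tr(P-P^{\mathrm{qa}}_{2\pi})=\mathrm{index}(P,P^{\mathrm{qa}}_{2\pi})$. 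This is precisely the shortcut the paper itself sketches in the remark preceding the proof of \cref{thm:correspondence between many-body and single-particle indices}, and it is legitimate here because $\rho$-local comparability on the stacked algebra comes for free from the many-body flux-insertion construction rather than from the quasi-free machinery. What the paper's route buys is explicitness about where the charge sits; what yours buys is brevity and a transparent reason why only the first layer contributes. Both routes rest on the same two nontrivial inputs from the text preceding the theorem: the choice of the direct-sum parent Hamiltonian (licensed by \cref{thm:main}(ii)), which makes the defect state quasi-free and factorized, and \cref{prop:extistence_of_U}.

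Two small repairs. First, to justify $\delta^{Q\hat\otimes\Id}(\hat u)=\lim_N\ii[q_N^{(1)},\hat u]$ in norm, you should take $\{e_j\}$ to be the position basis, so that the $q_N^{(1)}$ are partial sums of the charge $0$-chain and \cref{prop:chain_derivation} applies to the almost-local $\hat u$; for an arbitrary ONB this convergence on all of $\caD(\delta^{Q\hat\otimes\Id})$ is not automatic. Second, your closing remark about the full charge $\hat Q$ has a sign slip: the initial-minus-defect difference in the second layer is $P^\perp-(P^{\mathrm{qa}}_{2\pi})^\perp=P^{\mathrm{qa}}_{2\pi}-P$, not $(P^{\mathrm{qa}}_{2\pi})^\perp-P^\perp$; as written your expression equals $2\tr(P-P^{\mathrm{qa}}_{2\pi})$ rather than $0$. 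With the correct ordering the cancellation you intend does occur, and the remark then correctly explains why the asymmetric charge in \cref{def: THE Index} is essential.
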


One may wonder when $P-P^\mathrm{qa}_{2\pi}$ is trace class, so that the index on the right hand side is well defined. A typical case is given by a local Hamiltonian as described above with Fermi level in a spectral gap\footnote{Note that if we assume that $\vert P_{x,y}\vert$ is exponentially decaying in $\vert x - y\vert$, then $-P \hat\oplus P$ is a bona fide parent Hamiltonian of $P \hat\oplus P^\perp$.} and \cref{prop:adiabatic vs spectral} below. Consider then  the instantaneous Fermi projection $P^{\mu}_\phi = \chi_{(-\infty, \mu]}({H}_{\phi})$ for the Hamiltonians ${H}_{\phi}$ defined by
\begin{equation} \label{eq:free fermion flux Hamiltonians}
    {H}_{\phi}(x, i ; y, j) = \begin{cases}
        \ep{\iu \phi \, \mathrm{sgn}( x_2 - y_2 )} {H}(x, i ; y, j) & \text{if } x_1, y_1 \leq 0 \\
        {H}(x, i; y, j) & \text{otherwise}
    \end{cases}
\end{equation}
Notice that, by construction, $H_{2\pi}=H_0$ so that $H_{2\pi}$ is gapped around $\mu$ as well, and moreover $P^{\mu}_{2\pi} = P^\mu_0 = P$.

\begin{prop}[{\cite[Proof of Prop. 5.1]{BachmannBolsRahnama24}}] \label{prop:adiabatic vs spectral}
    Let $\mu$ be in the spectral gap of $H_0$ and such that all eigenvalue crossings of $\phi \mapsto {H}_{\phi}$ across $\mu$ are simple. Then for all $\phi$ the difference $P^{\qa}_{\phi} - P^{\mu}_{\phi}$ is trace class and
    \begin{equation*}
        \mathrm{SF}_{\mu}( [0, 2\pi] \ni \phi \mapsto H_{\phi} ) = \Tr (P^{\qa}_{2\pi} - P^{\mu}_{2\pi}).
    \end{equation*}
\end{prop}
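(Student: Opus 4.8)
The plan is to compute the spectral flow by tracking the integer-valued function
\[
f(\phi) := \Tr\br{P^{\qa}_\phi - P^\mu_\phi}, \qquad \phi\in[0,2\pi],
\]
and to show that it jumps by exactly one signed unit at each level crossing, so that $f(2\pi)$ equals the net spectral flow. Since $P^{\qa}_0 = P = P^\mu_0$ we have $f(0)=0$, and by construction $P^\mu_{2\pi}=P^\mu_0 = P$, so $f(2\pi)=\Tr(P^{\qa}_{2\pi}-P^\mu_{2\pi})$ is precisely the right-hand side of the claim. Thus the whole proposition reduces to identifying the jumps of $f$ with the crossings of $\phi\mapsto H_\phi$ through $\mu$.

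The first step is to establish that $P^{\qa}_\phi - P^\mu_\phi \in \caJ_1(\caH)$ for every $\phi$, so that $f$ is well-defined and, being a trace of a trace-class difference of two orthogonal projections, integer-valued (as in the discussion around \cref{eq:IPP}). Both operators are built from the exponentially local data $H_\phi$ (respectively the Kato transport of $P$), so their kernels decay exponentially off the diagonal by Combes--Thomas estimates. Away from the flux defect the half-line gauge and the rotated gauge differ only by a local unitary, so the two projections agree up to exponentially small corrections far from the puncture and the defect line; their difference is therefore concentrated near the origin and is trace class. This is exactly the localization analysis carried out in \cite[Proof of Prop. 5.1]{BachmannBolsRahnama24}, on which I would rely.

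Next I would control the $\phi$-dependence of each term separately. The transport $U^{\qa}_{0,\phi}$ is generated by the bounded Kato generator $K_\phi = \chi_{\mathrm{left}}K^\uparrow_\phi\chi_{\mathrm{left}}$ (boundedness follows since $[\chi_\uparrow,P]$ is bounded by locality), so $\phi\mapsto U^{\qa}_{0,\phi}$ is norm-continuous and $P^{\qa}_\phi = U^{\qa}_{0,\phi}P (U^{\qa}_{0,\phi})\str$ is a norm-continuous family that is unitarily equivalent to $P$ throughout, hence never jumps. On the other hand, wherever $\mu$ lies in the resolvent set of the (norm-continuous) family $H_\phi$, the Riesz formula gives norm-continuity of $P^\mu_\phi$, and together with the uniform trace-class bound above this upgrades to trace-norm continuity of the difference; thus $f$ is constant on every interval free of crossings. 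At a simple crossing $\phi_c$ the projection $P^\mu_\phi$ changes rank by one while $P^{\qa}_\phi$ stays continuous: an eigenvalue leaving the occupied subspace $(-\infty,\mu]$ makes $\Tr P^\mu_\phi$ drop by one, so $f$ jumps by $+1$, while a downward crossing makes it jump by $-1$. Summing these signed contributions over all crossings yields $f(2\pi) = \mathrm{SF}_\mu([0,2\pi]\ni\phi\mapsto H_\phi)$, which combined with $f(2\pi)=\Tr(P^{\qa}_{2\pi}-P^\mu_{2\pi})$ is the assertion; the simple-crossing hypothesis guarantees the crossings are isolated and each contributes a single unit.

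The main obstacle is the first step: the uniform trace-class control of $P^{\qa}_\phi - P^\mu_\phi$ and, in particular, the verification that across each crossing the difference changes by a genuine rank-one (signed) trace-class projection rather than merely a compact one, so that the jump of $f$ is exactly $\pm1$. This is where the exponential locality of $P$ and the localized nature of the flux defect enter decisively, and it is precisely the estimate imported from \cite[Proof of Prop. 5.1]{BachmannBolsRahnama24}; the remaining continuity and jump-counting arguments are then soft consequences of $f$ being integer-valued.
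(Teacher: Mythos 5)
Your overall skeleton is the natural one, and (as far as it can be compared) it matches the structure of the argument this paper relies on: the paper itself gives no proof of this proposition but imports it wholesale from \cite[Proof of Prop.~5.1]{BachmannBolsRahnama24}, and your plan --- set $f(\phi)=\Tr(P^{\qa}_\phi-P^\mu_\phi)$, note $f(0)=0$ and $P^\mu_{2\pi}=P$, show $f$ is integer-valued and locally constant between crossings, and show each simple crossing shifts $f$ by one signed unit --- is the standard route to such a statement. Deferring the trace-class/localization estimate to \cite{BachmannBolsRahnama24} is consistent with how the paper treats the whole proposition, though you should be aware this borders on circularity: the trace-class assertion is itself part of the statement being proved, so in a self-contained write-up this is the step that would carry essentially all of the analytic work (Combes--Thomas bounds, comparison of the half-line gauge with the bulk gauge away from the origin, uniformly in $\phi$).

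There is, however, one step that fails as written. You claim that operator-norm continuity of $P^{\qa}_\phi$ and $P^\mu_\phi$ ``together with the uniform trace-class bound \ldots upgrades to trace-norm continuity of the difference.'' That implication is false: norm convergence plus a uniform trace-norm bound gives neither trace-norm continuity nor even continuity of the trace. For a counterexample take $A_n=n^{-1}Q_n$ with $Q_n$ rank-$n$ orthogonal projections: $\Vert A_n\Vert\to 0$ and $\Vert A_n\Vert_1=1$ for all $n$, yet $\Tr A_n=1\not\to 0$. The correct and standard repair is to use that, for a trace-class difference of orthogonal projections, $\Tr(P^{\qa}_\phi-P^\mu_\phi)=\operatorname{index}(P^{\qa}_\phi,P^\mu_\phi)$ (cf.\ the discussion around \cref{eq:IPP}), and that this Fredholm index is invariant under norm-continuous deformations of the pair as long as the difference stays compact --- precisely the topological stability recalled in the introduction and in \cite{AvronSeilerSimon}. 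This gives local constancy of $f$ on crossing-free intervals directly from the norm continuity you already established, with no need for trace-norm continuity; the $\pm1$ jump at a simple crossing then follows from additivity of the index, since the one-sided limits of $P^\mu_\phi$ at the crossing differ by a rank-one eigenprojection. A further minor inaccuracy: simplicity of the crossings does not by itself guarantee that they are isolated (a simple eigenvalue branch could hug the level $\mu$); isolation, and hence finiteness of the sum of jumps, comes from the analyticity of $\phi\mapsto H_\phi$ (the phases in \cref{eq:free fermion flux Hamiltonians} are entire in $\phi$), which is worth stating explicitly.
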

We recall that the spectral flow counts the signed number of eigenvalue crossings with the fiducial line at $\mu$. See for example~\cite{DeNittisSchulzBaldes2016spectral} for a detailed study of spectral flow in the single-particle and discrete setting for the quantum Hall effect.

Summarizing, under the hypothesis of \cref{prop:adiabatic vs spectral}, one has various single-particle interpretations of the index, 
$$
\mathcal I(\omega_P) = \mathrm{index}(P^\mu_{2\pi},P^\mathrm{qa}_{2\pi}) = \Tr(P^\mathrm{qa}_{2\pi}-P^\mu_{2\pi})= \mathrm{SF}_{\mu}( [0, 2\pi] \ni \phi \mapsto {H}_{\phi} ) 
$$
which ultimately appears as a many-body version of spectral flow induced by magnetic flux insertion.

\subsection{Beyond quasi-free states}

We now have a complete picture of the relationship between the many-body index of a pair of pure states, the index of a pair of projections, the spectral flow associated with flux insertion and the spectral flow in a non-interacting picture. We point out that the stability of the many-body index and many-body perturbation theory allow one to go beyond the free setting. The proposition below is phrased in a translation-invariant setting to be able to use \cref{prop:Triviality of SRE}. This assumption can be dropped if \cref{conjecture:Triviality of SRE} is true.
\begin{prop}
    Let $H_\lambda=H_0+\lambda V$ where $H_0$ and $V$ are two $Q$-preserving $0$-chains in $\mathcal A$. Assume that for all $\lambda \in [0,1]$, $H_\lambda$ has a locally unique gapped ground state $\omega_\lambda$, and that $\omega_0$ is a quasi-free translation-invariant state. Then  $\omega_{1}$ is invertible and $\mathcal I(\omega_{1})=\mathcal I (\omega_0)$. 
\end{prop}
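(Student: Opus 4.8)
The plan is to exhibit a $Q$-preserving LGA that maps $\omega_0$ to $\omega_1$, so that the two states are $Q$-equivalent in the sense of the definition preceding \cref{thm:main}. Once this is in hand, both assertions follow at once from the continuity property \cref{thm:main}(iii): since $\omega_0$ will be seen to be symmetric invertible and $\omega_1$ is $Q$-equivalent to it, $\omega_1$ is automatically symmetric invertible and $\mathcal I(\omega_1)=\mathcal I(\omega_0)$.

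First I would record that $\omega_0$ is indeed symmetric invertible, so that $\mathcal I(\omega_0)$ is defined and can serve as the baseline. Being quasi-free and translation-invariant, $\omega_0=\omega_P$ for some Fermi projection $P$, and as explained at the beginning of this section the stacked state $\omega_P\hat\otimes\omega_{P^\perp}=\omega_{P\hat\oplus P^\perp}$ has vanishing Chern number $c(P)+c(P^\perp)=0$, hence is stably SRE by \cref{prop:Triviality of SRE}. Since both $\omega_P$ and $\omega_{P^\perp}$ are $\rho$-invariant quasi-free states, $\omega_0$ is symmetric invertible.

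The core step is the quasi-adiabatic (spectral-flow) construction connecting the gapped ground states along the path $\lambda\mapsto H_\lambda$. Because $H_\lambda=H_0+\lambda V$ is affine in $\lambda$ with $\tfrac{d}{d\lambda}H_\lambda=V$ a $0$-chain, and because the family has a uniform spectral gap by hypothesis, the spectral-flow generator
\begin{equation*}
    S_x(\lambda) := \int_{\mathbb R}\dd t\, W(t)\, \alpha_t^{H_\lambda}(V_x),
\end{equation*}
with $W$ the quasi-adiabatic kernel used in \cref{eq:Def of K}, defines a $0$-chain $S(\lambda)$ through the Lieb-Robinson bound, exactly as in the construction of $K$. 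The associated LGA $\beta:=\alpha_1^S$ then satisfies $\omega_1=\omega_0\circ\beta$ by the automorphic-equivalence theorem for uniformly gapped paths~\cite{AutomorphicEq,hastingswen,BachmannBolsRahnama24}. The decisive point for us is that $\beta$ is $Q$-preserving: since $V$ is a $Q$-preserving $0$-chain and each $H_\lambda$ is $Q$-preserving, the dynamics $\alpha_t^{H_\lambda}$ commutes with $\rho_\phi$, whence each $S_x(\lambda)$ is $\rho_\phi$-invariant and $S$ is a $Q$-preserving $0$-chain. Consequently $\omega_1$ is $Q$-symmetric, as $\omega_1\circ\rho_\phi=\omega_0\circ\rho_\phi\circ\beta=\omega_0\circ\beta=\omega_1$, and the pair $(\omega_0,\omega_1)$ is $Q$-equivalent.

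I expect the main obstacle to be the rigorous construction of this $Q$-preserving spectral flow in the infinite-volume C*-algebraic setting: one must verify that $S(\lambda)$ is genuinely a $0$-chain, i.e. establish the uniform spatial-decay estimates via Lieb-Robinson, and that the resulting $\beta$ truly intertwines the two locally unique gapped ground states. This is precisely where the uniform-gap assumption on the path $\lambda\mapsto H_\lambda$ is indispensable and where one invokes the automorphic-equivalence machinery; the $Q$-symmetry is preserved for free because both endpoints of the interpolation, and their derivative, are charge-conserving. Granting this, $\omega_1=\omega_0\circ\beta$ with $\beta$ a $Q$-preserving LGA exhibits $\omega_1$ as $Q$-equivalent to $\omega_0$, and \cref{thm:main}(iii) delivers both the invertibility of $\omega_1$ and the equality $\mathcal I(\omega_1)=\mathcal I(\omega_0)$.
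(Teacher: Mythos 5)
Your proposal is correct and takes essentially the same route as the paper: both arguments use the uniform gap to produce a spectral-flow LGA intertwining $\omega_0$ and $\omega_1$, observe that it is $Q$-preserving because $H_\lambda$ (and $V$) are $Q$-preserving, use the quasi-free/translation-invariant structure of $\omega_0$ together with \cref{prop:Triviality of SRE} to get a stably SRE stacked state, and conclude via \cref{thm:main}(iii). The only cosmetic difference is that the paper additionally exhibits invertibility of $\omega_1$ directly from the identity $(\omega_1\hat\otimes\omega_0')\circ\bigl((\alpha^K_{0\to 1})^{-1}\hat\otimes\Id\bigr)=\omega_0\hat\otimes\omega_0'$, whereas you let \cref{thm:main}(iii) deliver both invertibility and equality of the indices at once; both are valid.
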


\noindent Note that a locally unique gapped ground state is necessarily pure, see~\cite[Theorem A.3]{Tasaki}.

\begin{proof}
Since $H_\lambda$ has a uniform gapped ground state, the spectral flow~\cite{AutomorphicEq}, generated by a $0$-chain $K_\lambda$, is such that $\omega_{\lambda}= \omega_0\circ \alpha_{0\to\lambda}^K$. Moreover, since $\omega_0$ is quasi-free, it admits an inverse $\omega'_0$ on $\cA'$. Now,
$$
(\omega_1 \hat \otimes \omega'_0)\circ((\alpha_{0\to1}^K)^{-1} \hat \otimes \Id) = \omega_0 \hat \otimes \omega'_0,
$$
and since $\omega_0 \hat \otimes \omega'_0$ is stably SRE, this shows that $\omega_1$ is invertible. Finally, since $H_\lambda$ is $Q$-preserving, $\alpha^K$ is $Q$-preserving. Thus $\omega_1$ is $Q$-equivalent to $\omega_0$ and so $\mathcal I(\omega_{1})=\mathcal I (\omega_0)$ by \cref{thm:main}(iii). 
\end{proof}

	\bigskip
	\bigskip
	\noindent\textbf{Acknowledgments.}
	We are indebted to Alex Bols, Christopher Bourne, Gian Michele Graf and Anna Mazhar for stimulating discussions. JS is indebted to the CEREMADE at Universit\'e Paris Dauphine - PSL, as well as the Institute for Theoretical Physics at ETH for hospitality in the summer of 2025. 
	\bigskip

\bigskip

\noindent\textbf{Financial Support.} JS is supported in part by NSF grant DMS-2510207. SB is supported by NSERC of Canada.

\bigskip

\bibliographystyle{plain}
\bibliography{refs}

\end{document}